\newcommand{\thickhline}{%
    \noalign {\ifnum 0=`}\fi \hrule height 1pt
    \futurelet \reserved@a \@xhline
}
\newcolumntype{"}{@{\hskip\tabcolsep\vrule width 1pt\hskip\tabcolsep}}
\title{Linear Extensions of Rotor-Routing in Directed Graphs: Reachability Problems}
\author{David Auger
\and
Pierre Coucheney
\and
Kossi Roland Etse
}
\email{david.auger@uvsq.fr}, \email{pierre.coucheney@uvsq.fr}, \email{kossi-roland.etse@uvsq.fr})}
\begin{document}

\maketitle

\begin{abstract}
We develop a unified framework for rotor-routing that extends the classical model to a broad class of multigraphs equipped with Generalized Rotor Mechanisms (GRM). This perspective places rotor-routing on the same footing as abelian sandpiles by interpreting both as conservative instances of Vector Addition Systems (VAS). Within this framework, routing becomes a linear transformation governed by arc mechanisms, while legality is enforced through non-negativity constraints.

We introduce four routing models —free routing, standard rotor-routing, cyclic GRM routing, and fully general GRM routing— and study their reachability problems in both the linear and legal settings. Our results generalize previous characterizations for standard rotor-routing and extend them to the GRM setting. In particular, we show that legal reachability in GRM multigraphs is NP-complete, whereas the cyclic GRM routing model, which includes the classical rotor-router, admits a polynomial-time algorithm.
\end{abstract}

\section{Introduction and related works}

\subsection{The rotor-routing model}

The \emph{rotor-routing model}, also known as \emph{rotor-router} or \emph{rotor-walk} (see \cite{holroyd_chip-firing_2008} for a comprehensive overview), was first introduced by Priezzhev in 1996 as Eulerian walkers \cite{priezzhev_eulerian_1996, povolotsky_dynamics_1998}. Independently, in 1996, Propp and Wilson proposed it for generating random spanning trees in graphs \cite{wilson1996get}, and it relates to Yanovski et al.'s patrolling algorithms \cite{yanovski_distributed_2003}. This model is closely associated with \emph{abelian sandpiles} (or \emph{chip-firing}) model \cite{bjorner_chip-firing_1991, holroyd_chip-firing_2008}. For general introductions to abelian sandpiles, see \cite{giacaglia_local--global_2011} and \cite{holroyd_chip-firing_2008}, and Section \ref{sub:vas} hereafter. 

In the basic rotor-routing model, a single particle (or pebble, chip) moves deterministically on the vertices of a graph. When the particle is at a vertex $v$, it follows a predetermined sequence of arcs: the first time it visits $v$, it takes arc 1; the next time, arc 2; and so on, cycling back to arc 1 after all arcs have been used. This deterministic movement is applied at every vertex. On average, each arc (or transition) is crossed with the same frequency as in a random walk.

This simple rule defines the rotor-routing model, which exhibits many intriguing properties. For example, in an eulerian and strongly connected graph, the particle will eventually circulate repeatedly on an eulerian tour \cite{priezzhev_eulerian_1996}. In sufficiently connected graphs,
the particle will ultimately reach target vertices, called sinks, although the exploration time can be exponential in the number of vertices. The problem of determining the first sink reached, given a starting arc configuration and an initial vertex, is known as the {\sc Arrival} problem. Defined in \cite{dohrau_arrival_2017}, it was shown to belong to the complexity class NP~$\cap$~co-NP. While it is not known to be in P, it was proved to lie in the smaller class UP~$\cap$~co-UP \cite{gartner_arrival_2018}. Additionally, a subexponential algorithm based on computing a Tarski fixed point was proposed in \cite{gartner_subexponential_2021}.

\subsection{Abelian sandpiles and rotor-routing as special Vector Addition Systems }
\label{sub:vas}

In this paper, we will show how, similarly to the abelian sandpiles model, we may envision the model of rotor-routing as special case of Vector Addition Systems (VAS) \cite{karp1969parallel}, which are for reachability issues equivalent to Petri Nets.

A VAS is defined by a finite set of {\it transitions} $T \subset \Z^d$  where $d \geq 1$. The {\it states} of the system are elements of $\N^d$. A transition $t \in T$ from a state $v \in \N^d$ is {\it legal}, if $v + t \geq 0$, i.e. $v + t$ is a state. This defines an elementary {\it legal transition} from $v$ to $v+t$, and the reachability problem consists of deciding the existence of a finite sequence of legal transitions $(t_i)$ from some state $v_0$ to another state $v_1$, i.e. $v_1 = v_0 + t_1 + t_2 + \cdots + t_k$ and every intermediate step $v_0 + t_1 + t_2 + \cdots + t_i$ for $i \leq k$ is nonnegative. 

As an elementary example of reachability in VAS, consider $d=2$ and 
\[T=\{(1,-1),(-1,2)\}.\] Then state $(1,1)$ can legally reach state $(2,1)$ by the sequence of transitions 
\[((-1,2),(1,-1),(1,-1)),\] but $(0,0)$ cannot reach legally $(1,0)$, since applying any transition in $(0,0)$ would violate the nonnegativity constraint.

Reachability in VAS thus consists of determining whether an algebraic, linear relation
(i.e. $v_1-v_0$ belongs to the subgroup of $\Z^n$ generated by $T$) can be decomposed as a sequence of legal transitions -- which is what we call a legal sequence in this work. The problem of deciding reachability has been an important question in the field of VAS, and it is known that reachability is decidable \cite{mayr1981algorithm}.

Now, let us turn our attention to abelian sandpiles, a model which is intrinsically linked to rotor-routing and is a special case of VAS. Consider a finite directed graph $G=(V,A)$. A state is an element of $\N^V$, and is called a particle (or chip) configuration. It is interpreted as the quantity of particles lying on every vertex of $G$. Transitions, here called {\it firings}, are defined for every vertex $v \in V$, and consist of adding one particle to every outneighbour of $v$, and removing those particles from $v$. This transition is called a {\it firing at v}, and such a firing is {\it legal} if the resulting configuration is nonnegative, which amounts to saying that before firing, $v$ must have more particles than its number of outneighbours. See Fig.~\ref{fig:firing} for an example of firing in an abelian sandpile graph. Note that this model is a special case of VAS, with conservative transitions (i.e. the total number of particles remain constant). It is known that
in the general case, legal reachability for abelian sandpiles is not likely to be in complexity class P \cite{tothmeresz_rotor-routing_2022}.

The definition of legal firings in abelian sandpiles has been extended by several authors \cite{ farrell_coeulerian_2016, tothmeresz_rotor-routing_2022} to particle configurations in $\Z^V$, i.e. not necessarily nonnegative. In this more general context, we require for a legal firing that  before firing at $v$, the configuration has enough particles on $v$ to remain nonnegative at $v$ after firing (i.e. more particles than its outdegree). During such a process, a vertex containing a negative number of particles, can only receive particles and not emit them. As was the case for VAS, the legal reachability issues in abelian sandpiles (even in this generalized context) involve determining whether an algebraic relation between configurations of particles can be decomposed into a legal sequence of firings, which amounts to checking non-negativity conditions. 

In this paper, we show that standard rotor-routing in graphs,
even in its generalized form with negative 
particles \cite{giacaglia_local--global_2011, tothmeresz_algorithmic_2018}, 
can be viewed as a special case of a more general model, 
namely Generalized Rotor Mechanisms (GRM) multigraphs,
where rotor-routing is a special case of conservative VAS, similar to abelian sandpiles. Legal reachability in this model also involves determining whether a linear relation between rotor and particle configurations can be decomposed into a series of legal transitions (routing steps). As in all previous cases, legality is determined by verifying non-negativity conditions.

\begin{figure}[!htbp]
    \centering
    \begin{subfigure}{0.45\textwidth}
        \centering
        \begin{tikzpicture}
            \node[draw] (a) at (0, 0) {5};
            \node[draw] (b) at (-2, 0) {0};
            \node[draw] (c) at (2, 0) {4};
            \node[draw] (d) at (0, 2) {1};

            \draw[->] (a) -- (b);
            \draw[->] (a) -- (c);
            \draw[->] (a) -- (d);

        \end{tikzpicture}
        \caption{Initial configuration.\newline}
    \end{subfigure}
    \hspace{0.05\textwidth}
    \begin{subfigure}{0.45\textwidth}
        \centering
        \begin{tikzpicture}
            \node[draw] (a) at (0, 0) {2};
            \node[draw] (b) at (-2, 0) {1};
            \node[draw] (c) at (2, 0) {5};
            \node[draw] (d) at (0, 2) {2};

            \draw[->] (a) -- (b);
            \draw[->] (a) -- (c);
            \draw[->] (a) -- (d);
        \end{tikzpicture}
        \caption{Configuration obtained after the central vertex  has fired.}
    \end{subfigure}
    \caption{Example of a transition in an abelian sandpile graph. The values  represent the number of particles at each vertex.} 
    \label{fig:firing}
\end{figure}
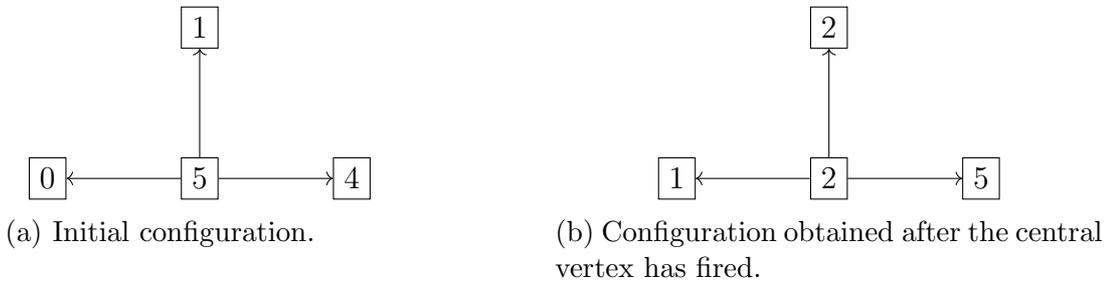

\subsection{Different notions of routing}

In this paper, we consider the movement of particles in a directed graph with vertex set $V$ and arc set $A$ (specifically, a multigraph). The term 'particle' can be replaced by pebble, chip, or counter, and has no physical significance. The positions of multiple particles are represented by a map $\sigma : V \rightarrow \N$, which counts the number of particles at each vertex. This is called a particle configuration. Notably, particles are indistinguishable and characterized solely by their positions.

In this paper, the term {\it routing} refers to the process of moving particles along the arcs of a graph according to specific rules that vary depending on the context. Mathematically, routing is defined as a rule that transforms one particle configuration  into another one. Elementary routing operations involve moving a single particle along an arc: the particle count at the head of the arc is incremented by 1, while the count at the tail is decremented by 1. If these rules can be applied to all particle configurations without additional constraints, the routing is termed {\it linear}. Conversely, if certain conditions (typically non-negativity constraints) must be satisfied, the routing is referred to as {\it legal}. 

In this work, we outline four distinct notions of routing, each with both a linear and a legal variant:

\begin{itemize}
    \item {\bf Standard rotor-routing in rotor multigraphs}: This refers to the classical concept of rotor-routing with standard cyclic rotor mechanisms. In this context, a configuration consists of both the particle configuration and the rotor configuration, where each non-sink vertex is associated with an outgoing arc. 

    \item {\bf Free routing in multigraphs}: 
    This is the most basic form of routing, where particles move through a multigraph without the involvement of rotor mechanisms or rotor configurations. While free routing is not rotor-routing per se, it serves as a foundational tool for  rotor-routing in GRM multigraphs.

    \item {\bf Rotor-routing in cyclic GRM multigraphs}: This type of routing occurs in a
    subclass of GRM multigraphs, where the arc mechanism operates cyclically, which simulates the case of standard rotor-routing. However, it diverges from standard rotor-routing by allowing 
     any formal sums of arcs instead of rotor configurations.

    \item {\bf Rotor-routing in GRM multigraphs}: In this more
    general context, mechanisms of rotors
    are no more limited to a cyclic behavior.
\end{itemize}

\subsection{Reachability problems}

The reachability problem we address in this paper is whether, given an initial and a final configuration, there exists a routing from the initial to the final configuration. This problem is examined across all the routing types we previously mentioned, for both the linear and legal cases, extending the result given in~\cite{tothmeresz_rotor-routing_2022} for standard rotor-routing. We also solve these problems in two scenarios: when the routing vector (i.e., the number of elementary operations at each vertex or arc) is specified, and when it is not. The results are summarized in the following table.

\begin{table}[h!]
  \centering
  \begin{tabular}{c c"c c}

    & & Routing vector not specified     & Routing vector specified\\
  \thickhline

  \multirow{2}{*}{Standard rotor} & Linear & \cite{tothmeresz_rotor-routing_2022}, Proposition~3.3  & $(*)$\\
  &  Legal & \cite{tothmeresz_rotor-routing_2022}, Theorem~3.4 & \thref{thm:charac_legal_rotor}   \\
  \hline

  \multirow{2}{*}{Free routing} & Linear & \thref{prop:image_bord}  & $(*)$ \\
   &  Legal & \thref{prop:flow_problem} &  \thref{prop:CNS_routing_vector}  \\
  \hline

  \multirow{2}{*}{Cyclic GRM} & Linear&\thref{lem:routing_vector}  &  $(*)$ \\
  &  Legal & \thref{thm:lr_cy_grm} &  \thref{thm:charac_legal_rotor}   \\
  \hline 
 
  \multirow{2}{*}{GRM} & Linear& \thref{lem:routing_vector} & $(*)$ \\
   &  Legal & \thref{thm:np_hard1} &  \thref{thm:routing_rotor_mecha} \\

  \end{tabular}
  \caption{This table presents the results concerning routing reachability problems and/or their computational complexity in the different cases. All of these problems are in P, except for legal rotor-routing in GRM multigraphs, which is NP-complete. Cases marked with a star $(*)$ correspond to computing the result of a linear operator using the specified routing vector, which can be efficiently done via matrix multiplication.
  }
  \label{tab:Summarize_table}
\end{table}

\subsection{Organization of the paper}

The paper is organized as follows. In Section~2, we introduce the standard rotor-routing framework and present significant results from prior research. Here, multiple particles are routed legally within a graph following standard rotor rules. This section is crucial as it establishes the notation used throughout the paper. While most of the standard rotor-routing results are not directly applied, they are generalized within the context of GRM multigraphs in Section~4.

In Section~3, we explore free routing and introduce the boundary operator, a key tool for  Section~4. The focus of this section is on characterizing legal free routings 
within the framework of free routing.

Section~4 formally defines rotor-routing in multigraphs with generalized rotor mechanisms (GRM), extending the standard 'cyclic' mechanism. Two generalizations are presented: the method for updating arcs in each rotor and the notion
of arc configurations that replace rotor configurations.

In Section~5, we characterize legal routings in GRM multigraphs when the routing vector is given as input. We then provide a simplified characterization for the cyclic case, which includes the standard rotor-routing model.

Finally, in Section~6, we address the reachability problem in cases where the routing vector is not specified a priori. We demonstrate that this problem is NP-complete in general, and present a polynomial-time algorithm for the cyclic case.

In a forthcoming paper, we will explore the algebraic properties of the GRM model, demonstrating how it enables a symmetric treatment of the rotor group and sandpile group, while also offering new insights into the {\sc Arrival} problem.

\newpage
\section*{Index of notations}

\begin{longtable}{lll}
\caption{Notation summary} \\
\hline
Symbol & Meaning & Reference \\
\hline
\endfirsthead

\hline
Symbol & Meaning & Reference \\
\hline
\endhead

\hline
\endfoot

\hline
\endlastfoot

$G$     & multigraph & \ref{sec:graphs} \\
$V$     & vertex set & \ref{sec:graphs} \\
$A$     & arc set & \ref{sec:graphs} \\
$S$     & set of sink vertices & \ref{sec:rotor_struct} \\

$F(v)$  & set/sum of faces in $A^+(v)$ & \ref{sec:linear_rotor_routing}, \ref{sec:toto} \\
$F$     & set of all faces & \ref{sec:linear_rotor_routing} \\

$\mathcal{W}$ & connected components of $G$ & \ref{sec:boundary_operator} \\

$A(V_1,V_2)$  & arcs with tail in $V_1$ and head in $V_2$ & \ref{sec:graphs} \\
$A^+(v)$ & $A(v,V)$ & \ref{sec:graphs} \\
$A^-(v)$ & $A(V,v)$ & \ref{sec:graphs} \\

$G^A(v)$ & multigraph $(A^+(v),F(v),\head^A,\tail^A)$ & \ref{sec:linear_rotor_routing} \\
$G^A$   & union of all $G^A(v)$ & \ref{sec:linear_rotor_routing} \\

$\head$, $\tail$ & maps $A \rightarrow V$ or $C_A \rightarrow C_V$ & \ref{sec:graphs},  \ref{sec:boundary_operator} \\
$\head^V, \tail^V, G^V, \deg^V$ & see $\head, \tail, G, \deg$ & \ref{sec:linear_rotor_routing} \\
$\head^A, \tail^A$ & maps $C_F \rightarrow C_A$ & \ref{sec:linear_rotor_routing} \\

$\deg$ & degree map $C_V \rightarrow C_{\mathcal{W}}$ & \ref{sec:boundary_operator} \\
$\deg^A$ & degree map for $G^A$ & \ref{sec:linear_rotor_routing} \\

$C_X$ & free group on $X$ & \ref{sec:free_abelian} \\
$C_X^+$ & nonnegative elements of $C_X$ & \ref{sec:free_abelian} \\

$B_V$ & boundaries (kernel of $\deg$) & \ref{sec:boundary_operator} \\
$Z_A$ & cycle space (kernel of $\partial$) & \ref{sec:boundary_operator} \\

$\Delta, L$ & Laplacian operator and matrix & \ref{sec:laplacian_matrix} \\

$\theta$ & circular ordering: $A^+(v) \to A^+(v)$ & \ref{sec:rotor_struct} \\

$\partial$ & boundary operator $C_A \rightarrow C_V$ & \ref{sec:boundary_operator} \\
$\partial^V$ & see $\partial$ & \ref{sec:boundary_operator} \\
$\partial^A$ & boundary operator $C_F \rightarrow C_A$ & \ref{sec:rotor_as_crs} \\

$\trans(\alpha,\sigma')$ & transitory vertices for $\sigma \lineareq{\partial}{\alpha} \sigma'$ & \ref{sec:transitory} \\

$\mathcal{L}$ & GRM routing operator & \ref{sec:toto} \\

$\sigma \lineareq{\partial}{\alpha} \sigma'$ & $\sigma' = \sigma + \partial(\alpha)$ & \ref{sec:legal_routing} \\
$\sigma \lineareq{\partial}{*} \sigma'$ & linear equivalence modulo $\partial$ & \ref{sec:legal_routing} \\
$\sigma \legalseq{\partial}{\alpha} \sigma'$ & legal sequence with routing vector $\alpha$ & \ref{sec:legal_routing} \\
$\sigma \legalseq{\partial}{*} \sigma'$ & existence of legal sequence & \ref{sec:legal_routing} \\

$(r,\sigma) \lineareq{\mathcal{L}}{\phi} (r',\sigma')$ & $(r',\sigma') = (r,\sigma) + \mathcal{L}(\phi)$ & \ref{sec:toto} \\
$(r,\sigma) \lineareq{\mathcal{L}}{*} (r',\sigma')$ & linear equivalence modulo $\mathcal{L}$ & \ref{sec:toto} \\
$(r,\sigma) \legalseq{\mathcal{L}}{\phi} (r',\sigma')$ & legal sequence with routing vector $\phi$ & \ref{sec:toto} \\
$(r,\sigma) \legalseq{\mathcal{L}}{*} (r',\sigma')$ & existence of legal sequence & \ref{sec:toto} \\

\end{longtable}

\begin{longtable}{lll}
\caption{Typical elements and notations} \\
\hline
Symbol & Meaning & Type / Space \\
\hline
\endfirsthead

\hline
Symbol & Meaning & Type / Space \\
\hline
\endhead

\hline
\endfoot

\hline
\endlastfoot

$\sigma$     & particle configuration & $C_V$ \\
$\sigma_v$   & coefficient of $\sigma$ & $\mathbb{Z}$ \\

$r$          & arc configuration & $C_A$ \\
$r_a$        & coefficient of $r$ & $\mathbb{Z}$ \\

$\alpha$     & routing vector modulo $\partial$ & $C_A$ \\
$\phi$       & routing vector modulo $\partial^A$ or $\mathcal{L}$ & $C_F$ \\

$\rho$       & rotor configuration & $C_A^+$ \\

$\mathcal{F}$  & directed / guiding forest & $\subset A$ \\

\end{longtable}

\section{Standard rotor-routing context and background}

In this section, we recall the framework of directed graphs and rotor-routing, together with some known results that we use or generalize in the rest of the paper. We call the model of rotor-routing that is presented here the {\it standard rotor-routing} context,
which is the model that can be found in most articles on the subject. For simplicity, we leave out the case of linear rotor-routing \cite{tothmeresz_algorithmic_2018} and focus here solely on the case where positive particles are routed, according to a so-called \emph{rotor configuration}.

\subsection{Graphs}
\label{sec:graphs}

\paragraph{Multigraphs.} 
A \textbf{directed multigraph} $G$ is a tuple $G=(V,A,\head,\tail)$
where $V$ and $A$ are respectively finite sets of \emph{vertices} and \emph{arcs}, and
{\it head} and {\it tail} are maps from $A$ to $V$ defining the
incidence between arcs and vertices. An arc with tail $x$ and head $y$ is said to be from $x$ to $y$.
Note that multigraphs can have multiple arcs with the same head and tail,  as well as loops. 

For two sets $V_1,V_2 \subset V$, define $A(V_1,V_2)$ as the set of arcs with tail in $V_1$ and head in $V_2$. For a vertex $u\in V$, we denote by $A^{+}(u)$ the subset of arcs going out of $u$, i.e. $A^{+}(u)=A({u},V)$, as well as $A^{-}(u)=A(V,{u})$. 

\paragraph{Paths and connectedness.}
If $x,y \in V$, a {\bf directed path from $x$ to $y$} is a finite sequence of 
arcs $a_1,a_2\cdots a_k$  such that $\head(a_i)=\tail(a_{i+1})$ for $1 \leq i \leq k-1$, and also $\tail(a_1)=x$ and $\head(a_k)=y$ (note that such a path is usually defined as a sequence of vertices, but both definitions are equivalent). This definition includes the empty sequence from $x$ to $x$. The graph is said to be {\bf strongly connected} if there is a directed path from any vertex to any other vertex. A {\bf directed cycle} is a nonempty directed path from a vertex $x$ to the same vertex $x$. This includes the case of a single arc (a loop) with the same tail and head.

The {\bf strongly connected components} of $G$
are the equivalence classes of the equivalence relation on vertices, where $x$ and $y$ are considered equivalent if there is a directed path from $x$ to $y$ and a directed path from $y$ to $x$ in $G$. A strongly connected component $C \subset V$ is a {\bf leaf component} if $A(C,V \setminus C) = \emptyset$. In particular, if $v$ is a {\bf sink vertex} of $G$, i.e. $A^+(v)=\emptyset$, then $\{v\}$ is a leaf component of $G$.

We define an {\bf undirected path from $x$ to $y$} as  a finite sequence of 
arcs $a_1,a_2\cdots a_k$  such that there is a sequence $v_0, v_1, \cdots, v_k$ of vertices with $v_0 =x$, $v_k=y$ and such that for all $1 \leq i \leq k$ we have
\begin{itemize}
    \item either $\tail(a_i) = v_{i-1}$ and $\head(a_i) = v_i$ ;
    \item or $\tail(a_i) = v_{i}$ and $\head(a_i) = v_{i-1}.$ 
\end{itemize}
In the first case we say that $a_i$ is  forward oriented in the path, and in the second case that it is backward oriented. Note that an undirected path corresponds to a path with the usual definition in the undirected graph where we replace every arc of $G$ by an undirected edge.

A {\bf weakly connected component} of $G$ is a maximal subset of vertices $V_1 \subset V$ such that there is an undirected path from any vertex of $V_1$ to any other vertex of $V_1$. The graph is said to be {\bf weakly connected} if there is only one connected component which is $V$ itself.

\paragraph{Directed Forests.}
If $X \subset V$, a {\bf directed forest} with {\bf domain} $X$ is a set of arcs ${\cal F} \subset A(X,V)$ such that:
\begin{enumerate}[(i)]
    \item for every $v \in X$, there is exactly one arc $a \in {\cal F}$ such that $\tail(a) = v$ ;
    \item ${\cal F}$ contains no directed cycles.
\end{enumerate}

Note that such a directed forest exists if and only if
for every vertex $v \in X$, there is a directed path from $v$ to some $s \notin X$.

\subsection{Free abelian groups}

\label{sec:free_abelian}

We refer to \cite{lang2012algebra} for standard notions of abelian groups. We use additive notation. An abelian group $(H,+)$ is {\bf free} if it admits a {\bf basis}, i.e. a family $(h_i)_{i \in I}$ of elements such that each $h \in H$ can be written uniquely as a sum $h = \sum_{i \in I} c_i h_i$ with integers $c_i \in \Z$, where all but finitely many of the $c_i$ are zero. If there is a finite basis, then all basis have the same cardinal which is called the {\bf rank} of $H$.

The {\bf universal property of free abelian groups} says that if $f : X \rightarrow R$ is a map from the elements of a basis $X$ of $H$ to an abelian group $R$, then $f$ extends uniquely to a group homomorphism from $H$ to $R$. Indeed, if $(h_i)_{i \in I}$ is
a basis of $H$, and $f(h_i)=r_i$, we define an homomorphism by
\[ f\left( \sum_{i \in I} c_i h_i \right) = \sum_{i \in I} c_i r_i,\]
where all but finitely many $c_i$ are zero.

If $X$ is a finite set, the {\bf free abelian group} on $X$ is the set of formal sums with integer coefficients
$$c = \sum_{x \in X} c_x \cdot x$$
where $c_x \in \Z$ for every $x \in X$, with pointwise sum of coefficients. It can also be viewed as $\Z^X$, the set of maps from $X$ to $\Z$ together with standard pointwise sum. We denote this group as $C_X$.
Elements of $X$ are identified with particular elements of $C_X$, and $X$ can be viewed as a basis of $C_X$, which is called the canonical basis. If $x \in X$, we write $c_x$ for the coordinates in the canonical basis of some $c \in C_X$. 

We say that $c \in C_X$ is nonnegative, and write $c \geq 0$, if $c_x \geq 0$ for all $x \in X$. Likewise, we say that $c_1 \leq c_2$ if $c_2 - c_1 \geq 0$.
Let $C^+_X$ the set of $c \in C_X$ such that $c \geq 0$. If $c \in C^+_X$ and $c_x >0$, we say that {\bf $x$ is an element of $c$} and write $x \in c$. Indeed, $c \in C^+_X$ can be identified with a multiset on $X$.

If $G=(V,A,\head,\tail)$ is a directed multigraph, we denote by  $C_V$ the free group on $V$; its elements are called {\bf particle configurations}. 
Likewise, $C_A$ denotes the free group on $A$ and its elements are {\bf arc configurations}. In the whole paper, instead of considering vectors or sets like $\Z^V$ and $\Z^A$, we consider formal sums $C_V$ and $C_A$. This allows more concise notation: for instance, the element of $C_V$ with coefficient $3$ on $v_1$ and $-5$ on $v_2$ and $0$ elsewhere will be simply denoted by $3v_1-5v_2$.

\subsection{Laplacian homomorphism and matrix}
\label{sec:laplacian_matrix}

Let $G=(V,A,\head,\tail)$ be a directed multigraph. The Laplacian homomorphism is the homomorphism $\Delta$ from $C_V$ to itself whose value on every vertex $v \in V$ (viewed as an element of $C_V$) is

   $$\Delta(v) =  -\sum_{a \in A^+(v)} (\head(a) - \tail(a))$$

The {\bf Laplacian matrix} $L$ is the matrix of $\Delta$ in the canonical basis of $C_V$. Its rows and columns are indexed by $V$, and its entries are given by
\[
L_{v',v} =
\begin{cases}
|A(v,V\setminus v)| & \text{if } v' = v, \\[1ex]
-|A(v,v')| & \text{if } v' \neq v.
\end{cases}
\]
 
Let $S \subset V$. The {\bf $S$-reduced Laplacian matrix} is the matrix $L'$ obtained by removing from $L$ the rows and columns corresponding to $S$. 
The following result is classic for undirected graphs and known as \emph{Kirchoff's matrix-tree theorem}, but the directed version is a bit less common (see~\cite{pitman_tree_2018} for a proof). 

\begin{theorem}
    Let $S \subset V$ and let $L'$ be the $S$-reduced Laplacian matrix. Then the determinant of $L'$ is the number of directed forests with domain $V\setminus S$.
\end{theorem}

Of particular interest to us is also the kernel (null space) of $\Delta$, which is given by the following result whose proof can be found in~\cite{bjorner_chip-firing_1992}.

\begin{theorem} \thlabel{thm:kerdelta}
Let $k \geq 0$ be the number of leaf components of $G$.
Denote by $S_1, S_2, \dots S_k$ these components.
Then there exist $\sigma^1, \sigma^2, \dots, \sigma^k \in C_V$ such that
\begin{enumerate}[(i)]
    \item $(\sigma^1, \sigma^2, \dots, \sigma^k)$
    is a basis of the kernel of $\Delta$;
    \item for all $1 \leq i \leq k$, $\sigma^i_v >0$
    for all $v \in S_i$ and  $\sigma^i_v =0$
    for all $v \in V \setminus S_i$.
\end{enumerate}
This basis $(\sigma^1, \sigma^2, \dots, \sigma^k)$
is unique and its elements are called {\bf primitive period vectors} of $G$.
\end{theorem}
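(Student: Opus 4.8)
The plan is to treat each leaf component separately, reducing to the strongly connected case, and then use the directed matrix-tree theorem already recalled to show that nothing outside the leaf components contributes to the kernel. First I would observe that if $S_i$ is a leaf component, then the sub-multigraph $G_i$ it induces (all arcs of $G$ with tail in $S_i$, which by definition of a leaf component have their head in $S_i$) is strongly connected, being an equivalence class of the strong-connectivity relation; moreover $\Delta_G(v)=\Delta_{G_i}(v)$ for every $v\in S_i$, so any element of $C_V$ supported on $S_i$ lies in $\ker\Delta_G$ if and only if its restriction to $S_i$ lies in $\ker\Delta_{G_i}$.

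Next I would establish the strongly connected case: for a strongly connected multigraph $H$, the space $\ker\Delta_H$ is one-dimensional and spanned by a strictly positive rational vector. Writing the Laplacian of $H$ as $B-D$ with $B_{vw}=|A(v,w)|$ and $D=\mathrm{diag}(d^+(v))$, every vertex has positive out-degree (unless $H$ is a single vertex carrying only loops, in which case $\Delta_H=0$ and the claim is trivial), so $D$ is invertible and the equation $\sigma^{T}(B-D)=0$ becomes, after the substitution $\nu_v=\sigma_v\,d^+(v)$, the stationarity equation $\nu^{T}(D^{-1}B)=\nu^{T}$ for the irreducible row-stochastic matrix $D^{-1}B$; Perron-Frobenius then yields a one-dimensional solution space spanned by a positive $\nu$, hence by a positive $\sigma$. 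Applying this to each $G_i$, let $\sigma^i\in C_V$ be the unique \emph{primitive} positive integer vector in the line $\ker\Delta_{G_i}$, extended by $0$ outside $S_i$. Then $\sigma^i\in\ker\Delta_G$ and property (ii) holds by construction, and the $\sigma^i$ are $\Q$-linearly independent because their supports $S_i$ are nonempty and pairwise disjoint.

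The remaining point is that $\dim_\Q\ker\Delta_G\le k$, which I would get from the matrix-tree theorem: choosing one vertex $s_i$ in each $S_i$ and setting $S=\{s_1,\dots,s_k\}$, every vertex not in $S$ has a directed path to $S$ (follow out-arcs in the condensation DAG from the strong component of $v$ until a leaf component $S_j$ is reached, then move inside $S_j$ to $s_j$), so an arborescence rooted in $S$ exists, the reduced Laplacian $L'$ obtained by deleting the rows and columns indexed by $S$ is nonsingular, and hence $\mathrm{rank}_\Q L\ge |V|-k$. Thus $(\sigma^1,\dots,\sigma^k)$ is a $\Q$-basis of $\ker\Delta_G\otimes\Q$, and it is in fact a $\Z$-basis of $\ker\Delta_G$: for $\sigma\in\ker\Delta_G$, writing $\sigma=\sum_i q_i\sigma^i$ with $q_i\in\Q$ and restricting to $S_i$ gives $\sigma_v=q_i\sigma^i_v$ for all $v\in S_i$, so primitivity of $\sigma^i$ (i.e. $\gcd_{v\in S_i}\sigma^i_v=1$) forces $q_i\in\Z$. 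Uniqueness is then formal: another basis $(\tau^i)$ satisfying (i)--(ii) is obtained from $(\sigma^i)$ by a matrix in $\mathrm{GL}_k(\Z)$, which property (ii) forces to be diagonal (since $\tau^i$ vanishes on $S_j$ for $j\ne i$ while $\sigma^j$ does not) with diagonal entries $\pm1$, and positivity on $S_i$ forces each entry to equal $1$.

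I expect the main obstacle to be the integrality statement — obtaining a $\Z$-basis and not merely a $\Q$-basis — which is precisely why the $\sigma^i$ must be chosen primitive; the surrounding ingredients (Perron-Frobenius on each leaf component, and the directed matrix-tree theorem to pin down $\mathrm{rank}\,L$) are standard, but some care is needed with the degenerate single-vertex leaf components and with the precise sign and orientation conventions in the matrix-tree theorem.
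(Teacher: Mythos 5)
The paper does not prove this statement at all: it is quoted as a known result with a pointer to Bj\"orner--Lov\'asz \cite{bjorner_chip-firing_1992}, so there is no in-paper argument to compare against. Your proof is correct and self-contained, and it is a reasonable route: reduce to the leaf components (which are exactly the sinks of the condensation, so the induced subgraphs are strongly connected and $\Delta_G$ restricted to a leaf component agrees with the Laplacian of the induced subgraph), get a one-dimensional positive kernel on each via Perron--Frobenius applied to the row-stochastic matrix $D^{-1}B$, and cap the total kernel dimension at $k$ by invoking the directed matrix-tree theorem that the paper has already stated, since an arborescence rooted in a transversal of the leaf components always exists. The passage from a $\Q$-basis to a $\Z$-basis via primitivity of each $\sigma^i$, and the uniqueness argument via a diagonal matrix in $\mathrm{GL}_k(\Z)$ with positive entries, are both sound. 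What this buys over the citation is an elementary argument using only tools the paper already introduces. Two very minor points of hygiene: your escape clause for the degenerate single-vertex leaf component should explicitly include the case of a genuine sink (no out-arcs at all, so $D$ is the $1\times1$ zero matrix rather than invertible; $\Delta_H=0$ still makes the claim trivial there); and you should make explicit, as you implicitly use, that a path between two vertices of the same strongly connected component stays inside that component, so the induced subgraph on $S_i$ really is strongly connected.
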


In particular, if $G$ is strongly connected, it admits
a unique primitive period vector and the kernel of $\Delta$
has rank $1$. Moreover, if $G$ is {\bf eulerian}, i.e. $|A^+(v)|=|A^-(v)|$ for every $v \in V$, then the only primitive vector is $\sum_{v \in V} v$, i.e. its coordinates in the canonical basis are all $1$. This is the case for the graph $G_1$ of Fig. \ref{fig:exempleG1}.

\subsection{Rotor structure}
\label{sec:rotor_struct}

Let $G=(V,A,\head,\tail)$ be a directed multigraph.

\paragraph{Rotor orders and rotor graphs.} A {\bf circular ordering} on a finite set $X$ is a map $\theta : X \rightarrow X$ 
such that, for all $x \in X$, the sequence of iterates $x, \theta(x), \theta^2(x), \dots$ generates the whole set $X$. 

If $G$ is a multigraph, a {\bf rotor} at $v \in V$ is a circular ordering on $A^+(v)$. A {\bf rotor multigraph} $G=(V,A,\head,\tail,\theta)$ is such that:

\begin{itemize}
    \item  $(V, A, \head, \tail)$  is a   multigraph;
    \item for all vertices $v \in V$, the restriction of $\theta$ to $A^+(v)$ is a rotor at $v$.
\end{itemize}

If $v$ is a {\bf sink} vertex, i.e. $A^+(v)=\emptyset$, then
the second condition is trivial. When the rotor multigraph $G$ 
is fixed, we denote by $S$ the set of its sinks. Depending on the context, $S$ can be empty or not. A directed multigraph is \textbf{stopping} if for every vertex $u$, there is a directed path from $u$ to a sink.

A {\bf rotor configuration} in a rotor multigraph is a map $\rho$ that associates
to every $v \in V \setminus S$ an arc $\rho(v) \in A^+(v)$. 
We can identify rotor configurations with particular elements of $C_A$, namely $\sum_{v \in V \setminus S} \rho(v)$.



Two examples of rotor multigraphs are given on Fig. \ref{fig:exempleG1} and \ref{fig:exempleG2}. The first one is strongly connected and sinkless, while the second is stopping.

\begin{figure}[!htbp]
   \centering
\begin{tikzpicture}[->, >=stealth, node distance=3cm]

        \draw [ thick,->,>=stealth](25:5) arc (0:330:0.4cm);

        \node[draw] (v2) at (330:3) {$v_2$};
        \node[draw] (v0) at (90:3) {$v_0$};
        \node[draw] (v1) at (210:3) {$v_1$};

        \draw[bend right=40] (v2) to node[pos=.3, left=.1mm] {$a'_{2,0}$} (v0);
        \draw[bend right=60] (v2) to node[pos=.3, right=.1cm] {$a_{2,0}$} (v0);
        \draw[bend right=15] (v0) to node[pos=.2, right] {$a_{0,2}$} (v2);
        \draw[line width=1.5pt, bend right=60] (v0) to node[pos=.3, left] {$a_{0,1}$} (v1);
        \draw[bend right=40] (v0) to node[pos=.3, right] {$a'_{0,1}$} (v1);
        \draw[bend right=15] (v1) to node[pos=.3, left] {$a_{1,0}$} (v0);
        \draw[bend right=60] (v1) to node[pos=.3, below] {$a_{1,2}$} (v2);
        \draw[line width=1.5pt, bend right=40] (v1) to node[pos=.3, above] {$a'_{1,2}$} (v2);
        \draw[line width=1.5pt, bend right=15] (v2) to node[pos=.3, below] {$a_{2,1}$} (v1);

    \end{tikzpicture}

        \caption{A rotor multigraph $G_1$ with no sinks, which is strongly connected. Every vertex has out-degree $3$ and in-degree $3$. As an example, we have $\head(a_{2,0})=v_0$ and $\tail(a_{2,0})=v_2$. The rotor order at every vertex is given
        by anticlockwise ordering; e.g. $\theta(a_{2,0})=a'_{2,0}$,
        $\theta(a'_{2,0})=a_{2,1}$ and $\theta(a_{2,1})=a_{2,0}$.
        A rotor configuration $\rho_1$ with $\rho_1(v_0)=a_{0,1}$,
        $\rho_1(v_1)=a'_{1,2}$ and $\rho_1(v_2)=a_{2,1}$ is depicted in bold.}
        \label{fig:exempleG1}
\end{figure}
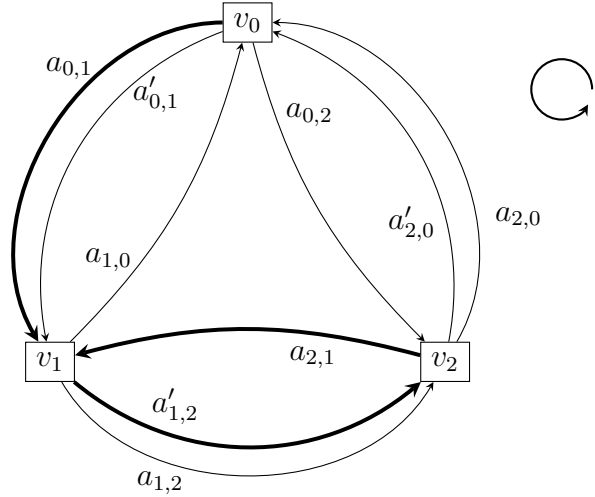

\begin{figure}[!htbp]
        \centering

\begin{tikzpicture}[->, >=stealth, node distance=2cm]

        \draw [ thick,->,>=stealth](25:4) arc (0:330:0.4cm);

        \node[draw] (v2) at (330:3) {$v_2$};
        \node[draw] (s1) at (310:6) {$s_1$};
        \node[draw] (v3) at (90:3) {$v_3$};
        \node[draw] (v4) at (210:3) {$v_4$};
        \node[draw] (s0) at (230:6) {$s_0$};

        \draw[bend right=15] (v2) -- (s0) node[midway, below] {$a_{2,0}$};
        \draw (v2) -- (s1) node[midway, below] {$a_{2,1}$};
        \draw (v4) -- (s0) node[midway, right] {$a_{4,0}$};
        \draw[bend right=15] (v4) -- (s1) node[midway, right] {$a_{4,1}$};
        \draw[bend right=15] (v2) to node[midway, right] {$a_{2,3}$} (v3);
        \draw[bend right=15] (v3) to node[midway, below] {$a_{3,2}$} (v2);
        \draw[bend right=15] (v3) to node[midway, left] {$a_{3,4}$} (v4);
        \draw[bend right=15] (v4) to node[midway, right] {$a_{4,3}$} (v3);
        \draw[bend right=15] (v4) to node[midway, below] {$a_{4,2}$} (v2);
        \draw[bend right=15] (v2) to node[midway, above] {$a_{2,4}$} (v4);
    
    \end{tikzpicture}

    \caption{A stopping rotor multigraph $G_2$, with two sinks $s_0$ and $s_1$.}
    \label{fig:exempleG2}
\end{figure}
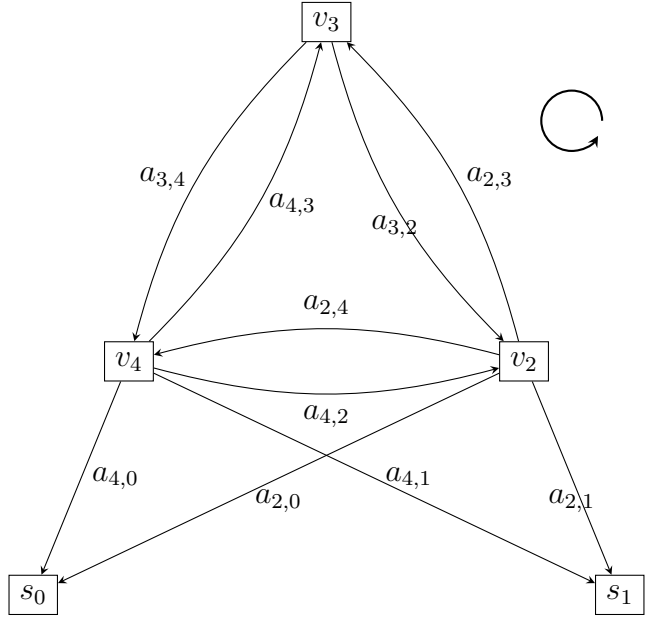

\subsection{Standard rotor-routing}
\label{sec:standard}

Let $G=(V,A,\head,\tail,\theta)$ be a rotor multigraph.
Classically, rotor-routing is concerned with so-called {\it chip configurations}, namely {nonnegative particle configurations} with our current terminology. If $\sigma \in C_V$ and $\sigma_v>0$, we interpret $\sigma_v$ as the number of particles on vertex $v$. 
Routing these particles consists of moving them along an arc of a rotor configuration. 

\paragraph{Rotor-routing operation and rotor walks.}

Consider a rotor configuration $\rho$ and a nonnegative particle configuration $\sigma \in C^+_V$. A rotor-routing operation at $v \in V \setminus S$ is valid, only if $\sigma_v>0$. In this case, the routing operation transforms $(\rho,\sigma)$ into $(\rho',\sigma')$ where:
\begin{itemize}
\item  $\rho'$ is equal to $\rho$ except on $v$ where $\rho'(v) = \theta(\rho(v))$;
    \item $\sigma'$ is equal to $\sigma$ except $\sigma'_v=\sigma_v-1$ and $\sigma'_{v_1} = \sigma_{v_1}+1$ where $v_1=\head(\rho(v))$.
\end{itemize}

We interpret this as a particle moving along the arc $\rho(v)$ from $v$ to $v_1$, then the rotor configuration at $v$ being updated to the next arc in the rotor ordering. Note that the resulting particle configuration $\sigma'$ is also nonnegative. A {\bf rotor walk} is a finite or infinite sequence of configurations $(\rho_0,\sigma_0), (\rho_1,\sigma_1), (\rho_2,\sigma_2), \dots$ such that each new couple of configurations is obtained from the previous one by a routing operation. This sequence \emph{starts} at $(\rho_0,\sigma_0)$, and if finite \emph{ends} at some $(\rho_k,\sigma_k)$. Such a rotor walk is \emph{maximal} if it is infinite or ends in a configuration $(\rho_k,\sigma_k)$ where no valid routing operation can be applied, i.e. $\sigma_k$ is $0$ on $v$ for all $v \in V \setminus S$. Fig.  \ref{fig:routage_simple} shows an example of routing.

If there is a single particle (i.e. $\sigma=v$ for some vertex $v \in V$), then there is a unique maximal rotor walk starting in $(\rho,\sigma)$. If there are several particles, there is a choice in the next particle to be routed. The first fundamental result is the following~\cite{holroyd_chip-firing_2008, tothmeresz_algorithmic_2018}:

\begin{theorem}
\thlabel{thm:commute_rotor_routing}
If $G$ is stopping, all maximal rotor walks are finite. Moreover, from a starting configuration $(\rho,\sigma)$ where $\rho$ is a rotor configuration and $\sigma \geq 0$, all maximal rotor walks end in the same configuration $(\rho',\sigma')$, and for every vertex $v \in V \setminus S$, the number of times that a routing operation is performed at $v$ does not depend on the choice of the maximal rotor walk.

If $G$ is strongly connected, all maximal  rotor walks are infinite. 
    Moreover, in the case of a single particle configuration $\sigma$,
    let $p \in C_V$ be the unique primitive period vector
    of $G$. Then all maximal rotor walks that start in $(\rho,\sigma)$ are ultimately periodic, and during a least period the number of times that a  routing is performed at a vertex $v$ is equal to $|A^+(v)| \cdot p(v)$.
    
\end{theorem}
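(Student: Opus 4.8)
The plan is to prove the two parts separately, since they correspond to the two fundamentally different behaviors of a rotor multigraph: the stopping case where routing terminates, and the strongly connected case where it runs forever.

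\textbf{Part 1 (stopping case).} First I would show termination: in a stopping graph every vertex has a directed path to a sink, so there is a uniform bound $N$ on the length of any rotor-free directed path. The key observation is that if a vertex $v$ is routed more than $|A^+(v)|$ consecutive times (counting only routings at $v$), the rotor at $v$ has made a full cycle, so every out-arc of $v$ has been used at least once; in particular a particle has been pushed along some arc towards the sink-ward direction. Quantifying this, one bounds the total number of routings by a function of the initial number of particles and the graph size — the standard argument is that a particle ``following the rotors'' must reach a sink within a bounded number of steps, and each routing operation at a non-sink either moves a particle closer (in the arborescence metric) to a sink or advances a rotor. The cleanest route is: assume for contradiction an infinite rotor walk; then some vertex $v$ is routed infinitely often, hence (by rotor cyclicity) the arc into one of its out-neighbours is used infinitely often, hence that out-neighbour is routed infinitely often; iterating along a path to a sink, we find the sink ``routed'' infinitely often, impossible since sinks are never routed and only absorb particles while the particle count is finite and non-increasing off the sinks — contradiction. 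Then for the confluence statement (same final configuration, same routing counts), I would invoke the abelian property via a standard diamond/local-confluence lemma: if two distinct routing operations are available at $(\rho,\sigma)$, performing them in either order yields the same configuration (easy case check, noting the operations at distinct vertices commute, and two operations ``at the same vertex'' is really one operation followed by the next, handled by a Newman's-lemma-style argument). Combined with termination, Newman's lemma gives global confluence, and the invariance of the per-vertex routing count follows because the multiset of all routing operations performed is an invariant of the terminal configuration (each routing at $v$ is detectable from how much the rotor at $v$ has advanced plus how many full turns it made, which is recoverable from particle flow).

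\textbf{Part 2 (strongly connected case).} Here no maximal walk can be finite: a finite maximal walk would end with $\sigma'$ supported only on sinks, but a strongly connected graph with at least one arc has no sinks (unless it is a single vertex with no arcs, which we exclude as it carries no rotor dynamics), so $V \setminus S = V$ and $\sigma' = 0$, contradicting conservation of the total particle number (which is positive). Hence all maximal walks are infinite. For the single-particle periodicity: the state space of pairs $(\rho, v)$ with $\rho$ a rotor configuration and $v$ the location of the single particle is finite, and the routing operation is a deterministic function on this finite set, so the orbit is eventually periodic. To identify the least period's routing counts, I would use the fact established in \thref{thm:kerdelta}: during one period, the net particle movement is zero, so the ``displacement vector'' (net flow) is a circulation, and summing the rotor advances, each vertex $v$'s rotor returns to its starting arc, meaning $v$ is routed a multiple of $|A^+(v)|$ times, say $|A^+(v)| \cdot m_v$. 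The flow pushed out of $v$ during the period is then exactly $|A^+(v)| \cdot m_v$ distributed one-per-arc-per-full-turn, i.e. $m_v$ on each out-arc of $v$; balancing in-flow and out-flow at every vertex shows $m := \sum m_v v$ lies in the kernel of $\Delta$, and minimality of the period forces $m$ to be the primitive period vector $p$. The argument that $m \in \ker \Delta$ is the crux: one shows the out-flow from $v$ is $|A^+(v)| m_v$ and the in-flow to $v$ is $\sum_{a \in A^-(v)} (\text{number of full turns of the rotor at } \tail(a)) = \sum_{w} (\text{multiplicity of arcs } w \to v) \cdot m_w$, which is exactly $\Delta$ acting on $m$ being zero after accounting for the single particle returning to its start.

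\textbf{Main obstacle.} The hard part will be the precise bookkeeping in Part 1's confluence claim — specifically making rigorous that the per-vertex routing count is a function of the terminal configuration alone, which requires the abelian property, and ensuring Newman's lemma applies (one must verify local confluence including the subtle case where the two operations happen ``at the same vertex'', which is not literally two available operations but is handled by observing that from $(\rho,\sigma)$ with $\sigma_v \ge 2$, routing twice at $v$ commutes appropriately with routing once at another available vertex $w$). In Part 2, the delicate point is the kernel computation: one must carefully separate the net particle displacement (zero over a period, or a loop returning the particle home) from the rotor bookkeeping, and argue minimality of the period coincides with primitivity of the period vector using the uniqueness clause of \thref{thm:kerdelta}. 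Both of these are ``morally clear but fiddly,'' and I would lean on the cited references \cite{holroyd_chip-firing_2008, tothmeresz_algorithmic_2018} for the abelian property to keep the proof self-contained only where it adds value.
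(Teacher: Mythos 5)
The paper does not actually prove this theorem: it is quoted from the literature (\cite{holroyd_chip-firing_2008, tothmeresz_algorithmic_2018}), so your proposal has to be measured against the standard proofs there. Your Part 1 is a faithful reconstruction of those: the infinite-descent argument for termination (a vertex routed infinitely often forces, via rotor cyclicity and conservation of the total particle number, every vertex on a path to a sink to be routed infinitely often, overloading the sink) and the one-step diamond property plus termination for confluence and for the invariance of per-vertex counts are exactly the classical abelian-property argument. One caveat: your parenthetical claim that the routing count at $v$ is ``recoverable from the terminal configuration'' via the number of full turns is not correct as stated --- recovering the run from the terminal data is essentially the \textsc{Arrival} problem, and distinct flows can certify the same terminal configuration (cf.\ \thref{prop:certif_flow} and \thref{thm:run_carac}). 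Fortunately this remark is not load-bearing; the permutation-of-firing-sequences induction from the diamond lemma is what actually gives the count invariance.

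Part 2 has a genuine gap at the last step. Your argument correctly shows that over a least period each vertex $v$ is routed $|A^+(v)|\,m_v$ times and that $m=\sum_v m_v\, v \in \ker\Delta$, hence $m = k\,p$ for some integer $k \ge 1$. But ``minimality of the period forces $k=1$'' does not follow from anything you have established: minimality is a statement about the deterministic orbit, while primitivity of $p$ is a statement about the lattice $\ker\Delta$. What you would need is a single time $t$ inside the period at which \emph{every} vertex has simultaneously completed exactly $p_v$ full turns (at such a time the rotors and the chip position both coincide with the initial state, since $\Delta(p)=0$, giving an earlier return); nothing in your sketch rules out that one vertex overshoots its quota of $|A^+(v)|\,p_v$ routings before another reaches its own. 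Closing this gap is the substantive content of the theorem in \cite{holroyd_chip-firing_2008}: one characterizes the recurrent single-particle states as unicycles (the single-chip case of \thref{thm:recurrent}), shows the rotor walk acts bijectively on them, and then proves --- by a separate induction, not by the kernel computation --- that the walk from a unicycle first returns to it after traversing each arc $a$ exactly $p_{\tail(a)}$ times. Without some version of that argument, your proof only bounds the period counts from below by a multiple of $|A^+(v)|\,p_v$ rather than pinning them down.
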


A more general result describing the asymptotic structure
of maximal rotor walks, when the graph is neither stopping nor strongly connected,
can be found in~\cite{duhaze2023reachability} (Thm. 2.5.10).

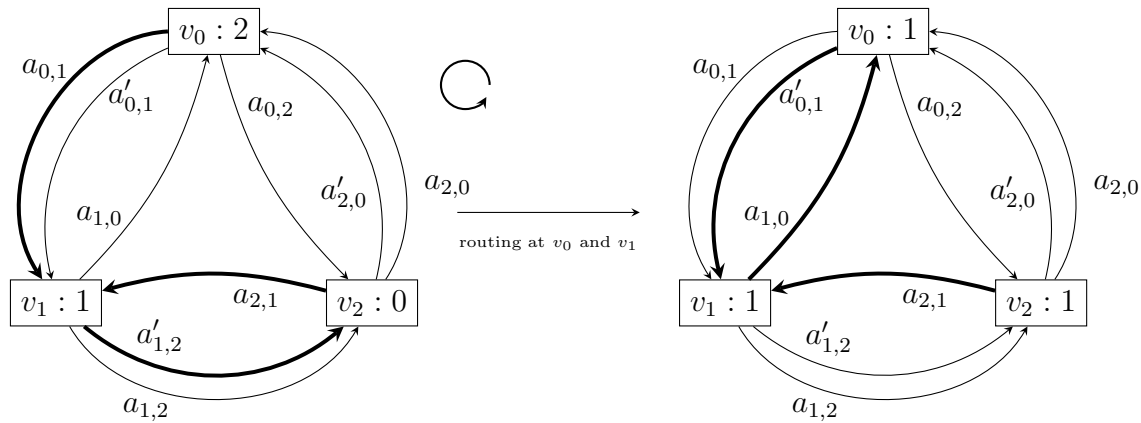
\begin{figure}[!htbp]
   \centering
\begin{tikzpicture}[->, >=stealth, scale=0.8]


        \draw [ thick,->,>=stealth](25:5) arc (0:330:0.4cm);

        \node[draw] (v2) at (330:3) {$v_2 : 0$};
        \node[draw] (v0) at (90:3) {$v_0 : 2$};
        \node[draw] (v1) at (210:3) {$v_1 : 1$};

        \draw[bend right=40] (v2) to node[pos=.3, left=.1mm] {$a'_{2,0}$} (v0);
        \draw[bend right=60] (v2) to node[pos=.3, right=.1cm] {$a_{2,0}$} (v0);
        \draw[bend right=15] (v0) to node[pos=.2, right] {$a_{0,2}$} (v2);
        \draw[line width=1.5pt, bend right=60] (v0) to node[pos=.3, left] {$a_{0,1}$} (v1);
        \draw[bend right=40] (v0) to node[pos=.3, right] {$a'_{0,1}$} (v1);
        \draw[bend right=15] (v1) to node[pos=.3, left] {$a_{1,0}$} (v0);
        \draw[bend right=60] (v1) to node[pos=.3, below] {$a_{1,2}$} (v2);
        \draw[line width=1.5pt, bend right=40] (v1) to node[pos=.3, above] {$a'_{1,2}$} (v2);
        \draw[line width=1.5pt, bend right=15] (v2) to node[pos=.3, below] {$a_{2,1}$} (v1);

    \draw[->] (4,0) -- (7,0);
    \node at (5.5,-.5) {\tiny routing at $v_0$ and $v_1$};

    \begin{scope}[shift={(11,0)}]

        \node[draw] (v2) at (330:3) {$v_2 : 1$};
        \node[draw] (v0) at (90:3) {$v_0 : 1$};
        \node[draw] (v1) at (210:3) {$v_1 : 1$};

        \draw[bend right=40] (v2) to node[pos=.3, left=.1mm] {$a'_{2,0}$} (v0);
        \draw[bend right=60] (v2) to node[pos=.3, right=.1cm] {$a_{2,0}$} (v0);
        \draw[bend right=15] (v0) to node[pos=.2, right] {$a_{0,2}$} (v2);
        \draw[bend right=60] (v0) to node[pos=.3, left] {$a_{0,1}$} (v1);
        \draw[line width=1.5pt, bend right=40] (v0) to node[pos=.3, right] {$a'_{0,1}$} (v1);
        \draw[line width=1.5pt, bend right=15] (v1) to node[pos=.3, left] {$a_{1,0}$} (v0);
        \draw[bend right=60] (v1) to node[pos=.3, below] {$a_{1,2}$} (v2);
        \draw[bend right=40] (v1) to node[pos=.3, above] {$a'_{1,2}$} (v2);
        \draw[line width=1.5pt, bend right=15] (v2) to node[pos=.3, below] {$a_{2,1}$} (v1);
        \end{scope}
    
    \end{tikzpicture}

        \caption{With the same graph $G_1$ as in Fig \ref{fig:exempleG1}: on the left, a rotor configuration $\rho_1$
        (arcs in bold) and a particle configuration $\sigma_0$ (numbers in vertices) are given.
        The rotor walk $(\rho_1,\sigma_1), (\rho_2,\sigma_2), (\rho_3,\sigma_3)$ is legal,
        and consists of routing a particle in $v_0$
        and a particle in $v_1$. The resulting configurations
        $(\rho_3,\sigma_3)$ are given on the right.
        }
        \label{fig:routage_simple}
\end{figure}

\paragraph{{\sc Arrival} Problem.}

Suppose that $G$ is a stopping rotor multigraph.
The {\sc Arrival} Problem, introduced in~\cite{dohrau_arrival_2017}, consists of determining the final configuration $\sigma'$ when applying a maximal rotor walk from $(\sigma, \rho)$, where $\sigma \in C^+_V$ and $\rho$ is a rotor configuration. The configuration $\sigma'$ is then uniquely determined by  \thref{thm:commute_rotor_routing}.
Equivalently, one can ask how many particles will settle on a given sink or use a decision version of the problem.

No polynomial-time algorithm is currently known for solving the {\sc Arrival} problem, even when  $\sigma$ consists of a single particle. In particular, routing such a particle may require an exponential number of steps. The most efficient algorithm known to date is subexponential, as described in~\cite{gartner_subexponential_2021}.

\paragraph{Flows.}
Let $G$ be a stopping multigraph, $\rho$ a rotor configuration and $\sigma \in C^+_V$,
such that all maximal rotor walks that start from $(\rho,\sigma)$ end in $(\rho',\sigma')$, with $\sigma'_v=0$ for all $v \notin S$, as stated in \thref{thm:commute_rotor_routing}.

Define the {\bf run} of $(\rho,\sigma)$ as the map $f : A \rightarrow \N$ where $f(a)$ is the number of times
that a particle travels along arc $a$ during such a maximal walk (this number is also independent of the choice of the maximal walk, by the same result). Then the run $f$ satisfies the following equations:

\begin{itemize}
    \item  for all $v \in V$, flow conservation at $v$:
    \begin{equation} \label{eq:flow_conservation}
        \sum_{a \in A^-(v)} f(a) + \sigma_v = \sum_{a \in A^+(v)} f(a) + \sigma'_v
    \end{equation}
    \item for all $v \in V \setminus S$, rotor condition at $v$:
\begin{equation} \label{eq:rotor_condition}
    f(\rho(v)) \geq f(\theta(\rho(v))) \geq \cdots f(\theta^i (\rho(v))) \geq f(\theta^{i+1}(\rho(v))) \geq \cdots \geq f(\rho(v)) - 1
\end{equation}
\end{itemize}

Conversely, we call a {\bf flow} for $(\rho,\sigma,\sigma')$ a map $f : A \rightarrow \N$ satisfying all equations (\ref{eq:flow_conservation}) and (\ref{eq:rotor_condition}); hence the run of $(\rho,\sigma)$ is a flow for $(\rho,\sigma,\sigma')$. 
Appendix \ref{sec:flow_example} gives detailed examples of run and flows in the graph $G_2$ of Fig. \ref{fig:exempleG2}.
It turns out that these equations are sufficient to characterize $\sigma'$, by the following result, due to Dohrau et al \cite{dohrau_arrival_2017}.

\begin{theorem}
\thlabel{prop:certif_flow}
    Suppose that $G$ is stopping, and that all maximal rotor walks that start from $(\rho,\sigma)$ end in $(\rho',\sigma')$. Let $\sigma_1 \in C_V$ with $\sigma_1(v)=0$ for all $v \in V \setminus S$, then  $\sigma_1 = \sigma'$ if and only if there exists a flow for $(\rho,\sigma,\sigma_1)$.
\end{theorem}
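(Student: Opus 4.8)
The plan is to prove the two implications separately. The forward direction is immediate: if all maximal rotor walks from $(\rho,\sigma)$ end in $(\rho',\sigma')$, then by \thref{thm:commute_rotor_routing}.1 the run $f$ of $(\rho,\sigma)$ is well-defined, and it was already observed in the excerpt that $f$ satisfies the flow conservation equations~(\ref{eq:flow_conservation}) and the rotor conditions~(\ref{eq:rotor_condition}) with final configuration $\sigma'$; hence $f$ is a flow for $(\rho,\sigma,\sigma')$, so taking $\sigma_1 = \sigma'$ gives the existence of a flow. The substance is therefore entirely in the converse: assuming a flow $f$ exists for $(\rho,\sigma,\sigma_1)$ with $\sigma_1$ supported on $S$, I must show $\sigma_1 = \sigma'$.

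For the converse, I would argue that the run $g$ of $(\rho,\sigma)$ (which produces $\sigma'$) must in fact equal the hypothesized flow $f$, whence $\sigma_1 = \sigma'$ by comparing the conservation equations. The key is a componentwise comparison: I would show $g \le f$ arcwise. The idea is to run the canonical maximal rotor walk and track, arc by arc, how many times each arc has been fired so far; I claim the invariant that the partial run always stays $\le f$. Concretely, suppose during the walk we are about to route a particle at a vertex $v$, and so far arc $a$ out of $v$ has been used $k_a$ times with the current rotor position at $v$ pointing to the arc that would be used next. The rotor condition~(\ref{eq:rotor_condition}) on $f$ forces the values $f(a)$ for $a \in A^+(v)$ to be "balanced" (consecutive in rotor order, differing by at most $1$), and the same balancedness holds for the partial run by construction; if the partial run were everywhere $\le f$ but about to exceed it at the next arc to be used at $v$, then $v$ would already have received, via incoming arcs, at most $\sum_{a \in A^-(v)} f(a)$ particles total, and emitted (by the partial-run count) already enough that flow conservation at $v$ forces a contradiction with $\sigma_v \ge 0$ and $f$ being a valid flow — i.e. there would be no particle left at $v$ to route, contradicting that the routing step is valid. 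Making this precise is the main work: one sets up a potential/counting argument over an arbitrary prefix of the walk showing the partial run is dominated by $f$, then takes the limit (the walk is finite since $G$ is stopping) to get $g \le f$.

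Once $g \le f$ arcwise, I would finish as follows. Both $g$ and $f$ satisfy flow conservation with the same $\sigma$ on the left; $g$ yields final configuration $\sigma'$ and $f$ yields $\sigma_1$, both vanishing off $S$. Subtracting the conservation equations, $h := f - g \ge 0$ satisfies $\sum_{a \in A^-(v)} h(a) = \sum_{a \in A^+(v)} h(a)$ for $v \notin S$, so $h$ is a nonnegative circulation supported away from... — more precisely, $h$ has zero net flow at every non-sink and nonpositive excess nowhere except possibly at sinks. Since $G$ is stopping, every vertex has a directed path to $S$, and a nonnegative flow that is conserved at all non-sinks and enters only finitely must be everywhere zero off any arc reachable only through cycles; a short argument (follow a hypothetical positive arc forward — conservation pushes positivity forward forever inside the non-sink part, impossible in a finite stopping graph unless it drains to $S$, but then the total out of the non-sink region would be positive while nothing enters it, contradiction) gives $h = 0$, i.e. $f = g$, and then the conservation equation at each sink gives $\sigma_1 = \sigma'$.

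The main obstacle I anticipate is the arc-by-arc domination $g \le f$: one has to exploit the rotor condition~(\ref{eq:rotor_condition}) to control which arc is "next" at each vertex relative to $f$'s values, and combine it with flow conservation at the exact moment a supposed violation first occurs. This is the step where the specific structure of rotor-routing (as opposed to arbitrary chip movement) is used, and it is essentially the content of Dohrau et al.'s original argument; I would likely phrase it as: "let $t$ be the first time in the canonical walk at which the partial run is about to exceed $f$ on some arc $a$; examine the vertex $v = \tail(a)$ at time $t$ and derive a contradiction from conservation of $f$ at $v$ together with $\sigma_v \ge 0$."
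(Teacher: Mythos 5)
The paper does not actually prove this theorem --- it is quoted from Dohrau et al.\ \cite{dohrau_arrival_2017} --- so I am judging your argument on its own merits. Your overall architecture is the standard one: the run is a flow (forward direction), the run $g$ is dominated arcwise by any flow $f$ (via a ``first time the partial run would exceed $f$'' argument combining the rotor condition (\ref{eq:rotor_condition}) with conservation (\ref{eq:flow_conservation}) at the offending vertex), and then one compares $f$ and $g$ at the sinks. The domination step $g \leq f$ is sketched correctly, and the forward direction is fine.

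There is, however, a genuine error in your last step: you claim $h := f - g = 0$, i.e.\ that every flow equals the run. This is false, and the paper itself insists on it: the discussion following the theorem explains that a flow need not be the run (that is the whole point of \thref{thm:run_carac}), and Appendix~\ref{sec:flow_example} exhibits a concrete flow on $G_2$ that certifies $\sigma'$ but differs from the run by a positive circulation around the cycles $v_2 \to v_3 \to v_2$ and $v_2 \to v_4 \to v_2$. Your argument for $h=0$ fails exactly there: a stopping graph can contain directed cycles among non-sink vertices, so ``conservation pushes positivity forward forever, impossible in a finite stopping graph'' does not hold --- the positivity can simply circulate. The theorem is true not because flows are unique but because all of them deliver the same number of particles to each sink. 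The correct conclusion from $h \geq 0$ and Kirchhoff's law at every non-sink is the weaker (and sufficient) statement that $h$ carries nothing into $S$: summing $\sum_{a \in A^+(v)} h(a) - \sum_{a \in A^-(v)} h(a) = 0$ over all $v \notin S$, and noting that the same sum over all of $V$ vanishes identically while sinks have no outgoing arcs, gives $\sum_{s \in S} \sum_{a \in A^-(s)} h(a) = 0$; nonnegativity then forces $\sum_{a \in A^-(s)} h(a) = 0$ for each sink $s$ separately, whence $\sigma_1(s) = \sigma'(s)$ by the conservation equations at $s$. This is essentially the computation hiding in your parenthetical remark, but you should state it as the conclusion and delete the claim $f = g$, which is both unnecessary and false.
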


This result proves that {\sc Arrival} belongs to the complexity class NP and to the class co-NP. Indeed, one can certify that the ending configuration is $\sigma'$ by giving a flow for $(\rho,\sigma,\sigma')$, and one can also certify that the ending configuration is not $\sigma'$ by giving a flow for $(\rho,\sigma,\sigma'')$ where $\sigma'' \neq \sigma'$.

We can note that if $f$ is the run for $(\rho,\sigma)$,
then it is easy to compute $\sigma'$ with equations (\ref{eq:flow_conservation}). It is also possible to compute $\rho'$ with equations (\ref{eq:rotor_condition}),
since for all $v \in V \setminus S$:
\begin{itemize}
    \item either $f(a)$ is constant on $A^+(v)$, which implies that $\rho'(v)=\rho(v)$;
    \item or there is $i>0$ such that $f(\theta^i(\rho(v))) < f(\rho(v))$, and then
    the minimal such $i$ gives $\rho'(v) = \theta^i(\rho(v))$.
\end{itemize}

If $f$ is a flow for $(\rho,\sigma,\sigma')$ but not the run for $(\rho,\sigma)$, we can also build
from $f$ 
a corresponding rotor configuration $\rho''$ by
equations (\ref{eq:flow_conservation}), with the same relations. However, this configuration will not be $\rho'$. If we want to give a certificate  for $\rho'$, we need to characterize the run among flows by additional conditions. The following result is due to Gärtner et al \cite{gartner_arrival_2018}
and implies in particular that {\sc Arrival} belongs to UP and co-UP.

\begin{theorem} \thlabel{thm:run_carac}
    Let $G$ be a stopping rotor multigraph. Let $f$ be a flow for $(\rho,\sigma,\sigma')$ and $\rho'$ the rotor configuration built from $f$ and $\rho$. Let $V_1 \subset V$
    the set of active vertices for $f$, i.e. $v \in V$ such that $\sum_{a \in A^+(v)} f(a) > 0$, and let $A_1 \subset A$ the set of arcs $\theta^{-1}(\rho'(v))$ for $v \in V_1$. Then $f$ is the run for $(\rho,\sigma)$
    if and only if $(V,A_1,\head,\tail)$ contains no directed cycles.
\end{theorem}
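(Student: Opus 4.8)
The plan is to prove the two implications of the equivalence separately, using throughout the confluence statement \thref{thm:commute_rotor_routing}.1 (valid since $G$ is stopping): every maximal rotor walk from $(\rho,\sigma)$ traverses each arc the same number of times and ends in the same $(\rho',\sigma')$, with $\sigma'_v=0$ for all $v\in V\setminus S$. Two elementary observations will be used repeatedly: the last routing performed at an active vertex $v$ uses exactly the arc $\theta^{-1}(\rho'(v))$ (after the $n_v$-th routing at $v$ the rotor there points to $\theta^{n_v}(\rho(v))=\rho'(v)$), and $f(\theta^{-1}(\rho'(v)))=\max_{a\in A^+(v)}f(a)$, which is $\ge 1$ whenever $v$ is active.

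\emph{From run to acyclicity.} Assume $f$ is the run and suppose, for contradiction, that $A_1$ contains a directed cycle $u_0\to u_1\to\cdots\to u_{k-1}\to u_0$, the arc out of $u_i$ being $b_i=\theta^{-1}(\rho'(u_i))$. Every $u_i$ lies in $V_1$, hence has an outgoing arc and is a non-sink. Fix a maximal rotor walk realizing $f$ and, for each $i$, let $t_i$ be the time of the last routing at $u_i$; these $t_i$ are pairwise distinct and each such routing uses $b_i$. Choose $i^\star$ with $t_{i^\star}$ maximal. The routing at time $t_{i^\star}$ delivers a particle to $u_{i^\star+1}$ along $b_{i^\star}$; since the last routing at $u_{i^\star+1}$ occurred at $t_{i^\star+1}<t_{i^\star}$ (the loop case $u_{i^\star+1}=u_{i^\star}$ being handled identically), that particle never moves again, so $\sigma'_{u_{i^\star+1}}\ge 1$, contradicting $\sigma'_{u_{i^\star+1}}=0$.

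\emph{From acyclicity to run.} I would prove, by induction on $\sum_{a\in A}f(a)$, the auxiliary claim: \emph{whenever $f$ is a flow for $(\rho,\sigma,\sigma')$ with $\sigma'$ vanishing on $V\setminus S$ and with $A_1$ acyclic, there is a rotor walk from $(\rho,\sigma)$ traversing each arc $a$ exactly $f(a)$ times.} The base case $f=0$ forces $\sigma=\sigma'$, which has no particle on $V\setminus S$, so the empty walk works. For the inductive step the key point is a \emph{no-stall lemma}: if $f\ne 0$, there is a non-sink $v$ with $\sigma_v>0$ and $n_v(f):=\sum_{a\in A^+(v)}f(a)\ge 1$. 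Granting it, put $v_1:=\head(\rho(v))$, let $\rho^{(1)}$ agree with $\rho$ except $\rho^{(1)}(v)=\theta(\rho(v))$, set $\sigma^{(1)}:=\sigma-v+v_1$ and $f^{(1)}:=f-\rho(v)$ (decrementing the coefficient of $\rho(v)$, legitimate as $f(\rho(v))=\max_{a\in A^+(v)}f(a)\ge 1$). One checks that $f^{(1)}$ is a flow for $(\rho^{(1)},\sigma^{(1)},\sigma')$ — the rotor inequalities at $v$ survive precisely because the entry decreased was the maximal one — and that the rotor configuration built from $f^{(1)}$ (relative to $\rho^{(1)}$) is again $\rho'$, so the set of last-arcs $A_1^{f^{(1)}}$ is contained in $A_1$ and is still acyclic. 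The induction hypothesis yields a walk from $(\rho^{(1)},\sigma^{(1)})$ realizing $f^{(1)}$; prepending the routing at $v$ realizes $f$ from $(\rho,\sigma)$. Such a walk ends in $(\rho',\sigma')$, which has no particle on $V\setminus S$, hence is maximal, hence realizes the run by \thref{thm:commute_rotor_routing}.1; so $f$ is the run.

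\emph{The no-stall lemma} is where acyclicity is genuinely used, and is the part I expect to be the main obstacle. Suppose it fails: then every non-sink carrying a particle has $n_v(f)=0$, so the set $Z:=V_1$ of active vertices is nonempty, consists of non-sinks, and carries no particle. By flow conservation \eqref{eq:flow_conservation}, $\sum_{a\in A^-(v)}f(a)=\sum_{a\in A^+(v)}f(a)=n_v(f)$ for every $v\in Z$; tracing a unit of in-flow backwards shows every arc with positive flow and head in $Z$ has its tail in $Z$, and the same conservation argument shows no positive flow can reach a non-sink outside $Z$. Summing the equalities over $Z$ and comparing the flow into $Z$ with the flow out of $Z$ then forces the flow from $Z$ to the sinks to be zero, so every arc with positive flow and tail in $Z$ also has its head in $Z$. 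In particular $\theta^{-1}(\rho'(v))$, which carries flow $\ge 1$, has head in $Z$ for every $v\in Z$; iterating the resulting self-map of the finite set $Z$ produces a directed cycle of $A_1$-arcs, contradicting acyclicity. By contrast, the run-to-acyclicity direction is the short stranded-particle argument above and presents no difficulty.
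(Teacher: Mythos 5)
Your proof is correct. Both directions go through: the ``stranded particle'' argument for run $\Rightarrow$ acyclic is sound (and the reduction to a vertex-simple cycle is automatic since $A_1$ has out-degree at most one per vertex), and the peeling induction with the no-stall lemma correctly establishes the converse — I checked in particular that $f-\rho(v)$ remains a flow, that the rotor configuration rebuilt from it relative to $\theta(\rho(v))$ is still $\rho'$, and that the summation argument in the no-stall lemma does force all positive flow to stay inside the set of active, particle-free vertices, whence a directed cycle of last-exit arcs. The only cosmetic point is that your auxiliary claim should carry the hypothesis $\sigma\geq 0$ explicitly (it is used in the no-stall lemma to turn $\sigma_v\leq 0$ into $\sigma_v=0$ and to kill inflow at inactive non-sinks); it is preserved by each routing step, so nothing breaks.

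Note that the paper does not prove this statement at all: it is quoted from Gärtner et al.\ and only later rederived as a special case of the general GRM machinery — namely the corollary to \thref{thm:charac_legal_rotor}, which itself rests on the guiding-tree characterization of \thref{prop:CNS_routing_vector} and the trace analysis of \thref{lem:trace_condition}. Your argument is therefore a genuinely different, self-contained and more elementary route: instead of factoring the routing through two coupled free routings and checking a guiding condition, you peel off one legal step at a time and locate the obstruction directly. That said, the combinatorial core is the same in both treatments: your no-stall lemma is, in this special case, exactly the argument of \thref{lemma:CNS_routing2} (if every particle-bearing vertex is exhausted, the remaining flow is trapped in a set of transitory vertices whose last-exit arcs must then close up into a directed cycle). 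What the paper's detour buys is generality — the same skeleton handles arbitrary rotor mechanisms and arbitrary arc configurations — while your direct proof buys brevity and makes the role of the acyclicity condition transparent for the classical {\sc Arrival} setting.
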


The idea behind this result is that $A_1$ is the set of traces of the run, i.e. the arcs corresponding to the last time a particle is routed at each vertex, and following these arcs must lead to a sink.

This result was generalized by T\'othmérész \cite{tothmeresz_rotor-routing_2022} to the case of graphs that are not necessarily stopping and allow the presence of negative particles. 
The main results of this paper, notably \thref{thm:routing_rotor_mecha}, \thref{thm:charac_legal_rotor}, 
\thref{thm:np_hard1} and \thref{thm:lr_cy_grm}, can be viewed as further generalizations of \thref{thm:run_carac}.

\paragraph{Recurrent configurations.}

Let $G$ be a strongly connected rotor multigraph. A configuration $(\rho, \sigma)$ is called \emph{recurrent} if there exists a non-empty rotor walk that returns from $(\rho, \sigma)$ to itself. Such configurations exist since, 
by \thref{thm:commute_rotor_routing}, any maximal rotor walk on $G$ is infinite. Characterization of recurrent configurations has been analyzed in~\cite{tothmeresz_algorithmic_2018}:

\begin{theorem}
    \thlabel{thm:recurrent}
    Let $G$ be a strongly connected multigraph. Let $A_1 \subset A$ be the set of arcs $\theta^{-1}(\rho(v))$ for $v \in V \setminus S$. Then    
    configuration $(\rho, \sigma)$ is recurrent if and only if for every directed cycle $C$ in $A_1$, there is an arc
    in $C$ whose head $v$ satisfies $\sigma_{v} > 0$. 
\end{theorem}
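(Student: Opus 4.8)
The plan is to prove the two implications separately, after first unwinding the statement. Since $G$ is strongly connected, $S=\emptyset$, so $A_1=\{\theta^{-1}(\rho(v)):v\in V\}$ has exactly one arc out of each vertex; thus $(V,A_1)$ is a functional digraph, a directed cycle of $A_1$ is simple, and the heads of the arcs of such a cycle are precisely its vertices. Hence the condition in the statement is equivalent to: \emph{starting from any vertex and repeatedly following the arc of $A_1$ (i.e.\ iterating $v\mapsto\head(\theta^{-1}(\rho(v)))$) one eventually reaches a vertex of $B:=\{v\in V:\sigma_v>0\}$.} I will also use \thref{thm:kerdelta}: the kernel of $\Delta$ is $\Z p$ for the primitive period vector $p$ of $G$. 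The key preliminary observation is that any nonempty rotor walk from $(\rho,\sigma)$ back to $(\rho,\sigma)$ routes each vertex $v$ exactly $t\,p(v)\,|A^+(v)|$ times for one common integer $t\ge 1$: each vertex performs whole turns of its rotor, so it is routed a multiple $m_v|A^+(v)|$ of times and each of its out-arcs is used $m_v$ times; flow conservation back to $\sigma$ forces $\sum_v m_v v\in\ker\Delta=\Z p$, hence $m_v=t\,p(v)$, with $t\ge1$ since the walk is nonempty. Consequently the last routing at $v$ uses the arc $\theta^{\,k_v-1}(\rho(v))=\theta^{-1}(\rho(v))$, so $A_1$ is exactly the set of \emph{traces} — the arcs along which the final routing at each vertex occurs.

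For the direct implication, suppose $(\rho,\sigma)$ is recurrent, fix such a return walk, and for each $v$ let $\tau_v$ be the time of its last routing. Assume for contradiction that $C=(v_0\to v_1\to\dots\to v_{\ell-1}\to v_0)$ is a directed cycle of $A_1$ all of whose vertices carry no chip in $\sigma$. Fix $v\in C$ with predecessor $u$ on $C$, so the $A_1$-arc $u\to v$ is the trace at $u$, traversed at time $\tau_u$. Since $\sigma_v=0$ and $v$ is never routed after $\tau_v$, its chip count is $0$ immediately after $\tau_v$ and can only increase afterwards; being $0$ at the end, $v$ receives no chip at any time $\ge\tau_v$. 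But at time $\tau_u$ a chip travels along $u\to v$, so $\tau_u<\tau_v$. As $v$ was arbitrary on $C$, going once around gives $\tau_{v_0}<\tau_{v_1}<\dots<\tau_{v_{\ell-1}}<\tau_{v_0}$, a contradiction. Hence no such cycle exists.

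For the converse it suffices to produce one legal realization of the count vector $\vec k=(p(v)|A^+(v)|)_v$ from $(\rho,\sigma)$: by the balance computation above such a realization automatically ends at $(\rho,\sigma)$ and is nonempty (as $\vec k\ge\mathbf 1$), hence certifies recurrence. Two ingredients drive this. First, the condition is preserved under one legal routing step: if we route at $u$ (so $\sigma_u>0$), then in the new configuration the $A_1$-arc at $u$ becomes $u\to\head(\rho(u))$, the vertex $\head(\rho(u))$ enters $B$, and $B$ loses at most $u$; one checks that every vertex's $A_1$-trajectory still reaches the new $B$ (those reaching the old $B$ at a vertex $\ne u$ still do; one first reaching $B$ at $u$ now continues one step to $\head(\rho(u))\in B$, and a new $A_1$-cycle would have to pass through $u$, hence through $\head(\rho(u))$). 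Second, I run a maximal rotor walk in which each vertex is allowed at most $p(v)|A^+(v)|$ routings: if every vertex attains its bound we are done; otherwise the set $W$ of vertices that do not is nonempty, every vertex of $W$ ends with no chip, and — the crucial point — $W$ must contain a directed cycle of $A_1$ disjoint from $B$, contradicting the hypothesis.

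The main obstacle is precisely this last step: one must rule out that a chip-bearing vertex stalls below its bound and thereby breaks the correspondence between $W$ and bad cycles. I would establish it by a deficiency/flow count in the spirit of \thref{prop:certif_flow} and \thref{thm:run_carac}: using $\ker\Delta=\Z p$ one shows that for $v\in W$ the trace arc $\theta^{-1}(\rho(v))$ is strictly under-used and that $\sum_{v\in W}\sigma_v$ is strictly dominated by the total under-use of $A_1$-arcs internal to $W$, which forces an $A_1$-cycle inside $W$ meeting no vertex of $B$. An alternative, more constructive route would be to build the realization of $\vec k$ directly, routing (or reverse-routing) vertices in order of their distance to $B$ along $A_1$, so that each operation always has a chip available where it is needed; the no-bad-cycle hypothesis is exactly what guarantees this distance is finite at every vertex and that the ordering never deadlocks.
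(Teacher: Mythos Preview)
The paper does not supply its own proof of \thref{thm:recurrent}; in Section~2 it is quoted from~\cite{tothmeresz_algorithmic_2018} as background. (A proof can be extracted from the paper's later machinery: specialising the conditions isolated in the proof of \thref{lem:recurrent} to a rotor configuration recovers exactly the statement, via \thref{thm:charac_legal_rotor}.) So there is no ``paper proof'' to compare against; I evaluate your argument on its own.

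Your forward implication is correct and cleanly argued: the trace identification $A_1=\{\theta^{-1}(\rho(v))\}$ is right, and the last-exit time ordering $\tau_u<\tau_v$ around a hypothetical chip-free $A_1$-cycle yields the contradiction.

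The converse, however, is not proved. You set up the right target (a legal realisation of the count vector $\vec k=(p(v)\,|A^+(v)|)_v$), and the invariance claim is fine, but the crucial step --- that a maximal $\vec k$-bounded walk which fails to reach $\vec k$ produces a bad $A_1$-cycle --- is only gestured at. Two concrete issues: (a)~you speak of ``a directed cycle of $A_1$ disjoint from $B$'' without saying whether $A_1$ and $B$ refer to the initial or the final configuration; since both change along the walk, this must be pinned down (the natural choice is the final configuration, relying on invariance), and (b)~even granting that, you still need to show that the deficient set $W$ is closed under the final $A_1'$-successor map, which is where the real work lies and which your ``deficiency/flow count'' paragraph does not carry out. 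The alternative ``route by distance to $B$ along $A_1$'' sketch has the same problem in sharper form: $A_1$ and $B$ change at every step, so an ordering fixed at the outset does not obviously avoid deadlock.

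A way to close the gap, in the spirit of what you started: from invariance you get an \emph{infinite} legal walk, hence by finiteness of the state space some $(\rho^*,\sigma^*)$ is recurrent; it remains to transport recurrence back to $(\rho,\sigma)$, which one does via the abelian/confluence property of rotor-routing (any maximal $\vec k$-bounded walk from a configuration that \emph{can} legally realise $\vec k$ must in fact realise it). Alternatively, and closer to the paper, one can verify directly that the four conditions of \thref{thm:charac_legal_rotor} hold for the routing vector $F(p)$ starting from $(\rho,\sigma)$: conditions (i)--(iii) are automatic for rotor configurations with $\sigma\ge 0$, and condition~(iv) is exactly your hypothesis.
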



\paragraph{Turn and Move routing.}

We note that another definition of rotor-routing exists and is also widely used in the literature, which we refer to as \emph{Turn and Move} routing (in contrast with  the \emph{Move and Turn} routing that we use). In this version,  when a  routing step at $v$ for $(\rho,\sigma)$ is processed, the rotor $\rho(v)$ is first updated to $\theta(\rho(v))$ and then a particle moves along the latter arc. This definition is also quite standard, and both definitions have their merits and flaws in terms of description of properties of routing. Ultimately, they are equivalent, as they are conjugate by the turn operator $\theta$, as shown by the following commutative diagram, which provides a correspondence between two rotor walks, routed either with Move and Turn rotor (M\&T) routing as used in this paper, or Turn and Move (T\&M) as just described.

\begin{center}
\begin{tikzcd}[column sep=large]
(\rho_0, \sigma_0) \arrow[r, "T \&  M(v_0)"] \arrow[d,"\theta \times id"]  & (\rho_1, \sigma_1) \arrow[r, "T \&  M(v_1)"] \arrow[d,"\theta \times id"] & \dots \arrow[r, "T \&  M(v_{k-1})"]  & (\rho_k, \sigma_k) \arrow[d,"\theta \times id"]\\
(\rho_0^+, \sigma_0) \arrow[r, "M \&  T(v_0)"] & (\rho_1^+, \sigma_1) \arrow[r, "M \&  T(v_1)"] & \dots \arrow[r, "M \&  T(v_{k-1})"]  & (\rho_k^+, \sigma_k) 
\end{tikzcd}
\end{center}

\section{Free routing in multigraphs} \label{sec:boundary}

In this section, we study the free routing of particles in a graph without any constraints, notably without rotors. In this model, a particle can always move along any arc, hence the term 'free'. We first introduce the boundary operator and highlight its key properties. This operator is central to our analysis, enabling us to count how many particles traverse each arc, thereby extending the concept of run/flow from Section \ref{sec:standard}. The terminology for the boundary operator, boundaries, and cycles is derived from standard simplicial homology of graphs (see, for instance,~\cite{hatcher2002algebraic}) (Section~\ref{sec:boundary_operator}).

We then introduce linear and legal routing: moving particles along arcs according to a routing vector~$\alpha$. Linear routing is fully governed by the image of~$\partial$, whereas legal routing additionally enforces local non-negativity constraints (Section~\ref{sec:legal_routing}).

Next, we address the central question of the existence of a legal routing. We provide a necessary and sufficient characterization in terms of a nonnegative routing vector satisfying specific structural conditions, and explain how such a vector yields a legal routing sequence (Section~\ref{sec:arc_routing_vec}). This leads to a polynomial-time reduction to a classical flow or weighted matching problem to determine the existence of such a vector.

Finally, when a routing vector~$\alpha$ is prescribed, we analyze when it can be fully realized by a legal sequence in \thref{prop:CNS_routing_vector}, the main result of this section. We identify the possible obstructions and introduce additional criteria ensuring that the entire vector~$\alpha$ can be legally executed (Section~\ref{sec:transitory}).

\subsection{Boundary operator}
\label{sec:boundary_operator}
Let $G=(V,A,\head,\tail)$ be a multigraph.  Recall from Subsection~\ref{sec:free_abelian} 
that $C_V$ is the free group on $V$ and $C_A$ the free group on $A$, and their elements are respectively called particle configurations and arc configurations. By the universal property of free groups, note that $\head$ and $\tail$ extend in a unique way as homomorphisms from $C_A$ to $C_V$.
Denote by  $\cW$  the set of weakly connected components of $G$ and by $C_\cW$ the free group on $\cW$.  
The connected component of
a vertex $v$ is denoted by $\deg(v)$, and we keep the same notation for the extension of $\deg$ as an homomorphism called degree from $C_V$ to $C_\cW$. If we denote by $\deg_w(\sigma)$ the coefficient of $w$ in $\deg(\sigma)$, then $\deg_w(\sigma) = \sum_{v \in w} \sigma_v$ is the sum of all $\sigma_v$ for all $v \in V$ in the component $w$. 
This {\bf degree} measures the total number of particles in each component, and for all particle routing definitions used in this paper, it remains invariant. As particles move along arcs, they stay within the same connected components, ensuring the degree is preserved.

Routing involves moving particles along arcs in both forward and backward directions, requiring careful tracking of the arcs used. To this end, we introduce the boundary operator $\partial : C_A \rightarrow C_V$, defined as 
$$\partial = \head - \tail.$$ 

For a particle configuration $\sigma \in C_V$ and an arc $a$, $\sigma + \partial(a)$ represents the configuration obtained by moving a particle from $\tail(a)$ to $\head(a)$, while $\sigma + \partial(-a)$ represents the reverse movement. This movement is algebraic, meaning $\sigma$ does not need to be positive where particles are taken. More generally, $\sigma + \partial(\alpha)$ for $\alpha \in C_A$ consists of moving particles of $\sigma$ along the arcs appearing in $\alpha$.

If $P = (a_1, a_2, \cdots, a_k)$ is an undirected path in $G$ from $x$ to $y$, we associate to this path the sum $\alpha = \sum_{i=1}^k \alpha_i a_i$ with coefficient $\alpha_i=1$ if $a_i$ is  forward-oriented in the path, and $\alpha_i=-1$ if it is backward-oriented. If $\alpha \in C_A$ is constructed from an undirected path $P$ from $x$ to $y$ in this way, we say that it represents the path $P$. In this case, we have 
$$\partial(\alpha)=\partial(\sum_{i=1}^k \alpha_i a_i) = y - x.$$
The rest of this section is devoted to studying some properties of $\partial$.

The image of $\partial$, i.e. the set of $\sigma \in C_V$ such that there is $\alpha \in C_A$ with $\partial(\alpha)=\sigma$, is denoted by $B_V$. It is a subgroup of $C_V$ whose elements are called {\bf boundaries}. The following is well known and easy:

\begin{proposition}
\thlabel{prop:image_bord}
    The image $B_V \subset C_V$ of $\partial$ is the kernel of $\deg$.
\end{proposition}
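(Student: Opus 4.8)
The plan is to prove the two inclusions $B_V \subseteq \ker(\deg)$ and $\ker(\deg) \subseteq B_V$ separately.

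For the first inclusion, it suffices by linearity to check that $\deg(\partial(a)) = 0$ for every arc $a \in A$, since $B_V$ is generated by the elements $\partial(a)$. But $\partial(a) = \head(a) - \tail(a)$, and both $\head(a)$ and $\tail(a)$ lie in the same weak connected component (they are joined by the arc $a$), so $\deg(\head(a)) = \deg(\tail(a))$ and hence $\deg(\partial(a)) = 0$. Thus $B_V \subseteq \ker(\deg)$.

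For the reverse inclusion, let $\sigma \in \ker(\deg)$; I want to exhibit $r \in C_A$ with $\partial(r) = \sigma$. The key observation is that for any two vertices $x, y$ in the same weak connected component, there is an undirected path $P$ from $x$ to $y$, and by the computation recalled just before the statement, an element $r_{x,y} \in C_A$ representing $P$ satisfies $\partial(r_{x,y}) = y - x$. Now work component by component: fix a component $w \in \cW$ and a base vertex $x_w \in w$. For each $v \in w$ set $r_v$ to be such an element with $\partial(r_v) = v - x_w$. Then consider $r = \sum_{v \in V} \sigma_v\, r_v$ (where $r_v$ is chosen in the component of $v$). Applying $\partial$ and using linearity gives $\partial(r) = \sum_{v} \sigma_v(v - x_{\deg(v)}) = \sigma - \sum_{w \in \cW} \bigl(\sum_{v \in w} \sigma_v\bigr) x_w = \sigma - \sum_{w \in \cW} \deg_w(\sigma)\, x_w = \sigma$, where the last equality uses $\deg_w(\sigma) = 0$ for every $w$ because $\sigma \in \ker(\deg)$. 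Hence $\sigma \in B_V$.

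I do not anticipate a serious obstacle here; the result is genuinely elementary, which is why the excerpt calls it "well known." The only point requiring a little care is the bookkeeping with weak connected components and the choice of base vertices, to make sure the telescoping cancellation is stated correctly and that one uses $\ker(\deg)$ precisely at the step where the base-point contributions must vanish. An alternative, slightly slicker phrasing of the reverse inclusion would be an induction on the number of nonzero coordinates of $\sigma$: given a nonzero $\sigma$ in $\ker(\deg)$, pick a component $w$ where $\sigma$ is supported nontrivially; since $\deg_w(\sigma) = 0$ there are vertices $x, y \in w$ with $\sigma_x > 0$ and $\sigma_y < 0$, and subtracting $\partial(r_{x,y})$ for a path from $x$ to $y$ strictly decreases $\sum_v |\sigma_v|$ while staying in $\ker(\deg)$, so one concludes by induction with base case $\sigma = 0 = \partial(0)$. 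Either argument works; I would present the direct telescoping one as the main proof.
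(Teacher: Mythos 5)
Your proof is correct and follows essentially the same route as the paper: the forward inclusion is checked on the generators $\partial(a)$, and the reverse inclusion uses the same basepoint-plus-path construction — your $r=\sum_v \sigma_v r_v$ is exactly the paper's boundary section $s$ applied to $\sigma$, with the identity $\partial\circ s = id - b\circ\deg$ written out coordinatewise. The only difference is presentational: the paper packages the construction as a homomorphism (a ``section'') for later reuse, while you instantiate it directly for the given $\sigma$.
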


The kernel of $\partial$, i.e. the subgroup of arc configurations $\alpha \in C_A$ such that $\partial(\alpha)=0_V$, is denoted as $Z_A$. Its elements are called \textbf{cycles} and $Z_A$ the \textbf{cycle space}.  

\begin{proposition} \thlabel{prop:kirchoff}
    Elements of $Z_A$ are characterized by Kirchoff's Law at every $v \in V$, i.e. for $\alpha \in C_A$
    $$ \sum_{a \in A^+(v)} \alpha_a =  \sum_{a \in A^-(v)} \alpha_a$$
\end{proposition}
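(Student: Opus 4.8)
The plan is to compute directly the coordinates of $\partial(r)$ in the canonical basis of $C_V$ and to read off exactly when they all vanish. Write $r = \sum_{a \in A} r_a \cdot a$. Since $\partial = \head - \tail$ is by construction the unique homomorphism $C_A \to C_V$ extending $a \mapsto \head(a) - \tail(a)$, linearity gives
\[
\partial(r) = \sum_{a \in A} r_a\,\big(\head(a) - \tail(a)\big).
\]
First I would fix a vertex $v \in V$ and collect the contributions to the coefficient $\partial(r)_v$ of $v$ in this sum: an arc $a$ contributes $+r_a$ exactly when $\head(a)=v$, i.e.\ when $a \in A^-(v)$, and contributes $-r_a$ exactly when $\tail(a)=v$, i.e.\ when $a \in A^+(v)$. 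Hence
\[
\partial(r)_v = \sum_{a \in A^-(v)} r_a \;-\; \sum_{a \in A^+(v)} r_a .
\]

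Then I would conclude by unwinding definitions: $r \in Z_A$ means $\partial(r) = 0_V$, which by the definition of equality in the free group $C_V$ is equivalent to $\partial(r)_v = 0$ for every $v \in V$, that is, to $\sum_{a \in A^+(v)} r_a = \sum_{a \in A^-(v)} r_a$ for every $v \in V$. This is precisely Kirchhoff's Law as stated, so the equivalence is established.

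The only point deserving a word of care — and the closest thing to an obstacle here, although it is not a genuine one — is the treatment of loops and parallel arcs. If $a$ is a loop at $v$ then $\head(a) = \tail(a) = v$, so $a$ contributes $r_a - r_a = 0$ to $\partial(r)_v$, and consistently $a$ lies in both $A^+(v)$ and $A^-(v)$, so its coefficient $r_a$ occurs on both sides of Kirchhoff's equation at $v$ and cancels. Parallel arcs raise no issue either, since each is a distinct element of the canonical basis of $C_A$ and is summed independently in the computation above. Thus the bookkeeping is consistent in all cases, and no further argument is needed.
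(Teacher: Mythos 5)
Your proof is correct and follows essentially the same route as the paper: expand $\partial(r)$ by linearity of $\head-\tail$, regroup the sum by vertex to read off the coefficient $\partial(r)_v = \sum_{a \in A^-(v)} r_a - \sum_{a \in A^+(v)} r_a$, and conclude. The extra remark on loops and parallel arcs is a harmless sanity check that the paper leaves implicit.
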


If $\alpha_a$ is interpreted as a flow along arc $a$, this equation means that the algebraic sum of flows leaving every vertex $v$ is equal to the sum of entering flows.

\begin{proof}
    We have for any $\alpha \in C_A,$
    $$\partial(\alpha) = \sum_{a \in A} \alpha_a (\head(a) - \tail(a))$$
    $$=\sum_{v \in V} \left(  \sum_{a \in A^-(v)} \alpha_a -  \sum_{a \in A^+(v)} \alpha_a \right)\cdot v,$$
    hence the result.
\end{proof}

\subsection{Linear and legal free routings}
\label{sec:legal_routing}

In the multigraph $G$, a {\bf linear free routing} with {\bf routing vector} $\alpha \in C_A$  is the operation of transforming $\sigma \in C_V$ into $\sigma + \partial(\alpha)$. This amounts to moving 
simultaneously $|\alpha_a|$ particles along each arc $a \in A$, forward if $\alpha_a >0$ and backward if $\alpha_a <0$. This routing 
can always be applied without further conditions. A {\bf routing sequence} for $\alpha$ is a finite sequence $(a_i)_{0 \leq i \leq k-1}$ of arcs, such that $\alpha=\sum_{0 \leq i \leq k-1} a_i$. It describes an order in which we can route along arcs in $\alpha$ one by one. A routing sequence from $\sigma$ to $\sigma'$ is a routing sequence such that $\sigma' = \sigma + \partial(\alpha)$, where $\alpha$ is the routing vector of the sequence.

If $\sigma' = \sigma + \partial(\alpha)$ with $\sigma, \sigma' \in C_V$ and $\alpha \in C_A$, we write  $\sigma \lineareq{\partial}{\alpha}  \sigma'$ and say that there is a linear routing
from $\sigma$ to $\sigma'$ with routing vector $\alpha$. Alternatively, we write $\sigma \lineareq{\partial}{*}  \sigma'$ if there exists an $\alpha \in C_A$ such that $\sigma \lineareq{\partial}{\alpha}  \sigma'$. This clearly defines an equivalence relation $\lineareq{\partial}{*}$ on $C_V$.
From the results of Sections \ref{sec:boundary_operator} and \ref{sec:boundary_operator}, it follows that 
$\sigma \lineareq{\partial}{*}  \sigma'$ if and only if $\sigma'- \sigma \in B_V$, which is equivalent to $\deg(\sigma)=\deg(\sigma')$. If $\sigma \lineareq{\partial}{\alpha}  \sigma'$, then the set of all possible routing vectors from $\sigma$ to $\sigma'$ is $\alpha + Z_A$.

Among these linear routings, we want to identify those that correspond to legal routings, which we now define. Once again, this is not to be confused with rotor-routing as defined in Sec.~\ref{sec:standard}: here there are neither rotor orderings nor rotor configurations, and we simply move particles  along arcs of the graph. 
We say that the linear free routing from $\sigma$ with routing vector $a \in A$ is {\bf legal} for $\sigma$ if $\sigma_{\tail(a)} \geq 1$. 
A routing sequence $(a_i)_{0 \leq i \leq k-1}$ from $\sigma$ to $\sigma'$ is legal if for every $i$, routing along $a_i$ is legal for $\sigma_{i}$,  where $\sigma_0 = \sigma$ and $\sigma_{i+1} = \sigma_i + \partial(a_i)$ for $0 \leq i \leq k-1$ (and consequently $\sigma_k = \sigma'$). If there exists a legal routing sequence from $\sigma$
to $\sigma'$, we write $\sigma \legalseq{\partial}{*}  \sigma'$, and use notation $\sigma \legalseq{\partial}{\alpha}  \sigma'$ if we want to point out that there is a legal sequence with routing vector $\alpha$. Note that by definition, $\sigma \legalseq{\partial}{\alpha}  \sigma'$ implies $\sigma \lineareq{\partial}{\alpha}  \sigma'$ and $\sigma \legalseq{\partial}{*}  \sigma'$, either of which imply $\sigma \lineareq{\partial}{*}  \sigma'$.

With these definitions, we observe that legal free routing is a special case of VAS as discussed in Sec.~\ref{sub:vas}, which is conservative in the total number of particles. The legal reachability problem, in this context, involves decomposing a linear relation into a legal sequence.

\subsection{Existence of a legal free routing}
\label{sec:arc_routing_vec}

We now study how we can decide if $\sigma \legalseq{\partial}{*} \sigma'$ and compute a legal sequence. A necessary condition for its existence is that there is a nonnegative routing vector from $\sigma$ to $\sigma'$, i.e. $\alpha \in C^+_A$ with $\sigma \lineareq{\partial}{\alpha} \sigma'$. 
In the case where $\sigma, \sigma' \geq 0$, the problem reduces to moving particles in any order from  $V^+ = \{v \in V : \sigma_v > \sigma'_v \}$ to $V^- = \{v \in V : \sigma_v < \sigma'_v\}$ and can be viewed as a flow or matching problem, by deciding which particle in $V^+$
will be routed to which vertex in $V^-$, counted with multiplicities.
However, in presence of vertices with negative values, the situation is a little more complicated, as shows the example on Fig. \ref{fig:contre_ex_r_positif}.

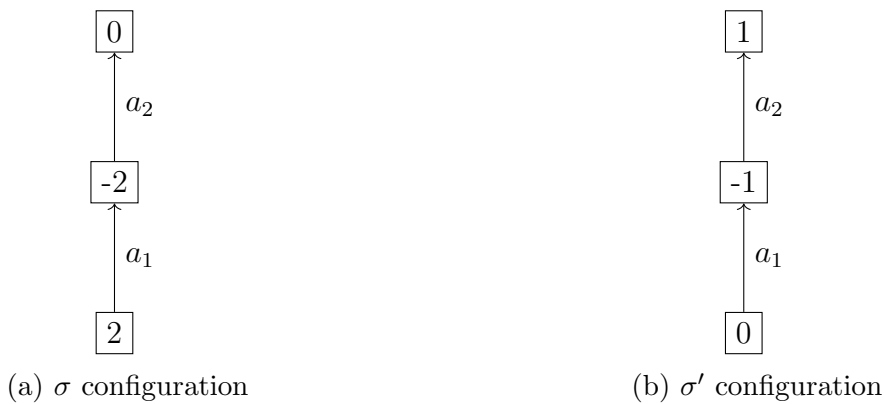
\begin{figure}[!htbp]
    \centering
    \begin{subfigure}{0.45\textwidth}
        \centering
        \begin{tikzpicture}
            \node[draw] (a) at (0, 0) {2};
            \node[draw] (b) at (0, 2) {-2};
            \node[draw] (c) at (0, 4) {0};
            \draw[->] (a) -- node[midway, right] {$a_1$} (b);
            \draw[->] (b) -- node[midway, right] {$a_2$} (c);
    
        \end{tikzpicture}
        \caption{ $\sigma$ configuration}
    \end{subfigure}
    \hspace{0.05\textwidth}
    \begin{subfigure}{0.45\textwidth}
        \centering
        \begin{tikzpicture}
            \node[draw] (a) at (0, 0) {0};
            \node[draw] (b) at (0, 2) {-1};
            \node[draw] (c) at (0, 4) {1};
            \draw[->] (a) -- node[midway, right] {$a_1$} (b);
            \draw[->] (b) -- node[midway, right] {$a_2$} (c);
    
        \end{tikzpicture}
        \caption{$\sigma'$ configuration}
    \end{subfigure}
    \caption{A graph with $\sigma$ given on the left and $\sigma'$ on the right. There is a single linear
    routing vector $\alpha$ from $\sigma$ to $\sigma'$
    which is $2a_1+a_2$. However, none of the three sequences $(a_1, a_1, a_2)$, $(a_1, a_2, a_1)$, $(a_2, a_1, a_1)$ are legal because the middle vertex remains nonpositive at all times.} 
    \label{fig:contre_ex_r_positif}
\end{figure}

\vskip .5cm

The first problem is that vertices $v$ with $\sigma'_v < 0$ can never be involved in a legal routing: they must remain \emph{inactive}. These vertices act like an obstacle to legal routing operations and could as well be turned into sinks. However, vertices can begin with $\sigma_v < 0$, receive particles and become active during the sequence. 

In the rest of this section, we suppose that $G, \sigma, \sigma'$
are given and that we want to find a legal routing sequence from $\sigma$ to $\sigma'$. 
If $\alpha$ is the routing vector of a legal routing sequence, then $\alpha \in C^+_A$. We recall that if $\alpha \in C^+_A$, we say that
$a$ is an element of $\alpha$ and write $a \in \alpha$, if $\alpha_a \geq 1$.

We solve the problem in two steps:

\begin{itemize}
    \item first, we show that $\sigma \legalseq{\partial}{*} \sigma'$ is equivalent to the existence of $\alpha \in C_A^+$
    with $\sigma \lineareq{\partial}{\alpha} \sigma'$, 
    satisfying additional properties (\thref{prop:CNS_routing1}). We call
    such a vector a \emph{legal routing vector} from $\sigma$
    to $\sigma'$, since it asserts the existence of the legal routing sequence.  We also describe how to construct greedily a legal routing sequence from a legal routing vector (\thref{lemma:CNS_routing1}).
    \item Second, we show that deciding existence and computing a legal routing vector can be decided by standard 
    weighted matching or network flow algorithms (\thref{prop:flow_problem}) (hence in polynomial time).
\end{itemize}

\subsubsection{From legal vectors to legal sequences}

If $\alpha,\alpha' \in C^+_A$, we write $\alpha \leq \alpha'$ if for all $a \in A$,  we have $\alpha_a \leq \alpha'_a$. A vertex $v$ is {\bf active} in $\alpha$ if there is an element $a \in \alpha$ with $\tail(a)=v$. An active vertex is a vertex that will emit particles during a routing with routing vector $\alpha$.

A {\bf legal routing vector} from $\sigma$ to $\sigma'$ is $\alpha \in C^+_A$ 
with $\sigma \lineareq{\partial}{\alpha}  \sigma'$, such that every active vertex $v$ in $\alpha$ satisfies $\sigma'_v \geq 0$.    

\begin{proposition}
    \thlabel{prop:CNS_routing1}
    Let $\sigma, \sigma' \in C_V$. Then $\sigma \legalseq{\partial}{*}  \sigma'$  if and only if there is a legal routing vector $\alpha$ from $\sigma$ to $\sigma'$.
    In this case there is $\alpha' \in C^+_A$ with $\sigma \legalseq{\partial}{\alpha'}  \sigma'$
     and $\alpha' \leq \alpha$.
    \end{proposition}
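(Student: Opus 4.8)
The plan is to prove the two implications separately. The direction "$\sigma \legalseq{\partial}{*} \sigma'$ $\Rightarrow$ existence of a legal routing vector" is immediate, so almost all the work goes into the converse, which I prove constructively by a greedy routing procedure that simultaneously yields the bound $r' \leq r$. For the easy direction, start from a legal routing sequence $(a_i)_{0 \leq i \leq k-1}$ from $\sigma$ to $\sigma'$ with intermediate configurations $\sigma = \sigma_0, \sigma_1, \dots, \sigma_k = \sigma'$, and set $r = \sum_{i=0}^{k-1} a_i \in C^+_A$. Telescoping the elementary moves gives $\sigma + \partial(r) = \sigma'$, i.e. $\sigma \lineareq{\partial}{r} \sigma'$. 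To see $r$ is legal, take an active vertex $v$ of $r$ and let $j$ be the \emph{largest} index with $\tail(a_j) = v$ (it exists since $v$ is active). Legality at step $j$ gives $\sigma_j(v) \geq 1$, hence $\sigma_{j+1}(v) \geq 0$; and for every later step $v$ is never a tail, so it can only gain particles, whence $\sigma'_v = \sigma_k(v) \geq \sigma_{j+1}(v) \geq 0$. Thus $r$ is a legal routing vector, and the "in this case" clause already holds here with $r' = r$.

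For the converse, suppose $r$ is a legal routing vector from $\sigma$ to $\sigma'$ and run the following greedy process. Set $\sigma_0 = \sigma$, $r^{(0)} = r$, and maintain throughout the invariant $r^{(i)} \in C^+_A$, $r^{(i)} \leq r$, $\sigma_i + \partial(r^{(i)}) = \sigma'$. At step $i$, look for an arc $a$ with $r^{(i)}_a \geq 1$ and $\sigma_i(\tail(a)) \geq 1$; if one exists, route it, setting $\sigma_{i+1} = \sigma_i + \partial(a)$ and $r^{(i+1)} = r^{(i)} - a$ (this is a legal elementary routing and preserves the invariant), otherwise stop. Since $\sum_{a \in A} r^{(i)}_a$ strictly decreases, the process halts at some index $k$. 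If $r^{(k)} = 0$ then $\sigma_k = \sigma + \partial(r) = \sigma'$ and the arcs routed form a legal routing sequence from $\sigma$ to $\sigma'$ with routing vector $r$, so we are done. The substantial case is when the process gets \emph{stuck}, i.e. $r^{(k)} \neq 0$ but no arc of $r^{(k)}$ is legal for $\sigma_k$; here I must show the leftover $s := r^{(k)}$ satisfies $\partial(s) = 0_V$, so that $\sigma_k = \sigma'$ already and the partial sequence, with routing vector $r - s \leq r$, is the desired legal routing sequence.

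To resolve the stuck case, let $V_s$ be the set of active vertices of $s$, i.e. the tails of arcs in the support of $s$. Being stuck means $\sigma_k(v) \leq 0$ for every $v \in V_s$; since $s \leq r$, each such $v$ is active in $r$, hence $\sigma'_v \geq 0$ (as $r$ is a legal routing vector); combined with $\sigma'_v = \sigma_k(v) + \partial(s)_v$ this forces $\partial(s)_v \geq -\sigma_k(v) \geq 0$ for all $v \in V_s$. Now double count: writing $\partial(s)_v$ as (in-flow of $s$ at $v$) $-$ (out-flow of $s$ at $v$) and summing over $v \in V_s$, the total out-flow equals $\sum_{a \in A} s_a$, because every arc of the support of $s$ has its tail in $V_s$, while the total in-flow is at most $\sum_{a \in A} s_a$. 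Hence $\sum_{v \in V_s} \partial(s)_v \leq 0$, and together with the pointwise inequality $\partial(s)_v \geq 0$ this forces $\partial(s)_v = 0$ for all $v \in V_s$ and forces equality in the in-flow bound, i.e. every support arc of $s$ also has its head in $V_s$. Therefore $s$ is supported on arcs with both endpoints in $V_s$, so $\partial(s)$ vanishes on $V_s$ (just shown) and trivially off $V_s$, giving $\partial(s) = 0_V$ and $\sigma_k = \sigma'$.

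I expect the flow double-counting in the stuck case to be the only delicate point; everything else is bookkeeping around the invariant. The two things to be careful about are that "$v$ active in $s$" genuinely yields \emph{strictly positive} out-flow at $v$, so that the out-flow sum is exactly the full weight of $s$ with every support arc counted, and that it is the forced \emph{equality} in the in-flow bound — not just $\partial(s)_v = 0$ on $V_s$ — that pins the heads of all support arcs inside $V_s$ and thereby upgrades the conclusion to $\partial(s) = 0_V$ globally.
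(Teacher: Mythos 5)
Your proof is correct, and both the easy direction and the overall strategy for the converse (a greedy routing procedure bounded by $r$) match the paper's. The one genuine difference is in how the greedy step is justified. The paper's \thref{lemma:CNS_routing1} argues in the "can always extend" direction: as long as $\sigma^s \neq \sigma'$, degree conservation produces a vertex $v$ with $\sigma^s_v > \sigma'_v$, the sign of $\partial(r-r_s)$ at $v$ forces an unrouted outgoing arc of $r$ at $v$, and legality of $r$ gives $\sigma'_v \geq 0$, hence $\sigma^s_v \geq 1$ — a three-line argument. You instead argue in the "if stuck, then done" direction: you let the greedy run until no legal arc of the leftover $s$ remains and then show via a flow double-count over the active vertices of $s$ that $\partial(s)=0_V$, so the configuration already equals $\sigma'$. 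Your double-count is sound (including the observation that equality in the in-flow bound pins all heads of the support of $s$ inside $V_s$, which is what lets you conclude $\partial(s)=0$ globally rather than just on $V_s$), and your greedy may legitimately overshoot past a moment where $\sigma_i=\sigma'$, since the invariant $\sigma_i+\partial(r^{(i)})=\sigma'$ brings it back. The paper's formulation is shorter and yields a reusable extension lemma (invoked again in the proof of \thref{prop:flow_problem}); yours characterizes the terminal state of the greedy, which is a perfectly valid, if slightly longer, route to the same conclusion.
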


The proof is based on the following lemma, which explains how to construct greedily a legal routing sequence from a legal routing vector. We say that a routing sequence is {\bf $\alpha$-bounded}, if the routing vector $\alpha'$ of the sequence satisfies $\alpha' \leq \alpha$. Let ${\cal S}(\sigma,\alpha)$ be the set
of $\alpha$-bounded routing sequences that are legal for $\sigma$.
If $s_1,s_2 \in {\cal S}(\sigma,\alpha)$, we say that $s_1 \leq  s_2$
if $s_1$ is a prefix of $s_2$. 

\begin{lemma} 
    \thlabel{lemma:CNS_routing1}

    Suppose $\alpha \in C^+_A$ is a legal routing vector from $\sigma$ to $\sigma'$.
    Let $s \in {\cal S}(\sigma,\alpha)$ with routing vector $\alpha^s$
    such that $\sigma^s = \sigma +\partial(\alpha^s) \neq \sigma'$. 
    Then the set of arcs $a \in \alpha-\alpha^s$ whose routing
    is legal for $\sigma^s$ is nonempty, and
    adding any such arc $a$ to $s$ extends $s$ into $s \cdot a \in {\cal S}(\sigma,\alpha)$.
\end{lemma}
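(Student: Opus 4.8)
The plan is to argue by a counting/conservation argument that at least one arc of $r - r_s$ has its tail supplied with a particle in the intermediate configuration $\sigma^s$, and then to observe that adding such an arc keeps the sequence inside ${\cal S}(\sigma,r)$ essentially by definition. Write $q = r - r_s \in C^+_A$ (this is nonnegative since $s$ is $r$-bounded), and note $q \neq 0_A$ because $\sigma^s \neq \sigma'$ while $\sigma + \partial(r) = \sigma'$; indeed $\partial(q) = \sigma' - \sigma^s \neq 0_V$, so in particular $q$ has a nonempty support.

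First I would set up the key inequality. Consider the set $U$ of vertices $v$ that are active in $q$, i.e. tails of arcs of $q$, together with the ``remaining flow'' restricted to leaving $U$. The idea is: if \emph{no} arc of $q$ were legal for $\sigma^s$, then every active vertex $v$ of $q$ would have $\sigma^s_v \le 0$. I want to derive a contradiction from this. Since $r$ is a legal routing vector, every active vertex $v$ of $r$ — hence every active vertex of $q$, which is a subset — satisfies $\sigma'_v \ge 0$. Now look at the ``downstream closure'' argument: starting from the support of $q$, the configuration $\sigma^s$ must eventually move particles to vertices where $\sigma'_v \ge \sigma^s_v$... The cleanest route is to pick a vertex $v_0$ active in $q$ with $\sigma^s_{v_0} \le 0$ (assumed for contradiction that all are $\le 0$) but with $\sigma'_{v_0} \ge 0$; since $\partial(q)_{v_0} = \sigma'_{v_0} - \sigma^s_{v_0} \ge 0$ and strictly positive if $\sigma'_{v_0} > \sigma^s_{v_0}$, the vertex $v_0$ is a net \emph{receiver} under $q$, i.e. $\sum_{a \in A^-(v_0)} q_a \ge \sum_{a \in A^+(v_0)} q_a$, with equality only if $\sigma'_{v_0}=\sigma^s_{v_0}$. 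Summing such net-flow inequalities over the set $W$ of active vertices of $q$ all assumed to satisfy $\sigma^s_v \le 0 \le \sigma'_v$, and using that arcs of $q$ with tail outside $W$ contribute nonnegatively, I can force $\sum_{v\in W}\partial(q)_v \le (\text{flow into } W \text{ from outside})$, yet each arc of $q$ has its tail in $W$ by definition of $W$ being all active vertices — so there is no inflow from outside, giving $\sum_{v \in W}\partial(q)_v \le 0$. But $\sum_{v\in W}\partial(q)_v = \sum_{v\in W}(\sigma'_v - \sigma^s_v) \ge 0$ and is $>0$ as soon as some active vertex of $q$ has $\sigma^s_v < 0$ or some $\sigma'_v > \sigma^s_v$; the edge cases where everything is $0$ contradict $q \ne 0_A$. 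This yields the contradiction, so some arc $a \in q$ is legal for $\sigma^s$, i.e. $\sigma^s_{\tail(a)} \ge 1$.

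For the second assertion: if $a \in r - r_s$ is legal for $\sigma^s$, then appending $a$ to $s$ gives a sequence whose routing vector is $r_s + a \le r$ (since $r_a - (r_s)_a \ge 1$), so it is $r$-bounded, and it is legal for $\sigma$ because $s$ was legal for $\sigma$ and the new last step is legal for $\sigma^s = \sigma + \partial(r_s)$. Hence $s \cdot a \in {\cal S}(\sigma,r)$.

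The main obstacle I expect is getting the conservation/summation bookkeeping exactly right — in particular handling vertices $v$ with $\sigma^s_v < 0$ (which absorb but never emit) versus vertices with $\sigma^s_v = 0$, and making sure the ``no inflow from outside $W$'' step is airtight given that $W$ is defined to contain \emph{all} tails of arcs of $q$. A slicker alternative, which I would try first, is to route $q$ greedily backwards: since $\partial(q) = \sigma' - \sigma^s$ and $r$ is legal, one can appeal directly to the fact (essentially the content of \thref{prop:CNS_routing1} in the special case $\sigma^s, q$) that a legal routing vector admits a legal ordering of its first step — but that risks circularity, so the self-contained conservation argument above is the safe choice.
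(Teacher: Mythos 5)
Your argument is correct in substance, but it takes a more roundabout route than the paper's. The paper argues directly: since $\deg(\sigma^s)=\deg(\sigma')$ and $\sigma^s\neq\sigma'$, there is a vertex $v$ with $\sigma^s_v>\sigma'_v$; because $\partial(r-r_s)_v=\sigma'_v-\sigma^s_v<0$ and $r-r_s\geq 0$, that vertex must be the tail of some $a\in r-r_s$, hence active in $r$, hence $\sigma'_v\geq 0$ by legality of $r$, and so $\sigma^s_v\geq 1$ — a two-line, single-witness argument. You instead assume that every active vertex of $q=r-r_s$ satisfies $\sigma^s_v\leq 0$ and derive a contradiction by summing $\partial(q)_v$ over the whole active set $W$; this works (all tails lie in $W$, so the sum is $\leq 0$, while legality of $r$ makes each term $\sigma'_v-\sigma^s_v\geq 0$, forcing every term to vanish and every arc of $q$ to stay inside $W$, whence $\partial(q)=0_V$), but it uses the same two ingredients (conservation and $\sigma'_v\geq 0$ on active vertices) with more bookkeeping. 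One small imprecision: in the degenerate case you say the contradiction is with $q\neq 0_A$, but a nonzero $q$ can perfectly well satisfy $\partial(q)=0_V$ (e.g.\ a directed cycle); the contradiction is with $\partial(q)=\sigma'-\sigma^s\neq 0_V$, which you did establish at the outset, so the argument closes — just make sure to invoke that fact rather than mere nonemptiness of the support. Your handling of the second assertion (appending any legal $a\in r-r_s$ keeps the sequence $r$-bounded and legal) matches the paper's and is fine.
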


\begin{proof}
    Since $\deg(\sigma^s)=\deg(\sigma)=\deg(\sigma')$, there is a vertex $v$ such that $\sigma^s_v > \sigma'_v$.
    Since $\sigma^s \lineareq{\partial}{\alpha-\alpha^s}  \sigma',$
    there is $a \in A^+(v)$ with $a \in \alpha - \alpha^s$. Then $a \in \alpha$, which is a legal routing vector, so
    $\sigma'_v \geq 0$ and  $\sigma^s_v \geq 1$.
    Routing along $a$ is legal for $\sigma^s$, hence $s \cdot a \in {\cal S}(\sigma,\alpha)$.
\end{proof}

\begin{proof}[Proof of \thref{prop:CNS_routing1}]
    \thref{lemma:CNS_routing1}   shows that the condition is sufficient. Conversely, suppose that there is a legal routing sequence $(a_i)_{0 \leq i \leq k-1}$ from $\sigma$ to $\sigma'$,
    and let $\alpha \in C^+_A$ be its routing vector. If $v$ is an active vertex in $\alpha$, then there is $a_i$ with $\tail(a_i)=v$; consider the last such arc, denoted by $a_\ell$. Let $\sigma^i = \sigma + \partial(\sum_{j=0}^{i-1} a_j)$. Since the routing is legal, we have $\sigma^\ell_v \geq 1$ and $\sigma^{\ell+1}_v \geq 0$. All arcs $a_i$ for $i>\ell$ satisfy $\tail(a_i) \neq v$,
    hence $\sigma^i_v$ cannot decrease for $i=\ell+1,...,k-1$. Hence $\sigma'_v \geq 0$. We proved that $\alpha$ is legal.
\end{proof}

\subsubsection{Existence of legal routing vectors}

We now reduce the problem of deciding the existence, and computing a legal routing vector from $\sigma$ to $\sigma'$, to the following problem:

\vskip .5cm

\begin{tabularx}{15cm}{lp{11cm}}
    \hline \multicolumn{2}{c}{{\sc Bipartite Weighted Degree Constraint Problem (BWDC) }} \\
    \hline
    {\sc input:} &  a simple bipartite undirected graph $(V_1,  V_2,E)$ and a weight  function $w : V_1 \cup V_2 \rightarrow \N$.\\       
    {\sc question:} & is there a weight function on edges $f : E \rightarrow \N$ such that for every vertex $v \in V_1 \cup V_2$, the sum of all weights $f(e)$ of edges incident to $v$ is equal to $w(v)$ ? \\
    \hline
\end{tabularx}

\vskip .5cm

This problem is a classic network flow problem which can be solved by standard strongly polynomial algorithms like Edmond-Karp's or Dinic's algorithms \cite{edmonds1972theoretical, dinic1970algorithm}. 

For the reduction, suppose that $G, \sigma, \sigma'$ are given. 
Let $V^+ = \{v \in V : \sigma_v > \sigma'_v \}$ and $V^- = \{v \in V : \sigma_v < \sigma'_v\}$, 
and consider the following instance $(V^+,V^-,E_1,w)$ of {\sc BWDC} where:

\begin{itemize}
    \item $E_1$ contains an edge $e$ between $v^+ \in V^+$
      and $v^- \in V^-$ if and only if there is a directed path from
      $v^+$ to $v^-$ in $G'$, where $G'$ is 
      obtained from $G$ by removing all arcs with tail in $\{v \in V : \sigma'_v < 0\}$
      \item $(V^+,V^-,E_1)$  is the bipartite graph
      \item the weight of $v^+ \in V^+$ is $w(v^+)=\sigma_{v^+} - \sigma'_{v^+}$ and the weight
      of $v^- \in V^-$ is $w(v^-)=\sigma'_{v^-} - \sigma_{v^-}$
      
\end{itemize}

\begin{proposition} \thlabel{prop:flow_problem}
        There is a legal routing vector from $\sigma$ to $\sigma'$
        in $G$ if and only if {\sc BWDC} has a solution on instance $(V^+, V^-,E_1,w)$. From
        any solution, a legal routing vector can be computed in polynomial time.
\end{proposition}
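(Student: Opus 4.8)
The plan is to prove both directions of the equivalence and then extract a polynomial-time construction.

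\medskip

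\textbf{From a legal routing vector to a {\sc BWDC} solution.} Suppose $r \in C^+_A$ is a legal routing vector from $\sigma$ to $\sigma'$. First I would observe that $r$ decomposes into a nonnegative integer combination of directed paths plus a directed cycle part: since $r \geq 0$ and $\partial(r) = \sigma' - \sigma$, the "net flow" out of each vertex is $(\sigma_v - \sigma'_v)$, so by a standard flow-decomposition argument $r$ can be written as a sum of directed paths, each going from some $v^+ \in V^+$ to some $v^- \in V^-$ (with one unit of path per unit of net flow), together with some directed cycles. Crucially, because $r$ is a \emph{legal} routing vector, every active vertex $w$ of $r$ has $\sigma'_w \geq 0$; hence no arc of $r$ has its tail in $\{v : \sigma'_v < 0\}$, so every such directed path lives entirely in $G'$. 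Reading off, for each path from $v^+$ to $v^-$, a contribution of $1$ to the edge $e = \{v^+, v^-\}$ of the {\sc BWDC} instance (this edge exists precisely because there is a directed path from $v^+$ to $v^-$ in $G'$) yields $f : E_1 \to \N$. By construction the total weight on edges incident to $v^+$ equals the number of path-units leaving $v^+$, which is its net outflow $\sigma_v - \sigma'_v = w(v^+)$, and symmetrically for $v^-$. So $f$ solves {\sc BWDC}.

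\medskip

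\textbf{From a {\sc BWDC} solution to a legal routing vector.} Conversely, suppose $f : E_1 \to \N$ solves the instance. For each edge $e = \{v^+, v^-\} \in E_1$ with $f(e) > 0$, fix once and for all a directed path $P_e$ from $v^+$ to $v^-$ in $G'$ (which exists by definition of $E_1$), and let $p_e \in C^+_A$ be the arc configuration summing the arcs of $P_e$ with multiplicity. Set $r = \sum_{e \in E_1} f(e) \cdot p_e \in C^+_A$. Then $\partial(r) = \sum_e f(e)(v^- - v^+)$; collecting terms, the coefficient of a vertex $v \in V^+$ is $-\sum_{e \ni v} f(e) = -w(v^+) = \sigma'_v - \sigma_v$, and similarly for $v \in V^-$, while vertices outside $V^+ \cup V^-$ get coefficient $0 = \sigma'_v - \sigma_v$. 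Hence $\partial(r) = \sigma' - \sigma$, i.e. $\sigma \lineareq{\partial}{r} \sigma'$. Moreover every arc of $r$ belongs to some $P_e \subset G'$, so no arc of $r$ has its tail in $\{v : \sigma'_v < 0\}$; thus every active vertex $v$ of $r$ satisfies $\sigma'_v \geq 0$, and $r$ is a legal routing vector. This construction is manifestly polynomial: the {\sc BWDC} instance has size polynomial in $|G|$ and can be built with one reachability computation per pair (or a single transitive-closure computation on $G'$), it is solved in strongly polynomial time by \cite{edmonds1972theoretical, dinic1970algorithm}, and the paths $P_e$ are found by BFS in $G'$; the resulting $r$ has polynomially bounded encoding since each $f(e) \le \max_v |w(v)| \le \|\sigma - \sigma'\|_\infty$.

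\medskip

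\textbf{Main obstacle.} The only delicate point is the flow-decomposition step in the forward direction: I must be careful that when I peel directed paths out of $r \geq 0$, I never need to route through a forbidden vertex (one with $\sigma'_v < 0$). This is where the hypothesis "active vertices of $r$ have $\sigma'_v \geq 0$" is essential — it guarantees that the support of $r$ is contained in $G'$ to begin with, so any path-decomposition of $r$ automatically stays in $G'$, and the forbidden vertices can only appear as \emph{endpoints} of paths, never as interior vertices; but an endpoint of a path from $V^+$ to $V^-$ with $\sigma'_v<0$ would have to lie in $V^-$, which is fine since the path \emph{arrives} there (its tail, the arc leaving it, is not used). A clean way to package this is to note the support of $r$ is an arc set in which every vertex with positive net outflow lies in $V^+ \cap \{\sigma' \ge 0\}$ and apply the standard "decompose a nonnegative circulation-with-sources into paths and cycles" lemma to $G'$ directly. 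Once that is set up, the correspondence between path-units and {\sc BWDC} edge-weights is bookkeeping, and combining with \thref{prop:CNS_routing1} we also get a legal routing \emph{sequence}, completing the picture.
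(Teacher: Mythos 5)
Your proof is correct and follows the same reduction as the paper: legal routing vectors correspond to collections of directed paths from $V^+$ to $V^-$ inside $G'$, whose multiplicities give the {\sc BWDC} edge weights, and conversely. The backward direction is identical to the paper's. The one place you diverge is the forward direction: the paper obtains the path decomposition by invoking \thref{lemma:CNS_routing1} to extract a legal routing sequence and then decomposes its routing vector $r'\leq r$ into $M$ directed paths, whereas you apply a standard flow decomposition directly to $r$ itself, using the legality hypothesis (every active vertex $v$ of $r$ has $\sigma'_v\geq 0$) to conclude that the support of $r$ already lies in $G'$, so the peeled paths automatically avoid the forbidden vertices except possibly as terminal endpoints in $V^-$. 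This is a genuine simplification: it makes the proposition independent of the greedy-sequence machinery and makes explicit the key containment $\mathrm{supp}(r)\subseteq G'$ that the paper leaves implicit; the cycle part of the decomposition is harmlessly discarded since it does not affect divergences. Both arguments are sound, and your treatment of the polynomial-time claim (transitive closure to build $E_1$, a strongly polynomial flow algorithm, BFS for the paths $P_e$, and the bound $f(e)\leq \|\sigma-\sigma'\|_\infty$) covers what the paper asserts without detail.
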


\begin{proof}
    First suppose that there is a legal routing vector $\alpha$. 
    Let $M=\sum_{v \in V^+} (\sigma_v-\sigma'_v)$. A consequence of the existence of a legal routing sequence (\thref{prop:CNS_routing1}) is that we can find a legal routing vector $\alpha' \leq \alpha$  from $\sigma$ to $\sigma'$, that can be written as $\alpha' = \sum_{i=1}^M \alpha'_i$ and $\alpha'_i$ represents a directed path from $V^+$ to $V^-$. 
    We now define the weight $f(e)$ of an edge of $E_1$ from $v^+$
    to $v^-$ as the number of directed paths from $v^+$
    to $v^-$ among $\alpha'_i$; then $f$ is a solution to the {\sc BWDC} instance.

    Conversely, if $f$ is a solution of the {\sc BWDC}
    problem, we consider for every edge $e \in E$
    from $v^+$ to $v^-$, an arc configuration $\alpha_e \in C_A$ representing
    a directed path from $v^+$ to $v^-$ in $G'$. Then
    $$\alpha = \sum_{e \in E_1} f(e) \cdot \alpha_e$$
    is a routing vector from $\sigma$ to $\sigma'$,
    which is legal since all its elements belong to $G'$.
\end{proof}

Before ending this section, remark that the existence of a legal routing vector $\alpha$ just certifies that $\sigma \legalseq{\partial}{\alpha'} \sigma'$ with some $\alpha' \leq \alpha$, and does not imply that $\sigma \legalseq{\partial}{\alpha} \sigma'$, as shows the example on Fig. \ref{fig:unlegal_routing_0}. The situation is very similar to the notion of flows and runs for rotor-routing as explained in Sec. \ref{sec:standard}.

\begin{figure}[!htbp]
    \centering
    \begin{subfigure}{0.45\textwidth}
        \centering
        \begin{tikzpicture}

            \node[draw] (a) at (0, 0) {1};
            \node[draw] (b) at (0, 2) {-1};
            \draw[->] (a) -- node[midway, right] {$a_1$} (b);
            \draw[->, loop left, out=150, in=210, looseness=8] (b) to node[left] {$a_2$} (b);
    
        \end{tikzpicture}
        \caption{ $\sigma$ configuration}
    \end{subfigure}
    \hspace{0.05\textwidth}
    \begin{subfigure}{0.45\textwidth}
        \centering
        \begin{tikzpicture}
            \node[draw] (a) at (0, 0) {0};
            \node[draw] (b) at (0, 2) {0};

            \draw[->] (a) -- node[midway, right] {$a_1$} (b);
            \draw[->, loop left, out=150, in=210, looseness=8] (b) to node[left] {$a_2$} (b);
        \end{tikzpicture}
        \caption{$\sigma'$ configuration}
    \end{subfigure}
    \caption{Graph $G$ with $\sigma$ on the left and $\sigma'$
    on the right. The vector $\alpha=a_1+a_2$ is a legal routing vector from
    $\sigma$ to $\sigma'$, but the only legal routing sequence
    is $(a_1)$ with routing vector $\alpha'=a_1 < \alpha$.}
    \label{fig:unlegal_routing_0}
\end{figure}
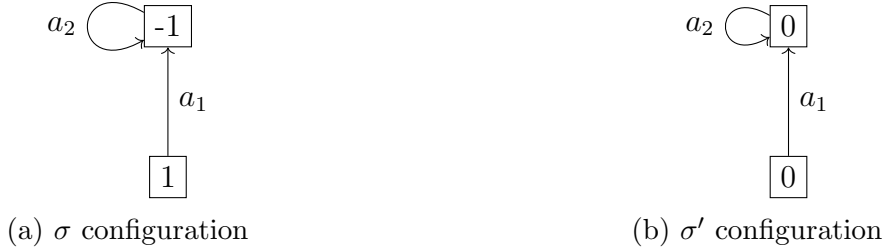

\subsection{Existence of legal routing with given routing vector.}
\label{sec:transitory}

We just mentioned that legal routing vectors characterize the existence of legal routing sequences,  but that not every legal routing vector can be obtained as the routing vector of a legal sequence. 
We now state conditions that ensure
$\sigma \legalseq{\partial}{\alpha} \sigma'$ if $\sigma \lineareq{\partial}{\alpha} \sigma'$ with $\alpha \in C^+_A$. This is analogous to
the characterization of runs for rotor-routing in Section \ref{sec:standard}
but for free routing in graphs.

In the proof of \thref{prop:CNS_routing1}, we showed that if $\sigma \lineareq{\partial}{\alpha}  \sigma'$ with $\alpha \in C^+_A$ and every active vertex $v$ in $\alpha$ satisfies $\sigma'_v \geq 0$, we can build greedily a legal routing sequence from $\sigma$
to $\sigma'$ whose routing vector $\alpha'$ satisfies $\alpha' \leq \alpha$. However, if we construct this sequence without additional precaution, this process can end with $\alpha \neq \alpha'$ as shown in the example in Fig.~\ref{fig:deux_graphes}.

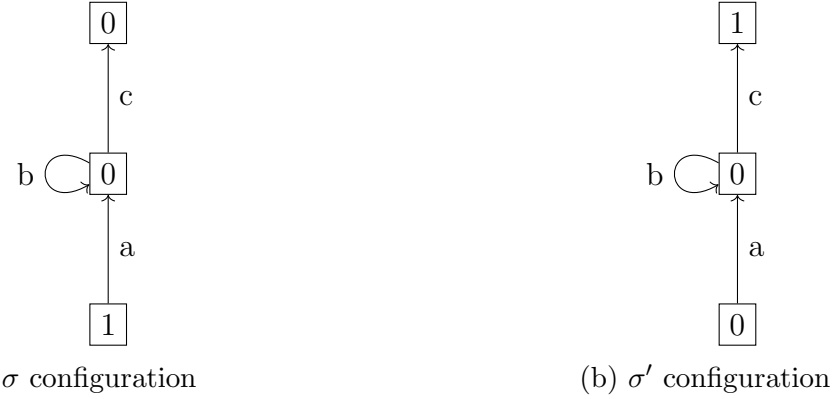
\begin{figure}[!htbp]
    \centering
    \begin{subfigure}{0.45\textwidth}
        \centering
        \begin{tikzpicture}
            \node[draw] (a) at (0, 0) {1};
            \node[draw] (b) at (0, 2) {0};
            \node[draw] (c) at (0, 4) {0};
            \draw[->] (a) -- node[midway, right] {a} (b);
            \draw[->] (b) -- node[midway, right] {c} (c);
            \draw[->, loop left, out=150, in=210, looseness=8] (b) to node[left] {b} (b);
    
        \end{tikzpicture}
        \caption{ $\sigma$ configuration}
    \end{subfigure}
    \hspace{0.05\textwidth}
    \begin{subfigure}{0.45\textwidth}
        \centering
        \begin{tikzpicture}
            \node[draw] (a) at (0, 0) {0};
            \node[draw] (b) at (0, 2) {0};
            \node[draw] (c) at (0, 4) {1};

            \draw[->] (a) -- node[midway, right] {a} (b);
            \draw[->] (b) -- node[midway, right] {c} (c);
            \draw[->, loop left, out=150, in=210, looseness=8] (b) to node[left] {b} (b);
        \end{tikzpicture}
        \caption{$\sigma'$ configuration}
    \end{subfigure}
    \caption{A graph with $\sigma$ given on the left and $\sigma'$ on the right. We want to route a particle from bottom to top. Clearly $\sigma' = \sigma + \partial(a+b+c)$, and $(a,b,c)$ is a legal routing sequence from $\sigma$ to $\sigma'$. However,
    if we try to construct greedily a routing 
    legal sequence and do not select $b$ before $c$,
    we obtain the legal sequence $(a,c)$ from
    $\sigma$ to $\sigma'$, which is legal
    but cannot be extended in a longer legal 
    sequence: we forgot to use arc $b$ before using $c$.}
    \label{fig:deux_graphes}
\end{figure}

To ensure that the whole routing vector $\alpha$ corresponds to a legal routing sequence, we need an additional condition on $\alpha$, which is met on the example in Fig.~\ref{fig:deux_graphes}. We need a little more terminology to describe the condition nicely.

Consider $\sigma, \sigma'$ and $\alpha \in C^+_A$ such that $\sigma \lineareq{\partial}{\alpha}  \sigma'$. In this context, a vertex $v$ is {\bf transitory} if it is active in $\alpha$ and if $\sigma'_v = 0$.  A transitory vertex is an active vertex in which particles will not settle in the end. We denote by $\trans(\alpha,\sigma')$ the set of transitory vertices 
in the context of $\sigma \lineareq{\partial}{\alpha}  \sigma'$.

Likewise, we say that a set of arcs ${\cal R} \subset A$ is  {\bf guiding} for $\sigma \lineareq{\partial}{\alpha}  \sigma'$
if:
\begin{enumerate}[(i)]
    \item ${\cal R}$ contains only elements of $\alpha$;
    \item for every transitory vertex $v \in \trans(\alpha,\sigma')$, there is  a directed path with all arcs in ${\cal R}$ from $v$ to a vertex $v'  \notin \trans(\alpha,\sigma')$ (i.e. $v'$ is not active, or $\sigma_v' \geq 1)$.
\end{enumerate}
It follows that a guiding set ${\cal R}$ which is minimal for inclusion contains exactly one arc $a \in \alpha$ with $\tail(a)=v$ for every transitory vertex $v$, and contains no directed cycles: hence such an ${\cal R}$ is a directed forest with domain $\trans(\alpha,\sigma')$. Conversely, any directed forest 
with domain $\trans(\alpha,\sigma')$ is guiding, and we call it a {\bf guiding forest}. For a guiding forest ${\cal F}$ and a transitory vertex $v  \in \trans(\alpha,\sigma')$, we denote by ${\cal F}(v)$ the arc $a$ in ${\cal F}$ with $\tail(a)=v$.

Consider as an example  Fig.~\ref{fig:deux_graphes} once again. 
Both $a+c$ and $a+b+c$ are routing vectors that admit a legal routing sequence.  In each case, the two vertices whose value is 0 in $\sigma'$ (middle and down) are  transitory since they are active in the routing, and the unique guiding forest is $\{a,c\}$. 

Let ${\cal F}$ be a guiding forest for $\sigma \lineareq{\partial}{\alpha}  \sigma'$
and consider an $\alpha$-bounded legal routing sequence  $s = (a_i)_{0 \leq i \leq k-1}$ for $\sigma$, with routing vector $\alpha'$.   We say that $s$ is {\bf guided by} $(\sigma, \alpha, {\cal F})$ if for every transitory vertex $v \in \trans(\alpha,\sigma')$:
\begin{itemize}
    \item if $v$ is not an active vertex of $\alpha-\alpha'$, then the last arc  with tail $v$ in $s$ is ${\cal F}(v)$;
    \item if $v$ is an active vertex of $\alpha-\alpha'$, then ${\cal F}(v) \in \alpha-\alpha'$.
\end{itemize}

The main idea in this definition is that if a legal routing sequence has routing vector $\alpha'=\alpha$, then saying that it is guided by ${\cal F}$ is just
saying that ${\cal F}(v)$ is the last arc of the sequence with tail $v$, for every transitory vertex $v$; otherwise if the routing vector $\alpha'$
is such that $\alpha' \lneq \alpha$, the condition ensures ${\cal F}(v)$ remains available
in order to extend the sequence. Note that by construction,
a prefix of legal routing sequence that is guided by $(\sigma,\alpha,{\cal F})$,
is also guided by $(\sigma,\alpha,{\cal F})$.
In the example in Fig. \ref{fig:deux_graphes}, a legal sequence guided by ${\cal F}=\{a,c\}$ for $\alpha=a+b+c$ will have to use arc $b$ before arc $c$.

\begin{lemma}
\thlabel{lemma:routing2tree}
    Let $(a_i)_{0 \leq i \leq k-1}$ be a legal routing sequence for
    $\sigma \legalseq{\partial}{\alpha}  \sigma'$. For any transitory
    vertex $v \in \trans(\alpha,\sigma')$, let ${\cal F}(v)$ be the last arc with tail $v$
    appearing in the sequence. Then ${\cal F}:=\{ {\cal F}(v) \text{ for } v  \in \trans(\alpha,\sigma')\}$ is a guiding forest for 
    $\sigma \lineareq{\partial}{\alpha}  \sigma'$, and $(a_i)_i$ is guided
    by ${\cal F}$.
\end{lemma}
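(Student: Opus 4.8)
\textbf{Proof plan for \thref{lemma:routing2tree}.}

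The plan is to verify directly that the set $T = \{T(v) : v \in \trans(r,\sigma')\}$ satisfies the two defining properties of a guiding set, then check it is minimal (hence a guiding tree), and finally that the given sequence is guided by $T$. First I would check property (i): each $T(v)$ is by definition one of the arcs $a_i$ appearing in the sequence, and since $r = \sum_i a_i$ is the routing vector, every arc used in the sequence is an element of $r$; so $T$ contains only elements of $r$. For property (ii), fix a transitory vertex $v$. We must produce a directed path inside $T$ from $v$ to a vertex not in $\trans(r,\sigma')$. The idea is to follow the "last arcs": start at $v$, take $T(v)$ to reach $v_1 := \head(T(v))$; if $v_1 \notin \trans(r,\sigma')$ we are done, otherwise $v_1$ is transitory so $T(v_1)$ exists, take it to reach $v_2$, and iterate. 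I would argue this process terminates at a non-transitory vertex by a conservation/counting argument: consider the position $\ell$ in the sequence at which $T(v)$ occurs; after routing along $T(v)$ the configuration at $v$ is already at its final value $\sigma'_v = 0$ and is never decremented again (no later arc has tail $v$), so in particular the later arc $T(v_1)$ — which, being the \emph{last} arc with tail $v_1$, occurs at some position $> \ell$ because a particle must still leave $v_1$ after it arrives there at step $\ell$ — occurs strictly later. Thus following $T$ from $v$ strictly increases the position index of the arcs traversed, so the walk cannot cycle and must eventually leave $\trans(r,\sigma')$; strong connectivity is not needed, only this strictly-increasing-index argument. This simultaneously shows $T$ is acyclic (any directed cycle in $T$ would give a non-terminating walk with strictly increasing indices, a contradiction) and contains exactly one out-arc per transitory vertex, so $T$ is a guiding tree.

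The main obstacle is making precise why, after $T(v)$ is used, vertex $v$'s count stays exactly $0$ and why the chain of "last arcs" has strictly increasing position: one has to combine the legality of the sequence (the count at $v$ is $\ge 1$ just before $T(v)$, hence $\ge 0 = \sigma'_v$ just after, and it equals $\sigma^k_v = \sigma'_v = 0$ at the end, while in between no arc has tail $v$, so it is nonincreasing from step $\ell+1$ on and therefore constantly $0$), together with the fact that $v_1 = \head(T(v))$ still has some outgoing arc of the sequence occurring after step $\ell$ — this is exactly what "$v_1$ is active in $r$" gives us once we know $v_1$ is transitory, because the last such occurrence is $T(v_1)$ at a position $> \ell$. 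I would write this as a short lemma-within-the-proof: "the function $v \mapsto (\text{position of } T(v) \text{ in the sequence})$ is strictly increasing along arcs of $T$."

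Finally, to see that $(a_i)_i$ is guided by $T$: here the routing vector of the full sequence is $r$ itself, so $r' = r$ and $r - r' = 0$, meaning no transitory vertex is active in $r - r'$. Thus only the first bullet of the definition of "guided by $T$" applies, and it demands exactly that the last arc with tail $v$ in $s$ be $T(v)$ for every transitory $v$ — which holds by the very definition of $T(v)$. Hence $(a_i)_i$ is guided by $T$, completing the proof.
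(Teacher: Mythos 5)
Your proposal is correct and follows essentially the same route as the paper: the key claim in both is that the position of $T(v)$ in the sequence strictly increases along the arcs of $T$, which rules out directed (hence, given out-degree one per transitory vertex, also undirected) cycles and forces the chain of last-exit arcs to terminate outside $\trans(r,\sigma')$. The only spot to tighten is the phrase ``a particle must still leave $v_1$'': since counts may be negative, this needs the contradiction argument the paper uses (if $i_{v_1}\le \ell$, legality of the emission at $i_{v_1}$ gives $\sigma_{v_1}\ge 0$ from then on, so the arrival at step $\ell$ forces $\sigma'_{v_1}\ge 1$, contradicting $\sigma'_{v_1}=0$), which is exactly the mechanism you already sketch in your ``main obstacle'' paragraph.
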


\begin{proof}
    Let $(\sigma^i)_{0 \leq i \leq k}$ be the sequence of particle
    configurations in the legal sequence.
    Consider a transitory vertex $v$. By construction ${\cal F}$
    contains exactly one arc of $\alpha$ with tail $v$;
    let $i_v$ be the last index with
    $a_{i_v} = {\cal F}(v)$ and let $v' = \head({\cal F}(v))$.
    If $v'$ is also transitory then $i_{v'} > i_v$: indeed, suppose that $i_{v'} \leq  i_v$: then $\sigma^{i_v}_{v'} \geq 0$ and then $\sigma^{i_v+1}_{v'} \geq 1$. By definition of $i_{v'}$, all arcs $a_i$ for $i > i_v+1$ satisfy $\tail(a_i) \neq v'$, so that $\sigma'_{v'} \geq \sigma^{i_v+1}_{v'} \geq 1$.  This contradicts the fact that $\sigma'_{v'}=0$. It follows that there can be no directed cycle in ${\cal F}$, 
    and no undirected cycle as well, so ${\cal F}$ is a guiding forest. 
\end{proof}

The next lemma proves that under conditions of  \thref{lemma:CNS_routing1}, together with the existence of a guiding forest ${\cal F}$, we can build greedily a legal routing sequence from $\sigma$ to $\sigma'$ that will have a routing vector exactly $\alpha$. The difference with \thref{lemma:CNS_routing1} is that we must make sure for transitory vertices that the last routing in which they are involved is the arc ${\cal F}(v)$. Hence, the guiding forest is the "last exit" of a particle in a transitory vertex, and will guide the particle to its final position. This is very similar to the correspondence between Eulerian circuits and directed forests in an Eulerian directed graph (see for instance \cite{van1987circuits}).

Let ${\cal S}(\sigma, \alpha, {\cal F})$ be the set of $\alpha$-bounded legal sequences for $\sigma$ that are guided by $(\sigma,\alpha,{\cal F})$.

\begin{lemma}
    \thlabel{lemma:CNS_routing2}
    
    Let $\alpha \in C^+_A$. Suppose that $\sigma \lineareq{\partial}{\alpha}  \sigma'$,
    that every active vertex $v$ of $\alpha$ satisfies $\sigma'_v \geq 0$,
    and that there exists a guiding forest ${\cal F}$ for the routing.
    Let $s \in {\cal S}(\sigma,\alpha,{\cal F})$ with routing vector $\alpha^s$
    such that $\alpha^s \neq \alpha$. Let $\sigma^s = \sigma + \partial(\alpha^s)$.
    
        Then there is an active vertex $v$ of $\alpha-\alpha^s$ with $\sigma^s_v \geq 1$.
    Define an arc $a \in A^+(v)$ by:
    \begin{itemize}
        \item if $v \notin \trans(\alpha,\sigma')$, let $a$ be any arc $a \in \alpha-\alpha^s$ with tail $v$;

        \item if $v \in \trans(\alpha,\sigma')$ and ${\cal F}(v)$ is the only arc in $\alpha-\alpha^s$ with tail $v$, let $a={\cal F}(v)$;

        \item if $v \in \trans(\alpha,\sigma')$ and there are at least
        two distinct arcs in $\alpha-\alpha^s$ with tail $v$,
        let $a$ be any such arc that is not ${\cal F}(v)$.
    \end{itemize}
    Then $s \cdot a \in {\cal S}(\sigma,\alpha,{\cal F})$.
\end{lemma}

\begin{proof}

    We prove first that there is an active vertex $v$ of $\alpha-\alpha^s$, such that $\sigma^s_v \geq 1$. Suppose that there is no such vertex and consider the set $V_s$  of active vertices of $\alpha-\alpha^s$. We then have
    \[ \sum_{v \in V_s} \sigma^s_v = 0.\]
    Since elements of $V_s$ are also active for $\alpha$, we see by hypothesis that $\sigma'_v \geq 0$ for every $v \in V_s$.  
    We prove that from $\sigma^s$ to $\sigma'$, the total number
    of particles in $V_s$ cannot increase: indeed, for every active arc of $\alpha-\alpha^s$, one particle is lost in $V_s$ and at most one is gained (in the case where the arc
    has also its head in $V_s$). Precisely:
    \[ \sum_{v \in V_s} \sigma'_v = \sum_{v \in V_s} (\sigma^s_v + \partial(\alpha-\alpha^s)_v ), \]
    with
    \[ \sum_{v \in V_s} \partial(\alpha-\alpha^s)_v  =  \sum_{v \in V_s} (\head(\alpha-\alpha^s))_v - \sum_{v \in V_s} (\tail(\alpha-\alpha^s))_v .\]
    We have 
    \[ \sum_{v \in V_s} (\tail(\alpha-\alpha^s))_v =  \sum_{a \in \alpha-\alpha^s} (\alpha-\alpha^s)_a \]
    since every arc of $\alpha-\alpha^s$ has  its tail in $V_s$  by definition of $V_s$.
    On the other hand, we see that 
    \begin{equation} \label{eq:bilan1}
       \sum_{v \in V_s} (\head(\alpha-\alpha^s))_v \leq  \sum_{a \in \alpha-\alpha^s} (\alpha-\alpha^s)_a  
    \end{equation}
    thus 
        \[ \sum_{v \in V_s} \partial(\alpha-\alpha^s)_v \leq 0,\]
    hence 
    $$0 \leq \sum_{v \in V_s} \sigma'_v \leq \sum_{v \in V_s} \sigma^s_v = 0, $$
    the last equality following from our hypothesis.
    It follows that $\sigma'_v = 0$ for every $v \in V_s$ so (\ref{eq:bilan1}) is an equality, i.e. for every arc $a \in  \alpha-\alpha^s$, $\head(a) \in V_s$.
    Also note that $V_s$ contains only transitory vertices, in $\trans(\alpha-\alpha^s,\sigma')$,
    hence also in $\trans(\alpha,\sigma')$.
    By hypothesis $s \in {\cal S}(\sigma, \alpha, {\cal F})$, 
    so for every $v \in V_s$, we have  ${\cal F}(v) \in \alpha-\alpha^s$ . Since arcs $a \in \alpha-\alpha^s$ are such that $\head(a) \in V_s$ and $\tail(a) \in V_s$, it implies that ${\cal F}$ 
    contains a directed cycle, which is a contradiction.

    Finally, we proved that there is an active vertex $v^1$ of $\alpha-\alpha^s$, such that $\sigma^s_{v^1} \geq 1$. Choosing any arc $a \in \alpha-\alpha^s$ with tail $v^1$ will result
    in extending $s$ into a sequence $s \cdot a$ which is $\alpha$-bounded and legal. 
    It remains to see how to choose $a$ such that $s \cdot a$ will still be guided 
    by $(\sigma,\alpha,{\cal F})$. If $v^1 \notin \trans(\alpha,\sigma')$, any such arc $a$ will work.
    Otherwise, ${\cal F}(v^1) \in \alpha-\alpha^s$ since $s \in {\cal S}(\sigma, \alpha, {\cal F})$. Consider two cases:
    \begin{itemize}        
        \item if ${\cal F}(v^1)$ is the only arc with tail $v^1$ in $\alpha-\alpha^s$, then we choose
        $a = {\cal F}(v^1)$. If $v^1$ becomes inactive for $\alpha-\alpha^s-{\cal F}(v^1)$ then the last
        arc with tail $v^1$ in $s \cdot a$ is ${\cal F}(v^1)$, as required; otherwise $v^1$
        is active in $\alpha-\alpha^s-{\cal F}(v^1)$ and we still have ${\cal F}(v^1) \in \alpha - \alpha^s - {\cal F}(v^1)$;
        \item otherwise, let $a$ be another arc with tail $v^1$. Then $v^1$
        is active in $\alpha-\alpha^s-a$ and we still have ${\cal F}(v^1) \in \alpha - \alpha^s - a$.
    \end{itemize}
    
\end{proof}

Here is the main result of this section:

\begin{theorem}
    \thlabel{prop:CNS_routing_vector}
    If $\sigma \lineareq{\partial}{\alpha}  \sigma'$ with $\alpha \in C_A^+$,
    then $\sigma \legalseq{\partial}{\alpha} \sigma'$
    if and only if 
    \begin{enumerate}[(i)]
        \item every active vertex $v$ satisfies $\sigma'_v \geq 0$, and
        \item the set of arcs which are elements of $\alpha$ is guiding.
        \end{enumerate}
    \end{theorem}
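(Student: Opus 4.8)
The plan is to prove the two directions separately, using the greedy construction lemmas already established. For the ``only if'' direction, suppose $\sigma \legalseq{\partial}{r} \sigma'$ via some legal routing sequence $(a_i)_{0 \leq i \leq k-1}$ with routing vector exactly $r$. Condition (i) is precisely \thref{prop:CNS_routing1} applied to this sequence: since each active vertex $v$ of $r$ is the tail of some $a_i$, the argument in the proof of \thref{prop:CNS_routing1} (looking at the last arc with tail $v$ and observing $\sigma_v$ cannot decrease afterward) forces $\sigma'_v \geq 0$. For condition (ii), I would invoke \thref{lemma:routing2tree}: from the legal sequence we extract a guiding tree $T$ by taking $T(v)$ to be the last arc with tail $v$ for each transitory vertex $v$. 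Since $T \subset \{a : a \in r\}$ and $T$ is guiding, the full set of elements of $r$ contains a guiding subset, hence is itself guiding (being guiding is monotone under adding arcs, as long as condition (i) of the definition of guiding — containing only elements of $r$ — is preserved, which it is since we are taking all of them).

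For the ``if'' direction, assume (i) and (ii) hold. From (ii), the set $R$ of all elements of $r$ is guiding, so it contains a minimal guiding subset, which by the discussion after the definition of guiding sets is a guiding tree $T$. Now I would apply \thref{lemma:CNS_routing2} iteratively: start with the empty sequence $s_0 \in {\cal S}(\sigma, r, T)$ (empty sequence is trivially $r$-bounded, legal, and guided by $T$ since no transitory vertex has been ``used up''), and repeatedly extend using the lemma. Each application of \thref{lemma:CNS_routing2} requires exactly hypotheses (i) and the existence of the guiding tree $T$, both of which we have; it produces a strictly longer sequence $s \cdot a \in {\cal S}(\sigma, r, T)$ whenever $r_s \neq r$. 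Since $r_s$ strictly increases (coordinatewise, as a nonnegative vector bounded by $r$) at each step, this process terminates after finitely many steps with a sequence $s$ whose routing vector is exactly $r$. That sequence is a legal routing sequence from $\sigma$ to $\sigma + \partial(r) = \sigma'$, establishing $\sigma \legalseq{\partial}{r} \sigma'$.

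The main obstacle — though it is largely dispatched by the preceding lemmas — is the termination/well-definedness of the greedy extension in the ``if'' direction: one must be sure that \thref{lemma:CNS_routing2} always applies until $r_s = r$, i.e.\ that we never get stuck with $r_s \lneq r$ and no legal extension guided by $T$. This is exactly the content of \thref{lemma:CNS_routing2}, whose proof rules out the bad case by a counting argument showing that a ``stuck'' configuration would force $T$ to contain a directed cycle. So the real work has been front-loaded into \thref{lemma:routing2tree} and \thref{lemma:CNS_routing2}, and the theorem itself is essentially an assembly of these pieces; I would keep the proof short, citing both lemmas and \thref{prop:CNS_routing1}, and spelling out only the monotonicity observation for ``guiding'' and the termination argument for the greedy construction.
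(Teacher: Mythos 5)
Your proof is correct and follows essentially the same route as the paper: necessity of (i) via \thref{prop:CNS_routing1} and of (ii) via \thref{lemma:routing2tree}, and sufficiency by extracting a guiding tree from condition (ii) and iterating \thref{lemma:CNS_routing2} until the routing vector reaches $r$. The extra details you supply (monotonicity of the guiding property and termination of the greedy extension) are correct and merely make explicit what the paper leaves implicit.
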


\begin{proof}
    If the set of elements of $\alpha$ is guiding, then  $\sigma \lineareq{\partial}{\alpha}  \sigma'$ admits a guiding forest so by \thref{lemma:CNS_routing2}, the conditions are sufficient. 
    
    Conversely if there is a legal routing sequence with routing vector $\alpha$, then Condition $(i)$ is necessary by \thref{prop:CNS_routing1} and Condition $(ii)$ is necessary by \thref{lemma:routing2tree}. 
\end{proof}

Figure~\ref{fig:unlegal_routing} illustrates this result.

\begin{figure}[!htbp]
    \centering
    \begin{subfigure}{0.45\textwidth}
        \centering
        \begin{tikzpicture}

            \node[draw] (a) at (0, 0) {1};
            \node[draw] (b) at (0, 2) {-1};
            \draw[->] (a) -- node[midway, right] {a} (b);
            \draw[->, loop left, out=150, in=210, looseness=8] (b) to node[left] {b} (b);
    
        \end{tikzpicture}
        \caption{ $\sigma$ configuration}
    \end{subfigure}
    \hspace{0.05\textwidth}
    \begin{subfigure}{0.45\textwidth}
        \centering
        \begin{tikzpicture}
            \node[draw] (a) at (0, 0) {0};
            \node[draw] (b) at (0, 2) {0};

            \draw[->] (a) -- node[midway, right] {a} (b);
            \draw[->, loop left, out=150, in=210, looseness=8] (b) to node[left] {b} (b);
        \end{tikzpicture}
        \caption{$\sigma'$ configuration}
    \end{subfigure}
    \caption{A graph with $\sigma$ given on the left and $\sigma'$ on the right. Consider $\sigma \lineareq{\partial}{a} \sigma'$: the unique transitory vertex is the bottom one since the top one is not active. Then $\{a\}$ is a guiding forest for that routing and there is a legal routing sequence with routing vector $a$. 
    \newline
    Consider now $\sigma \lineareq{\partial}{a+b} \sigma'$. Both vertices are transitory, but there is no guiding forest for $a+b$. One can easily check that there is no legal routing sequence from $\sigma$ with routing vector~$a+b$.
    }
    \label{fig:unlegal_routing}
\end{figure}
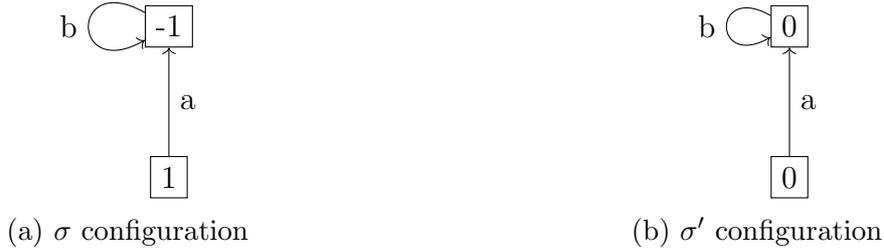

\subsection{Summary of complexity results for reachability questions in free routing}

We can sum up the algorithmic complexities underlying the reachability problems studied in this section by the following result. Polynomial here means polynomial in sizes of $G$, $\sigma$, $\sigma'$ or $\alpha$ (configurations being encoded in binary, i.e. strongly polynomial).

\begin{proposition}\thlabel{prop:summary_free}
    Let $G$ be a multigraph. Let $\sigma, \sigma' \in C_V$ and $\alpha \in C^+_A$
    \begin{enumerate}[(i)]
        \item the complexity of deciding if $\sigma \lineareq{\partial}{*} \sigma'$ is polynomial --  by checking if $\deg(\sigma)=\deg(\sigma')$;
        \item the complexity of deciding if $\sigma \legalseq{\partial}{*} \sigma'$ is polynomial -- by reduction to a network flow algorithm (\thref{prop:flow_problem});
        \item the complexity of deciding if $\sigma \legalseq{\partial}{\alpha} \sigma'$ is polynomial -- check conditions of \thref{prop:CNS_routing_vector}.
    \end{enumerate}
\end{proposition}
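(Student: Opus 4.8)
The plan is to handle the three items one by one, each time quoting the structural characterisation already established in this section and then checking that every computation it prescribes runs in strongly polynomial time; the mathematical content is entirely contained in \thref{prop:image_bord}, \thref{prop:flow_problem} and \thref{prop:CNS_routing_vector}, so the argument amounts to an accounting of complexities. Here ``polynomial'' means polynomial in the sizes of $G$, $\sigma$, $\sigma'$, $r$, numeric data being written in binary.

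For (i), I would recall the equivalence noted right after \thref{prop:image_bord}: $\sigma \lineareq{\partial}{*} \sigma'$ holds if and only if $\sigma' - \sigma \in B_V$, which by \thref{prop:image_bord} is equivalent to $\deg(\sigma) = \deg(\sigma')$. It then suffices to observe that evaluating $\deg$ consists in computing the weak connected components of $G$ (linear time, by a traversal or by union--find) and, for each component $w$, summing the coordinates $\sigma_v$ over $v \in w$, after which the two resulting elements of $C_\cW$ are compared coordinatewise; this is $O(|V|)$ additions and comparisons of integers of bit-length bounded by the input size, hence strongly polynomial.

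For (ii), I would write out the chain: $\sigma \legalseq{\partial}{*} \sigma'$ holds if and only if a legal routing vector from $\sigma$ to $\sigma'$ exists (\thref{prop:CNS_routing1}), if and only if the instance $(V^+, V^-, E_1, w)$ of {\sc BWDC} built in Section~\ref{sec:arc_routing_vec} is feasible (\thref{prop:flow_problem}). I would then argue that this instance is constructible in polynomial time: $V^+$ and $V^-$ are read off from a coordinatewise comparison of $\sigma$ and $\sigma'$; $G'$ is $G$ with the out-arcs of $\{v : \sigma'_v < 0\}$ removed; and $E_1$ is obtained by testing reachability of each $v^- \in V^-$ from each $v^+ \in V^+$ in $G'$, i.e. one transitive-closure computation. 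Since {\sc BWDC} is a standard integral network-flow feasibility problem, solvable in strongly polynomial time by the algorithms recalled in Section~\ref{sec:arc_routing_vec}, deciding (ii) is strongly polynomial; and, when a witness is wanted, the constructive direction of \thref{prop:flow_problem} produces a legal routing vector, from which \thref{lemma:CNS_routing1} extracts a legal routing sequence greedily. The only subtle point --- and the closest thing to a genuine obstacle in the whole statement --- is that, configurations being encoded in binary, the capacities $w(v)$ can be exponentially large in value, so one must invoke a max-flow algorithm whose running time is bounded polynomially in the combinatorial size of the instance and independent of the capacities (Dinic's $O(|V|^2 |E|)$ bound does this), rather than a by-value augmenting-path method.

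For (iii), since $\sigma \legalseq{\partial}{r} \sigma'$ forces $\sigma \lineareq{\partial}{r} \sigma'$, I would first check $\sigma + \partial(r) = \sigma'$ by computing $\partial(r)$ in one pass over the arcs, answering negatively if this fails; otherwise \thref{prop:CNS_routing_vector} reduces the question to two verifications: $(a)$ every active vertex $v$ of $r$ satisfies $\sigma'_v \geq 0$, which is a scan over the tails of the elements of $r$; and $(b)$ the set $R$ of arcs that are elements of $r$ is guiding, i.e. in the subgraph $(V, R, \head, \tail)$ every vertex of $\trans(r, \sigma')$ can reach a vertex outside $\trans(r, \sigma')$, which is decided by one backward traversal from $V \setminus \trans(r, \sigma')$ within $R$. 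Both checks are traversals of a graph of polynomial size, so (iii) is strongly polynomial, and the proposition follows by assembling these three observations.
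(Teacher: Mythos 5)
Your proposal is correct and follows exactly the route the paper intends: the paper gives no separate proof of \thref{prop:summary_free} beyond the parenthetical pointers to \thref{prop:image_bord}, \thref{prop:flow_problem} and \thref{prop:CNS_routing_vector}, and your write-up simply makes explicit the polynomial-time accounting behind each pointer (including the correct observation that a strongly polynomial max-flow algorithm is needed because configurations are encoded in binary, which the paper itself flags). Nothing is missing.
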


Note however that in $(ii)$ and $(iii)$, legal routing sequences can be obtained easily from appropriate routing vectors, but they can have exponential length.


\section{Rotor-routing in Generalized Rotor Mechanisms multigraphs}

Having defined the necessary tools in previous sections, we can define the generalization of standard rotor-routing. The generalization is twofold: first, the graphs where we study rotor-routing will not be limited to standard rotor multigraphs with a cyclic rotor on every vertex. Every vertex will be allowed to have a more complex mechanism for updating arcs, leading to the model of \emph{generalized rotor mechanisms }(GRM) multigraphs. Second, the linear routing takes place simultaneously in the space $C_A \times C_V$ of arcs and vertices, whereas in the previous section we considered just vertices. We will interpret this as free routing, in the sense of Section \ref{sec:boundary}, simultaneously in two graphs.
The motivation is that  it puts the emphasis on the symmetry between transformations in configurations of vertices (particles) and arcs (rotors) during rotor-routing. Furthermore, some important reachability results, developed in Section \ref{sec:legal_grm}, are valid in the generalized case.

As an introduction to GRMs, we reinterpret a single rotor step as two simultaneous free routings (one on the vertex-graph $G^V$, one on the arc-graph $G^A$) (Section~\ref{sec:rotor_as_crs}). We
   then formalize this construction (Section~\ref{sec:GRM}),
   introduce the linear rotor-routing operator $\mathcal{L}:C_F\to C_A\times C_V$ that 
   defines linear rotor-routing in GRMs (Section~\ref{sec:linear_rotor_routing}), and 
   then the legal version of rotor-routing in GRMs (Section~\ref{sec:deflegalroutingGRM}).
   We single out the cyclic GRM subclass and show that it exactly recovers standard rotor-routing; this identifies the bridge between classical results and their GRM generalization (Section~\ref{sec:cyclic_GRM}).
   We finally address algorithmic questions: how to decide and compute routing vectors in this context in polynomial time via linear-algebraic tools (matrix/Smith decomposition) (Section~\ref{sec:comput_routing_vect}).

\subsection{Introducting GRMs}
\label{sec:rotor_as_crs}

Consider once again the rotor multigraph $G_1$ of Fig.~\ref{fig:exempleG1}
and suppose that there is a single particle $\sigma = v_2$ on the vertex $v_2$,
together with a rotor configuration $\rho$ such that $\rho(v_2) = a_{2,1}$, 
as depicted on the right side of Fig.~\ref{fig:rotorgraph_as_CRS}.
If we proceed in a routing step at $v_2$, in the sense of standard rotor-routing, the particle will be transferred to $v_1$,
so that $\sigma$ becomes $\sigma + \partial(\rho(v_2)) = \sigma + \partial(a_{2,1})$, where $\partial$ is the boundary operator in $G$ as defined in Sec.~\ref{sec:boundary}. At the same time, $\rho$ is updated so that $\rho(v_2)$ becomes $\rho'(v_2)=a_{2,0}$. If we see $\rho$ as a formal sum of arcs $\sum_{v \in V} \rho(v)$, we can write this transformation as $\rho + a_{2,0} - a_{2,1}$. We interpret this as the routing of an 'arc particle' in the graph $G^A$ depicted in the left side of Fig.~\ref{fig:rotorgraph_as_CRS}.

In order to distinguish the graphs used to route 'vertex particles' and 'arc particles', we denote by $G^V$ the graph $G_1$ and $\partial^V$ its boundary operator. On the other hand, let us denote by $\partial^A$ the boundary operator of the graph $G^A$. The vertices of $G^A$ correspond to arcs of $G^V$,
and the arcs of $G^A$ join every arc $a$ of $G^V$ to its successor $\theta(a)$ in the rotor ordering; hence $G^A$ is a collection of disjoint directed cycles.
To avoid confusion, while we keep using the standard terminology of vertices and arcs for $G^V$, we respectively use the terms 'arcs' and 'faces' for formal vertices and arcs of $G^A$.

With this notation, the rotor-routing step described just above can be written as
$$(\rho',\sigma') = (\rho, \sigma) + (\partial^A(f_{2,1}),\partial^V(a_{2,1})).$$
Note that $a_{2,1} = \tail^A (f_{2,1})$, where $\tail^A$ is the tail operator in $G^A$.

We can then see a rotor step as two simultaneous free routings, in the sense of Sec.~\ref{sec:boundary}, in graphs $G^A$ and $G^V$. In what follows, we define a generalized rotor mechanisms multigraph, where the graph $G^A$ no longer required to be a collection of disjoint directed cycles.

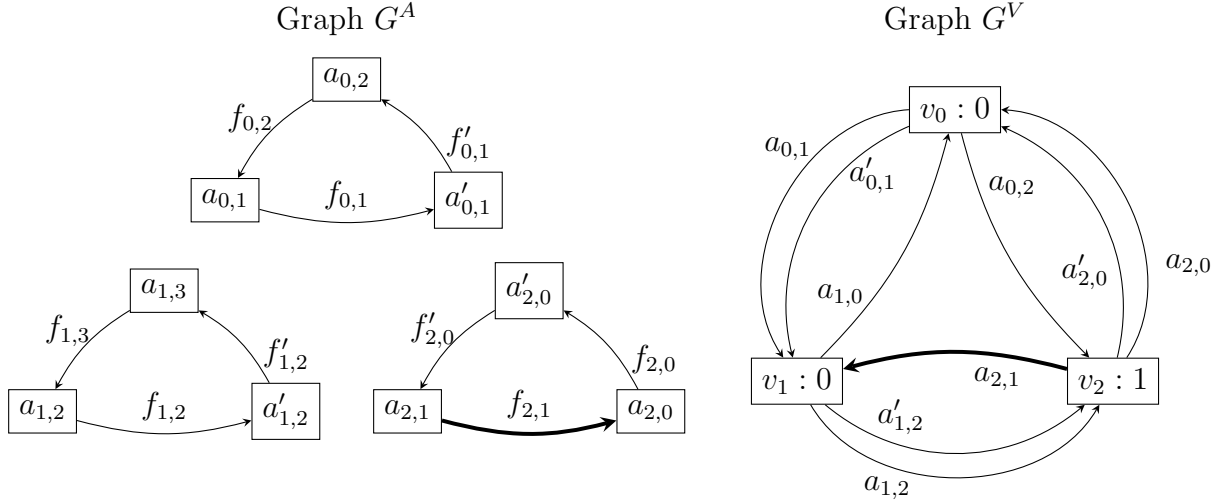
\begin{figure}[!htbp]
   \centering
\begin{tikzpicture}[->, >=stealth, scale=0.8]

    \node at (1,4.5) {Graph $G^A$};

    \begin{scope}[shift={(1,1.5)}]
    \node[draw] (v12) at (-2,0) {$a_{0,1}$};
    \node[draw] (v12bis) at (2,0) {$a'_{0,1}$};
    \node[draw] (v13) at (0,2) {$a_{0,2}$};
    \draw[bend right=15] (v12) to node[pos=.5, above] {$f_{0,1}$} (v12bis);
    \draw[bend right=15] (v12bis) to node[pos=.3, right] {$f'_{0,1}$} (v13);
    \draw[bend right=15] (v13) to node[pos=.3, left] {$f_{0,2}$} (v12);
    \end{scope}

    \begin{scope}[shift={(-2,-2)}]
    \node[draw] (v12) at (-2,0) {$a_{1,2}$};
    \node[draw] (v12bis) at (2,0) {$a'_{1,2}$};
    \node[draw] (v13) at (0,2) {$a_{1,3}$};
    \draw[bend right=15] (v12) to node[pos=.5, above] {$f_{1,2}$} (v12bis);
    \draw[bend right=15] (v12bis) to node[pos=.3, right] {$f'_{1,2}$} (v13);
    \draw[bend right=15] (v13) to node[pos=.3, left] {$f_{1,3}$} (v12);
    \end{scope}

    \begin{scope}[shift={(4,-2)}]
    \node[draw] (v21) at (-2,0) {$a_{2,1}$};
    \node[draw] (v20) at (2,0) {$a_{2,0}$};
    \node[draw] (v20bis) at (0,2) {$a'_{2,0}$};
    \draw[line width=1.5pt, bend right=15] (v21) to node[pos=.5, above] {$f_{2,1}$} (v20);
    \draw[bend right=15] (v20) to node[pos=.3, right] {$f_{2,0}$} (v20bis);
    \draw[bend right=15] (v20bis) to node[pos=.3, left] {$f'_{2,0}$} (v21);
    \end{scope}

    \begin{scope}[shift={(11,0)}]
            
        \node at (0,4.5) {Graph $G^V$};

        \node[draw] (v2) at (330:3) {$v_2 : 1$};
        \node[draw] (v0) at (90:3) {$v_0 : 0$};
        \node[draw] (v1) at (210:3) {$v_1 : 0$};

        \draw[bend right=40] (v2) to node[pos=.3, left=.1mm] {$a'_{2,0}$} (v0);
        \draw[bend right=60] (v2) to node[pos=.3, right=.1cm] {$a_{2,0}$} (v0);
        \draw[bend right=15] (v0) to node[pos=.2, right] {$a_{0,2}$} (v2);
        \draw[bend right=60] (v0) to node[pos=.3, left] {$a_{0,1}$} (v1);
        \draw[bend right=40] (v0) to node[pos=.3, right] {$a'_{0,1}$} (v1);
        \draw[bend right=15] (v1) to node[pos=.3, left] {$a_{1,0}$} (v0);
        \draw[bend right=60] (v1) to node[pos=.3, below] {$a_{1,2}$} (v2);
        \draw[bend right=40] (v1) to node[pos=.3, above] {$a'_{1,2}$} (v2);
        \draw[line width=1.5pt, bend right=15] (v2) to node[pos=.3, below] {$a_{2,1}$} (v1);
        \end{scope}
    
    \end{tikzpicture}

        \caption{The rotor multigraph $G_1$ of Fig.~\ref{fig:exempleG1}
        defined as a generalized rotor mechanisms $(G^A, G^V)$. 
        In particular, the face in bold $f_{2,1}$ and the arc in bold $a_{2,1}$ are coupled.
        }
        \label{fig:rotorgraph_as_CRS}
\end{figure}

\subsection{Definition of Generalized Rotor Mechanisms multigraphs}

\label{sec:GRM}

We generalize the standard rotor mechanism, where $G^A$ consists of 
the union of directed cycles simply by allowing any multigraph on every $A^+(v)$ instead of a directed cycle.

More precisely, let $G^V=(V,A,\head^V,\tail^V)$ be a multigraph  (for particles). Choose for every $v \notin S$, where $S$ is the set of sinks in $G^V$, a multigraph $G^A(v) = (A^+(v), F(v), \allowbreak \head^A, \allowbreak \tail^A)$ where $F(v)$ is any abstract finite set, and $\head^A, \tail^A$ are defined from $F(v)$ to $A^+(v)$ without restriction.

Let $G^A$ be the union of the graphs $G^A(v)$,
and let $F$ be the union of all $F(v)$ for $v \in V \setminus S$, so that
$G^A=(A, F, \head^A, \tail^A)$ is a multigraph. The elements of $F$
are called {\bf faces}.

We call a couple of multigraphs $(G^A, G^V)$ built like this, a {\bf Generalized Rotor Mechanism} (GRM from now on). We denote as $\deg^V$ and $\deg^A$ the degree mappings in graphs $G^V$ and $G^A$ as defined in Sec.~\ref{sec:boundary}.
An example of GRM multigraph is given on Fig. \ref{fig:generalized_rotor}.

\subsection{Definition of linear rotor-routing in GRM multigraphs}
\label{sec:linear_rotor_routing}

Let  $(G^A, G^V)$ be a GRM multigraph as defined above with $G^A = (A,F,\head^A,\tail^A)$ and $G^V=(V,A,\head^V,\tail^V).$
Formal sums of faces are denoted $C_F$.
Define

$$\cL : C_F \rightarrow C_A \times C_V $$
by 
$$\cL = \partial^A \times (\partial^V \circ \tail^A).$$

Let $(r,\sigma) \in C_A \times C_V$. We define {\bf the linear rotor-routing along} $\phi \in C_F$ as the operation that transforms $(r,\sigma)$ into $(r,\sigma) + \cL(\phi)$, and $\phi$ is called the \textbf{routing vector} of the routing operation. Note that if $\phi$ is a single face, with $a'=\head^A(\phi)$ and $a=\tail^A(\phi)$, this transformation adds $a'-a$ to $r$, and adds $v' - v$ to $\sigma$, where $v'=\head^V(a)$ and $v=\tail^V(a)$.

We say that $(r,\sigma)$ and $(r',\sigma')$ are equivalent (modulo linear routing), denoted by 
$$(r,\sigma)~\lineareq{\cL}{*}~(r',\sigma'),$$ 
if $(r'-r, \sigma'-\sigma) \in Im(\cL)$, i.e. if there is a routing vector that transforms  $(r,\sigma)$ into $(r',\sigma')$. If we want to specify that the routing vector is $\phi$,
we write  $(r,\sigma) \lineareq{\cL}{\phi} (r',\sigma')$.
Note that the linear routing operation is purely algebraic and can be computed by forming the matrix of~$\cL$. An example of this matrix is given in Appendix \ref{sec:matrix}.

\subsection{Definition of legal rotor-routing in GRM multigraphs}
\label{sec:deflegalroutingGRM}

An elementary linear routing is the routing along a face $f \in C_F$. Such a linear routing is said {\bf legal} for $(r,\sigma) \in C_A \times C_V$ if $r_a \geq 1$ and $\sigma_v \geq 1$, where $a = \tail^A(f)$ and $v = \tail^V(a)$. The interpretation is that there is a real 'vertex particle' on $v$ in $\sigma$ and a real 'arc particle' on $a$ in $r$, as it is the case in standard rotor-routing: adding $\cL(f)$ consists of moving respectively these particles along $a$, from $v$ to $\head^V(a)$, and along $f$ from $a$ to $\head^A(f)$.

Consider the special case of $(r,\sigma)=(\rho,\sigma)$ where $\rho$ is a rotor configuration, viewed as a sum of arcs. In this case, for every
non-sink vertex $v$, there is exactly one arc $a \in A^+(v)$ such that $\rho_a > 0$, namely $\rho(v)$. If $\sigma_v >0$ and we want to apply a GRM legal routing in order to move the particle on $v$, the process is similar to the routing in a standard rotor multigraph, with the the following differences:
\begin{itemize}
    \item when we want to move the particle from $v$ along the arc $\rho(v) \in A^+(v)$, we may sometimes choose the update on $\rho(v)$. Choosing a face $f \in F(v)$  with $\tail^A(f)=\rho(v)$ transforms $\rho$ into $\rho + \head^A(f) - \rho(v)$ , simultaneously with
    the movement of the particle;
    \item it can also happen that there is no face $f$ with $\tail^A(f)=\rho(v)$. In this case, no legal routing is available: the particle cannot legally move anymore, because $\rho(v)$ acts like a sink in the set of arcs.
\end{itemize}

Some basic generalized rotor mechanisms  could be for instance, a rotor multigraph where every arc $a$ can be updated to the next arc $\theta(a)$, or to the previous arc $\theta^{-1}(a)$, for every routing along $a$, or
a rotor multigraph with an arc-sink for every vertex, e.g. $G^A(v)$ is a directed path for every non sink vertex instead of a cycle.

\begin{figure}[!htbp]
    \centering
    \begin{subfigure}[t]{0.45\textwidth}
        \centering
        \begin{tikzpicture}[->, -{Latex[length=2mm]}, node distance=2cm]
        
        \node[draw] (v) at (0, 3) {v};
        \node[draw] (u) at (0, 0) {u};
        \node[draw] (x) at (-3, 0) {x};
        \node[draw] (y) at (3, 0) {y};

        \draw (u) -- (x) node[midway, below] {$a_1$};
        \draw (u) -- (y) node[midway, below] {$a_2$};
        \draw[bend left] (u) to node[midway, left] {$a_3$} (v);
        \draw[bend left] (v) to node[midway, right] {$a_4$} (u);
        \draw[bend right] (v) -- (x) node[midway, left] {$a_5$};

        \draw[red, dashed, bend left] (-1.5, 1.5) to node[midway,  above right] {$f_{54}$} (0.5, 1.5) ;
        \draw[red, dashed, bend right=30] (0.5, 1.5) to[out=90, in=90] node[midway, above] {$f_{44}$} (1.8, 1.5) to[out=90, in=90]   (0.5, 1.5);
        \draw[red, dashed, bend right=30] (-1.5, 1.5) to[out=90, in=90] node[midway, below] {$f_{55}$} (-2.5, 2.5) to[out=90, in=90] (-1.5, 1.5);
        \draw[red, dashed, bend right=30] (-1.5, 0) to[out=90, in=90] (-1.5, 1) to[out=90, in=90]  node[midway, right] {$f_{11}$} (-1.5, 0);
        \draw[red, dashed, bend right=30] (-1.5, 0) to[out=-90, in=-90] node[midway, above] {$f_{12}$} (1.5, 0);
        \draw[red, dashed] (1.5, 0) to node[midway, right] {$f_{23}$} (-0.5, 1.1);

    \end{tikzpicture} 
    \caption{Generalized rotor mechanisms in one picture: the dashed red arcs represent the possible evolution of arcs when routed.}
    \end{subfigure}\hfill
    \begin{subfigure}[t]{0.45\textwidth}
        \centering
        \begin{tikzpicture}[->, >=latex, node distance=2cm]
        \draw[white] (-3.5,-1.2) rectangle (3.5,3.5);

        \node[draw] (v) at (0, 3) {v};
        \node[draw] (u) at (0, 0) {u};
        \node[draw] (x) at (-3, 0) {x};
        \node[draw] (y) at (3, 0) {y};

        \draw (u) -- (x) node[midway, below] {$a_1$};
        \draw (u) -- (y) node[midway, below] {$a_2$};
        \draw[bend left] (u) to node[midway, left] {$a_3$} (v);
        \draw[bend left] (v) to node[midway, right] {$a_4$} (u);
        \draw[bend right] (v) -- (x) node[midway, left] {$a_5$};

    \end{tikzpicture}
    \caption{Graph  $G^V=(V,A,\head^V,\tail^V)$, with $V=\{x,y,u,v\}$ and $A=\{a_1, a_2, a_3, a_4, a_5\}$}
    \end{subfigure}\hfill
    
    \vspace{3em}
    
    \begin{subfigure}[t]{0.45\textwidth}
        \centering
        \begin{tikzpicture}[->, >=stealth, node distance=2cm]
        \draw[white] (-0.5,-0.5) rectangle (4.5,0.5);

        \node[draw] (a) at (0, 0) {$a_1$};
        \node[draw] (b) at (2, 0) {$a_2$};
        \node[draw] (c) at (4, 0) {$a_3$};

        \draw (a) to[loop above] node[left] {$f_{11}$} (a);
        \draw (a) -- (b) node[midway, above] {$f_{12}$};
        \draw (b) -- (c) node[midway, above] {$f_{23}$};

        \end{tikzpicture}
        \caption{Graph $G^A(u)$ on $A^+(u)$ with face set $F(u)=\{f_{11}, f_{12}, f_{23}\}$}
    \end{subfigure}\hfill
    \begin{subfigure}[t]{0.45\textwidth}
        \centering
        \begin{tikzpicture}[->, >=stealth, node distance=2cm]
        \draw[white] (-0.5,-0.5) rectangle (4.5,0.5);

        \node[draw] (a) at (0, 0) {$a_5$};
        \node[draw] (b) at (2, 0) {$a_4$};

        \draw (a) to[loop above] node[left] {$f_{55}$} (a);
        \draw (a) -- (b) node[midway, above] {$f_{54}$};
        \draw (b) to[loop above] node[left] {$f_{44}$} (b);

        \end{tikzpicture}
        \caption{Graph $G^A(v)$ on $A^+(v)$ with face set $F(v)=\{f_{44}, f_{54}, f_{55}\}$}
    \end{subfigure}\hfill
    \caption{a GRM multigraph $G$. In Fig.~(a): representation in one picture of the GRM, with arcs in full black and faces in dashed red. The other figures corresponds to graphs $G^V$ (Fig.~(b)) and $G^A$ respectively, the last graph being split into graph $G^A(u)$ (Fig.~(c)) and $G^A(v)$ (Fig.~(d)). }
    \label{fig:generalized_rotor}
\end{figure}
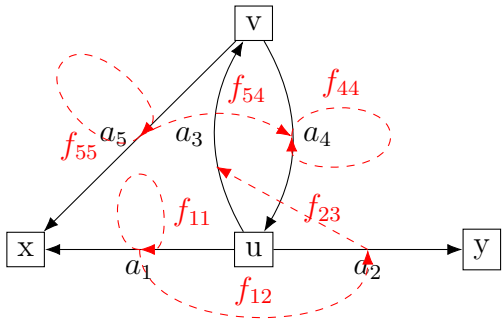
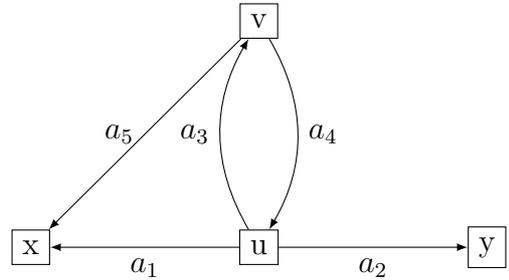
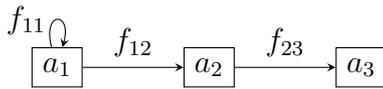
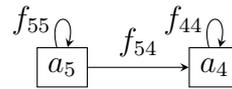

It can be noted that linear routing preserves degrees. In other words, if $(r,\sigma) \lineareq{\cL}{*} (r',\sigma')$, then $\deg^A(r) = \deg^A(r')$ and $\deg^V(\sigma) = \deg^V(\sigma')$. This means that the total algebraic number  of particles should be the same in every connected component of $G^V$, and that the sum of arcs should be the same in every mechanism $G^A(v)$ (or in each of the weakly connected components of the mechanism, if not connected).

A {\bf routing sequence} for a routing vector $\phi \in C_F$ is a finite sequence of faces $f_0, f_1, \dots, f_k$  such that $\phi = \sum_i f_i$. This sequence is legal for $(r,\sigma)$ if, when applying in order elementary routing steps along $f_0, f_1, \dots, f_k$, every step is legal.

For legal routings, we use notation akin to the case of free routing, namely $$(r,\sigma) \legalseq{\cL}{*} (r',\sigma')$$ if there is a legal routing sequence from 
$(r,\sigma)$ to $(r',\sigma')$, and 
$(r,\sigma) \legalseq{\cL}{\phi} (r',\sigma')$ if there is a legal sequence with routing vector exactly $\phi$.

\subsection{Cyclic GRM multigraphs}
\label{sec:cyclic_GRM}

If $G=(V,A,\head,\tail,\theta)$ is a rotor multigraph, we can associate to $G$ a GRM multigraph $(G^A,G^V)$ defined as follows:
\begin{itemize}
    \item $G^V=(V,A,\head,\tail)$
    \item $G^A=(A,F,\head^A,\tail^A)$ where $F=\{(a,\theta(a)) : a \in A\}$, and $\head^A((a,\theta(a)) = \theta(a)$, 
    $\tail^A((a,\theta(a)) = a$ for all $a \in A$.
\end{itemize}

A GRM multigraph built from a rotor multigraph like this is called a {\bf cyclic GRM multigraph}, since $G^A(v)$ is a directed cycle on the 'vertex' set $A^+(v)$ for every $v \in V \setminus S$ (notation $S$ will always refer to the sinks of $G^V$).
We say that a cyclic GRM multigraph is stopping, or strongly connected,
or any other property of rotor multigraphs, if the corresponding rotor multigraph satisfies the property.

In this context, a {\bf rotor configuration} is an element
$\rho \in C_A$ such that for any $v \in V \setminus S$,
there is unique $a \in A^+(v)$ such that $\rho_a = 1$,
and $\rho_a = 0$ for the other arcs in $A^+(v)$. 

In a cyclic GRM multigraph, if $\rho$ is a rotor configuration and 
$\sigma_v>0$, the linear rotor-routing along the face $(\rho(v), \theta(\rho(v))$ matches the standard definition of rotor-routing. Hence, \emph{the notion of legal routing in cyclic GRM multigraphs completely emulates standard rotor-routing as defined in Section~\ref{sec:standard}.}
In the following, we speak of cyclic GRM multigraphs instead of rotor multigraphs to avoid clashes in definitions of rotor-routing, as the settings of cyclic GRM multigraphs and rotor multigraphs are essentially the same, but the definitions of rotor-routing differ.

\subsection{Computing routing vectors}\label{sec:comput_routing_vect}

Given a routing vector $\phi \in C_F$, the matrix of $\cL$ allows us to compute the linear rotor-routing along $\phi$ in polynomial time (see Appendix~\ref{sec:matrix} for a detailed example). Conversely, if  $r,r' \in C_A$ and $\sigma, \sigma' \in C_V$ are given as input, one may seek to decide if a routing vector $\phi$ such that $(r',\sigma') = (r,\sigma) + \cL(\phi)$ exists, and if so, compute such a routing vector. As mentioned before, a necessary condition is that $\deg^V(\sigma) = \deg^V(\sigma')$ and $\deg^A(r) = \deg^A(r')$.

In what follows, the term 'polynomial' means polynomial in the size of $G$, together with $r, r' \in C_A$ and $\sigma, \sigma' \in C_V$ (encoded in binary).

\begin{proposition}
\thlabel{lem:routing_vector}
Let $r,r' \in C_A$ and $\sigma, \sigma' \in C_V$. There is a polynomial time algorithm that decides whether $(r,\sigma) \lineareq{\cL}{*} (r',\sigma')$, and if so, returns a routing vector $\phi \in C_F$ such that $(r',\sigma') = (r,\sigma) + \cL(\phi)$ and $\phi$ has polynomial size. 
\end{proposition}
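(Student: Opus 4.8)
The plan is to reduce the question to a linear-algebra computation over $\Z$ plus a free-routing computation in the graph $G^A$. First I would observe that $\cL = \partial^A \times (\partial^V \circ \tail^A)$, so a routing vector $\phi \in C_F$ with $(r',\sigma') = (r,\sigma) + \cL(\phi)$ must satisfy the two coupled requirements $\partial^A(\phi) = r' - r$ and $\partial^V(\tail^A(\phi)) = \sigma' - \sigma$. The key structural remark is that $G^A$ is a disjoint union of the mechanism graphs $G^A(v)$ over $v \in V \setminus S$, so $\phi$ decomposes as $\phi = \sum_{v} \phi_v$ with $\phi_v \in C_{F(v)}$, and the first requirement becomes one boundary equation $\partial^A(\phi_v) = (r'-r)|_{A^+(v)}$ in each piece independently. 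By \thref{prop:image_bord} applied to $G^A(v)$, such a $\phi_v$ exists if and only if $\deg^A_v(r) = \deg^A_v(r')$ for the relevant weak components of $G^A(v)$; this is the necessary degree condition already noted, and it is checkable in polynomial time. When it holds, the construction after \thref{prop:image_bord} (choosing a basepoint in each weak component of $G^A$ and a boundary section $s^A$) gives an explicit particular solution $\phi^0 = s^A(r' - r)$ of polynomial size, since $s^A$ is a fixed integer matrix whose entries are bounded and $r'-r$ is given in binary.

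The second step handles the $\sigma$-coordinate. Any two solutions of $\partial^A(\phi) = r'-r$ differ by an element of the cycle space $Z_F := \ker \partial^A$ of $G^A$. So I would write the general solution as $\phi = \phi^0 + z$ with $z \in Z_F$, and impose the remaining constraint: $\partial^V(\tail^A(\phi^0 + z)) = \sigma' - \sigma$, i.e. $\partial^V(\tail^A(z)) = (\sigma' - \sigma) - \partial^V(\tail^A(\phi^0))$, a fixed target in $C_V$ that lies in the image of $\partial^V$ whenever $\deg^V(\sigma) = \deg^V(\sigma')$ (one checks this image condition is preserved: $\tau := \partial^V \circ \tail^A$ sends $Z_F$ into $B_V$, and the right-hand side has $\deg^V$ zero because $\deg^V(\sigma)=\deg^V(\sigma')$ and $\partial^V(\tail^A(\phi^0))$ is a boundary). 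Now the composite homomorphism $\tau \circ (\text{inclusion}) : Z_F \to C_V$ is represented by an explicit integer matrix $M$ (obtained by composing a basis matrix of $Z_F$ — e.g. fundamental cycles of $G^A$ relative to a spanning forest — with the matrices of $\tail^A$ and $\partial^V$), and we must solve $M x = w$ over $\Z$ for the known integer vector $w$. This is a standard integer linear system: compute the Hermite (or Smith) normal form of $M$ in polynomial time, decide solvability, and read off a solution $x$ of polynomially bounded bit-size; then $z$ is the corresponding cycle and $\phi = \phi^0 + z$ is the desired routing vector. Deciding $(r,\sigma) \lineareq{\cL}{*}(r',\sigma')$ amounts to: check $\deg^A(r)=\deg^A(r')$ and $\deg^V(\sigma)=\deg^V(\sigma')$, then check solvability of $Mx = w$; conversely if no $x$ exists, no $\phi$ exists.

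I expect the main obstacle to be the bookkeeping around \emph{polynomial size of the output}: one must argue that a basis of $Z_F$ can be chosen with small (polynomial-size, indeed $0,\pm 1$) entries — which the fundamental-cycle basis provides — and that the Hermite-normal-form solver returns an $x$ whose entries are bounded by a polynomial in the input bit-length, which is a classical but nontrivial fact about integer linear algebra (e.g. via the bounds on entries of unimodular transforming matrices and of HNF entries in terms of subdeterminants). Everything else — the decomposition over the components $G^A(v)$, the existence criterion via \thref{prop:image_bord}, and the reduction of the coupled system to a single integer linear system — is routine once the right objects are set up. An alternative, perhaps cleaner, route that avoids explicitly parametrizing $Z_F$ is to solve the single integer system directly: stack the matrix of $\partial^A$ on top of the matrix of $\tau = \partial^V \circ \tail^A$ to get a matrix $N : C_F \to C_A \times C_V$ whose image is exactly $\Ima(\cL)$, and solve $N\phi = (r'-r,\sigma'-\sigma)$ by Hermite normal form; this simultaneously decides membership in $\Ima(\cL)$ and produces a polynomially bounded $\phi$, and I would likely present the proof this way for brevity.
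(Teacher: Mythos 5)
Your proposal is correct, and the ``alternative, cleaner route'' you end with --- form the matrix of $\cL$ (the stack of $\partial^A$ and $\partial^V\circ\tail^A$) and solve the resulting linear Diophantine system $N\phi=(r'-r,\sigma'-\sigma)$ by a normal-form computation --- is exactly the paper's proof, which uses the Smith decomposition of the matrix of $\cL$ together with the fact that it is computable in polynomial time. Your longer two-step decomposition (boundary section on each $G^A(v)$, then parametrizing the remaining freedom by $Z_F=\ker\partial^A$) is also valid but is an unnecessary detour; the paper goes directly to the single integer linear system.
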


\begin{proof}
 This amounts to computing an integral solution of a system of linear diophantine equations, if it exists. This can be achieved by using the Smith decomposition of the matrix of $\cL$ (see Appendix~\ref{sec:smith} for details on how to use the decomposition in order to solve the system). It remains to compute effectively
 the decomposition of the matrix, which can be done in polynomial
 time~\cite{kannan1979polynomial}.
\end{proof}

\section{Legal rotor-routings in GRM multigraphs with specified routing vector}
\label{sec:legal_grm}

We explore the concept of legal rotor-routing within GRM multigraphs, as defined in the preceding section, and characterize the subset of linear routings that correspond to legal sequences, following an approach similar to that in Section \ref{sec:legal_routing}, but specifically adapted for linear rotor-routing in GRM multigraphs. 

In a GRM multigraph $(G^A, G^V)$, every legal rotor-routing induces legal (free) routings in both $G^A$ and $G^V$, but the converse generally fails: compatibility requires a specific interaction between the two components.
In particular, certain special faces that appear in the free routing in $G^A$, called terminal elements, must correctly guide the free routing in $G^V$.
This leads to a complete characterization of legal GRM routings with a prescribed routing vector~$\phi$ (Section~\ref{sec:legal_routing_gen_rotor}).

In the cyclic case, terminal elements in $G^A$ admit an explicit description in terms of transitory arcs, yielding a simplified legality criterion for cyclic GRMs.
This provides in turn a clean legality criterion for standard rotor-routing (Section~\ref{sec:legal_routings_rotor}).

The main results of this section are \thref{thm:routing_rotor_mecha} and \thref{thm:charac_legal_rotor}.

\subsection{General case}
\label{sec:legal_routing_gen_rotor}

We consider a GRM multigraph $(G^A, G^V)$ as defined in Sec.~\ref{sec:GRM}, with the same notations.

With our definition (see Sec. \ref{sec:deflegalroutingGRM}), a legal routing sequence in a GRM moves legally and simultaneously 'arc particles' in $G^A$ and 'vertex particles' in $G^V$. It is easy to check with our definitions that if $(f_i)_{0 \leq i \leq k-1}$, with $f_i \in F$ for every $0 \leq i \leq k-1$, is a legal routing sequence for $(r, \sigma)$ in the GRM $(G^A, G^V)$ (in the sense of legal routing in GRMs), then
\begin{itemize}
    \item $(f_i)_{0 \leq i \leq k-1}$ is formally also a legal routing sequence for $r$ in $G^A$ (in the sense of legal free routing, see Sec.~\ref{sec:legal_routing});
    \item the coupled routing of 'vertex particles', given by the sequence $(\tail^A(f_i))_{0 \leq i \leq k-1}$ (where $\tail^A(f_i) \in A$ for all $i$), is a  legal routing sequence for $\sigma$ in $G^V$ (also in the sense of free legal routing).
\end{itemize}
In particular, 
\[ (r,\sigma) \legalseq{\cL}{\phi} (r',\sigma') \ \Longrightarrow \ r \legalseq{\partial^A}{\phi} r' \text{ and } \ \sigma \legalseq{\partial^V}{\tail^A(\phi)} \sigma'.\]
 
The opposite direction is more tricky:  it is not enough to 
suppose that there is a legal sequence in $G^A$ for $r$ with routing vector $\phi$, and a legal sequence in $G^V$ for $\sigma$ with vector $\tail^A(\phi)$, to obtain a legal sequence for $(r,\sigma)$ in the GRM. A counterexample is given in Figure~\ref{fig:unlegal_routing_coupled}.

\begin{figure}[!htbp]
    \centering
    \begin{subfigure}{0.45\textwidth}
        \centering
        \begin{tikzpicture}[->, -{Latex[length=2mm]}, node distance=2cm]

            \node[draw] (u) at (0, 0) {u};
            \node[draw] (v) at (0, 2) {v};
            \draw[->] (u) -- node[midway, right] {$a_1$} (v);
            \draw[->, loop left, out=150, in=210, looseness=8] (u) to node[below] {$a_2$} (u);
    
            \draw[red, dashed, bend right] (0, 1) to node[midway,  above] {$f_{12}$}  (-0.5, 0.3);
            \draw[red, dashed, bend right=30] (-0.9, 0) to[out=90, in=90] node[midway, above] {$f_{22}$} (-2.5, 0) to[out=90, in=90]    (-0.9, 0);
        \end{tikzpicture}
        \caption{ Names of vertices, arcs and faces.}
    \end{subfigure}
    \begin{subfigure}[t]{0.45\textwidth}
        \centering
        \begin{tikzpicture}[->, -{Latex[length=2mm]}, node distance=2cm]
        
        \node[draw] (u) at (0, 0) {$1 \rightarrow 0$};
        \node[draw] (v) at (0, 2) {$0 \rightarrow 1$};


        \draw[->] (u) -- node[midway, right] {$1 \rightarrow 0$} (v);
        \draw[->, loop left, out=150, in=210, looseness=8] (u) to node[left] {$0 \rightarrow 1$} (u);
        
        \end{tikzpicture}
        \vspace{-1.7em}
        \caption{Configurations $(r, \sigma)$ and $(r', \sigma')$.}
    \end{subfigure}
    \caption{(a) A GRM $(G^A,G^V)$ with two vertices $u,v$ with $v$ a sink, two arcs $a_1,a_2$ and two faces $f_{12},f_{22}$ both in $F(u)$, with $f_{12}$ from $a_1$ to $a_2$, and $f_{22}$ a loop on $a_2$. (b) Starting and ending configurations $(r,\sigma)$ and $(r',\sigma')$. \\
    One can check that the routing vector $f_{12} + f_{22}$ satisfies
    $\mbox{$(r,\sigma)\lineareq{\cL}{f_{12} + f_{22}} (r',\sigma')$}$. In $G^A$, we have $r \legalseq{\partial^A}{f_{12} + f_{22}} r'$, and the only legal sequence for this is $(f_{12},f_{22})$. 
    We note that $\tail^A(f_{12} + f_{22}) = a_1 + a_2$, and that in $G^V$ we have $\sigma \legalseq{\partial^V}{a_1 + a_2} \sigma'$ but that the only legal sequence is $(a_2,a_1)$, which does not correspond to $(f_{12},f_{22})$. As a consequence,
    there is no legal sequence from $(r, \sigma)$ to $(r', \sigma')$ with routing vector $f_{12} + f_{22}$ in the sense of GRMs. }
    \label{fig:unlegal_routing_coupled}
\end{figure}

To ensure compatibility between legal sequences in $G^A$ and $G^V$, we must note an important necessary condition. Suppose once again that $(f_i)_{0 \leq i \leq k-1}$ is a legal routing sequence for $(r, \sigma)$ in $(G^A, G^V)$ with vector $\phi$. Let $a_i = \tail^A(f_i)$ and $\alpha=\tail^A(\phi)$;
then $(a_i)_{0 \leq i \leq k-1}$ a legal sequence in $G^V$ for $\sigma$ as we just established. 
For any transitory vertex $v \in \trans(\alpha, \sigma')$ (in the sense of legal free routing in $G^V$, see  Section~\ref{sec:transitory}), let $i_v$ be the last index such that $v$ is the tail of $v = \tail^V(a_i)$, i.e. the last time that a particle moves out of $v$ during the sequence. 
By \thref{lemma:routing2tree}, the set of the arcs $a_{i_v}$ for all transitory vertices in 
$\trans(\alpha, \sigma')$ is a guiding forest for $\sigma \legalseq{\partial^V}{\alpha} \sigma'$. At the same time, for such a transitory $v$, the face $f_{i_v}$ must be the last element of $F(v)$ appearing in the sequence $(f_i)$. 

Let us define a {\bf terminal element} of the free routing $r \legalseq{\partial^A}{\phi} r'$ in $G^A$ as a face $f \in \phi$ such that there is a legal sequence ending by $f$ for $r \legalseq{\partial^A}{\phi} r'$. We just established that for every transitory vertex $v \in \trans(\alpha, \sigma')$, the face $f_{i_v}$ is the last element of $F(v)$ appearing in the sequence $(f_i)$ ; since $F(v)$ is a connected component of $G^A$, we could move $f_{i_v}$ to the end of the sequence, and the routing would still be legal. Hence, $f_{i_v}$ is a terminal element of $r \legalseq{\partial^A}{\phi} r'$. This implies in particular the following: {\it if $\cal T$ is the set of terminal elements for $r \legalseq{\partial^A}{\phi} r'$, then the set of arcs $\{\tail^A(f), f \in \cal T \}$ is guiding for $\sigma \lineareq{\partial^V}{\alpha} \sigma'$.}

This necessary condition turns out to be sufficient, and gives a criteria for the existence of a legal routing in a GRM with a specified routing vector. 

\begin{theorem} \thlabel{thm:routing_rotor_mecha}
    Consider a GRM multigraph $(G^A,G^V)$. Let $r,r' \in C_A$ and $\sigma,\sigma' \in C_V$. Let 
    $\phi \in C^+_F$ with $\alpha = \tail^A(\phi)$.
    Then $(r,\sigma) \legalseq{\cL}{\phi} (r',\sigma')$ if and only if:
    \begin{enumerate}[(i)]
        \item $r \legalseq{\partial^A}{\phi} r'$ in the sense of free routing;
        \item $\sigma \legalseq{\partial^V}{\alpha} \sigma'$
        in the sense of free routing;
        \item if $\cal T$ is the set of terminal elements of $r \legalseq{\partial^A}{\phi} r'$, then $\{\tail^A(f) : f \in \cal T \}$ is guiding for $\sigma \lineareq{\partial^V}{\alpha} \sigma'$.
    \end{enumerate}
\end{theorem}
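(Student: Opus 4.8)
The plan is to prove the two implications separately, in each case reducing to the free‑routing machinery of Section~\ref{sec:legal_routing} applied to $G^A$ and to $G^V$, and then controlling the coupling through the notion of trace.

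For \textbf{necessity}, suppose $(r,\sigma)\legalseq{\cL}{\phi}(r',\sigma')$ via a legal routing sequence $S=(f_0,\dots,f_{k-1})$. Projecting $S$ onto faces gives a legal free routing in $G^A$ with vector $\phi$, and applying $\tail^A$ termwise gives a legal free routing in $G^V$ with vector $\alpha$, which is $(i)$ and $(ii)$ (as already noted before the statement). For $(iii)$ the step I would isolate is the sub‑lemma: \emph{for every vertex $v$ active in $\alpha$, the last face $f_i$ of $S$ lying in $F(v)$ is a trace of $r\legalseq{\partial^A}{\phi}r'$}. Indeed, since $f_i\in F(v)$ both of its endpoints lie in $A^+(v)$, so $\partial^A(f_i)$ is supported on $A^+(v)$, whereas no face occurring after $f_i$ in $S$ has its $G^A$‑tail in $A^+(v)$ (such a face would be in $F(v)$); hence deleting $f_i$ leaves the $G^A$‑configuration before each subsequent face unchanged outside $A^+(v)$, so $S$ minus $f_i$ is still a legal free routing in $G^A$, with vector $\phi-f_i$, from $r$ to $r'-\partial^A(f_i)$, and $f_i$ can be appended at the end (the relevant $G^A$‑tail still carries a chip, the loop case being checked directly). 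By the remark following the definition of traces, $f_i$ is then a trace. Applying \thref{lemma:routing2tree} to the $G^V$‑projection of $S$, the last arc with tail $v$, over transitory $v$, forms a guiding tree for $\sigma\lineareq{\partial^V}{\alpha}\sigma'$; that arc is precisely $\tail^A(f_i)$ for the last face $f_i$ of $S$ in $F(v)$, hence an element of $\{\tail^A(f):f\in T\}$, and a set containing a guiding set is guiding, which gives $(iii)$.

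For \textbf{sufficiency}, assuming $(i)$, $(ii)$, $(iii)$, I would first use $(iii)$ to fix a guiding tree $T_V$ for $\sigma\lineareq{\partial^V}{\alpha}\sigma'$ with every $T_V(v)=\tail^A(\hat f_v)$ for some trace $\hat f_v$ of $r\legalseq{\partial^A}{\phi}r'$. Since both legal free routing and the trace property decompose over the weak components of $G^A$ (\thref{lem:weak_connect_trace}) and those components refine $\{F(v)\}$, condition $(i)$ together with the fact that each $\hat f_v$ is a trace of the routing inside its block $G^A(v)$ produces a legal free routing in $G^A$ with vector $\phi$ in which, for every transitory $v$, $\hat f_v$ is the \emph{last} face used in $F(v)$ (route $\phi-\sum_v\hat f_v$ block by block, then route the $\hat f_v$'s). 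Extracting the guiding tree $T_A$ from this sequence via \thref{lemma:routing2tree} makes $T_A$ \emph{compatible} with $T_V$, in the sense that $T_A(T_V(v))=\hat f_v$ whenever $T_V(v)$ is transitory for the $G^A$‑routing. I would then build a legal routing sequence for $\phi$ greedily, maintaining that its $G^A$‑projection is an $r$‑bounded legal free routing guided by $T_A$ and its $G^V$‑projection an $\alpha$‑bounded legal free routing guided by $T_V$ (in the sense of \thref{lemma:CNS_routing2}); the empty sequence satisfies this, and termination is automatic since the routing vector strictly grows and stays $\le\phi$, ending at $(r',\sigma')$ by $(i)$ and $(ii)$.

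The \textbf{main obstacle} is the greedy step: given a partial sequence with remaining vector $\psi\ne 0$ and current configuration $(r^s,\sigma^s)$, one must exhibit a face $g\in\psi$ whose routing is legal and preserves both invariants. Decoupled, each of $G^A$ and $G^V$ always offers a guided legal move (\thref{lemma:CNS_routing2}), but the two chosen faces need not agree — exactly the phenomenon of Figure~\ref{fig:unlegal_routing_coupled}. The argument I would run is a contradiction: if no such $g$ existed, use \thref{lemma:CNS_routing2} for $G^V$ to get an active vertex $v^{\ast}$ of the remaining $G^V$‑vector carrying a chip, then \thref{lemma:CNS_routing2} applied to the block $G^A(v^{\ast})$ to obtain an arc $a\in A^+(v^{\ast})$, active in the remaining $G^A$‑vector, with both $r^s_a\ge 1$ and $\sigma^s_{\tail^V(a)}\ge 1$; then, inspecting the cases of \thref{lemma:CNS_routing2} for $G^V$ at $v=\tail^V(a)$ and using the compatibility of $T_A$ with $T_V$ — so that whenever the $T_V$‑rule forbids emitting along $a=T_V(v)$, the block $G^A(v)$ still holds a chip on some $a'\ne T_V(v)$ — one produces an arc along which emitting is simultaneously legal in both graphs and admissible for both guiding trees, above which an appropriate face extends the sequence. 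This step, where condition $(iii)$ is genuinely used (through the trace‑based choice of $T_V$ and the resulting compatibility of $T_A$), is the technical core of the proof.
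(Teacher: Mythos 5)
Your necessity argument is essentially the paper's: project the legal sequence onto $G^A$ and $G^V$ for $(i)$ and $(ii)$, observe that the last face of the sequence lying in $F(v)$ is a trace (the paper gets this from \thref{lem:weak_connect_trace} applied to the restriction $s_{|v}$, you get it by commuting that face to the end; both work), and invoke \thref{lemma:routing2tree}. That direction is fine.

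The sufficiency direction has a genuine gap, located exactly where you place the ``technical core''. Your inductive invariant --- both projections are bounded, legal, and \emph{guided} by $T_A$ and $T_V$ respectively --- is too weak to be maintainable, and the auxiliary claim you use to resolve the coupling conflict (``whenever the $T_V$-rule forbids emitting along $T_V(v)$, the block $G^A(v)$ still holds a chip on some $a'\neq T_V(v)$'') is false. Concretely, take $G^V$ with arcs $a:v\to u$ and $a':v\to w$, and block $G^A(v)$ with faces $g_1:a\to a'$, $g_2:a'\to a$ and a loop $g_3:a\to a$; let $r=a$, $\sigma=3v$, $\phi=g_1+g_2+g_3$, so $\alpha=2a+a'$, $r'=a$, $\sigma'=2u+w$. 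All three conditions of the theorem hold (the block traces are $g_2$ and $g_3$, with tails $a'$ and $a$, both guiding), and $(g_1,g_2,g_3)$ is a legal coupled routing. But if you choose $\hat f_v=g_3$, i.e.\ $T_V(v)=a$, the one-step prefix $(g_3)$ satisfies your entire invariant, and from that state no face can be appended while preserving it: $g_2$ is not $G^A$-legal ($r^s_{a'}=0$), and $g_1$ consumes the last copy of $T_V(v)=a$ while $v$ is still active in the remainder, destroying $G^V$-guidedness. (The symmetric choice $\hat f_v=g_2$ gets stuck after the prefix $(g_1)$.) The underlying reason is that $T_A$ only constrains the last face at arcs that are \emph{transitory for the block routing}, whereas $T_V(v)$ need not be such an arc; so $G^A$-guidedness does not force $\hat f_v$ to come last in $F(v)$, and nothing in your invariant prevents it (or the last copy of $T_V(v)$) from being spent too early. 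The paper closes this hole by strengthening the invariant: it fixes in advance, for each $v$, a complete legal block order $s(v)$ ending with the designated trace $f(v)$, and requires the partial sequence restricted to $F(v)$ to be a \emph{prefix of} $s(v)$. Then the next face at any chosen vertex is forced, its $G^A$-legality is automatic, the last arc emitted at $v$ is automatically $T_V(v)$, and the only lemma needed at each step is \thref{lemma:CNS_routing2} on the $G^V$-side to find a vertex carrying a particle with pending faces. You construct exactly these block orders at the start of your argument but then discard them, keeping only the trees; retaining them as part of the induction hypothesis is what makes the proof go through.
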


Note that $(i)$ and $(ii)$ are characterized by~\thref{prop:CNS_routing_vector}. Before proving the result, let us note that $(iii)$ can also be checked with repeated use of~\thref{prop:CNS_routing_vector}. Indeed, if $r \legalseq{\partial^A}{\phi} r'$, let $f \in \phi$ and consider two cases:
\begin{itemize}
    \item if $f$ is a loop, then $f$ is a terminal element of $r \legalseq{\partial^A}{\phi} r'$ if and only if $r'_f \geq 1$;
    \item otherwise, $f$ is a terminal element of $r \legalseq{\partial^A}{\phi} r'$ if and only if $r \legalseq{\partial^A}{\phi-f} r'-\partial^A(f)$.
\end{itemize}

It follows that these conditions can be checked in polynomial time.

\begin{proof}[Proof of \thref{thm:routing_rotor_mecha}]
    Let us make the convention that a sequence of elements of $F$ (resp. of $A$) is said legal for $r \in C_A$, (resp. for $\sigma \in C_V$) if it is
    legal in the sense of free routing in $G^A$ (resp. in $G^V)$, as per Section~\ref{sec:legal_routing};
    and that a sequence of elements of $F$ is said legal for $(r,\sigma)$, 
    if it is legal in the sense of linear rotor-routing in $(G^A,G^V)$
    as described in Section~\ref{sec:linear_rotor_routing}. 
    
    Let us introduce some notation for the proof. Since
    $G^A$ is the union of the weakly connected components $G^A(v)$
    for $v \in V \setminus S$, we decompose any routing vector $\phi$ in $G^A$ as  $\phi = \sum_{v \in V \setminus S} \phi_{|v}$,
    where $\phi_{|v} \in C_{F(v)}$ . Define also
    the restrictions of $r, r' \in C_A$ to $A^+(v)$ by
    $r_{|v}$, $r'_{|v}$ so that $r = \sum_v r_{|v}$ and $r' = \sum_v r'_{|v}$.
    If $r \lineareq{\partial^A}{\phi} r'$
    then for every $v$ we have  $r_{|v} \lineareq{\partial^A}{\phi_{|v}} r'_{|v}$.
    For a routing sequence $s=(f_i)_i$, we can as well decompose
    the sequence as subsequences $s_{|v}$ for each non sink vertex $v$,
    where faces appearing in $s_{|v}$ belong to $F(v)$; then
    clearly $s$ is legal for $r$ if and only if all $s_{|v}$ are legal for $r_{|v}$.

    We now show the necessity of  conditions $(i)$-$(iii)$: suppose 
    the existence of the legal routing sequence $(f_i)_{0 \leq i \leq k-1}$ for $(r, \sigma)$. We denote $a_i = \tail^A(f_i)$ for $0 \leq i \leq k-1$.
    By definition of legality in GRM multigraphs,
    $(f_i)_{0 \leq i \leq k-1}$ 
    and $(a_i)_{0 \leq i \leq k-1}$ are respectively legal routing sequences for $r \legalseq{\partial^A}{\phi} r'$  and $\sigma \legalseq{\partial^V}{\alpha} \sigma'$, which proves $(i)$ and $(ii)$. Then, by~\thref{lemma:routing2tree}, for any
    transitory vertex $v$ of $\sigma \lineareq{\partial^V}{\alpha} \sigma'$,     
    if $t(v)$ denotes the last arc with tail $v$ appearing in the sequence $(a_i)_{0 \leq i \leq k-1}$, then
    the set of all $t(v)$
    forms a guiding forest for this routing.
    For such a transitory vertex $v$, 
    let $i(v)$ be the last index of $t(v)$ in that sequence. 
    Then $f_{i(v)}$  is the last element
    of $F(v)$ appearing in $(f_i)_i$,
    so it is a terminal element of $r_{|v} \legalseq{\partial^A}{\phi_{|v}} r'_{|v}$.
    Since $G^A(v)$ is disjoint from the rest of the graph, we could 
    move routings in $G^A(v)$ to the end of the sequence, hence $F(v)$
    is also
    a terminal element of $r \legalseq{\partial^A}{\phi} r'$, so that
    $(iii)$ is true.

    Conversely, suppose that conditions $(i)$-$(iii)$ are satisfied. 
    By $(iii)$, we can suppose for every $v \in \trans(\alpha,\sigma')$, that there is $f(v) \in F(v)$, so that if  $t(v)=\tail^A(f(v))$ :
    \begin{itemize}
        \item $f(v)$ is a terminal element of $r \legalseq{\partial^A}{\phi} r'$, and
        \item $\{t(v)\}_v$ is a guiding forest for $\sigma \lineareq{\partial^V}{\alpha} \sigma'$.
    \end{itemize}

    For every  vertex $v \in V \setminus S$, we can consider a (possibly empty) legal routing sequence $s(v) = (f^v_0, \dots, f^v_{k_v })$ for 
    $r_{|v} \legalseq{\partial^A}{\phi_{|v}} r'_{|v}$, such that, if $v \in \trans(\alpha, \sigma')$ then $f^v_{k_v } = f(v)$.
    This sequence describes an ordering of all routing steps
    that must be done in $F(v)$ for every $v$.
    Our strategy is to construct recursively a legal routing sequence 
     $s = (f_i)_{0\leq i \leq k}$ for 
     $(r,\sigma) \legalseq{\cL}{\phi} (r',\sigma')$ that coincides with $s(v)$ for
     every active vertex $v$, i.e. $s_{|v} = s(v)$. This will ensure that
     the routing of particles will be guided by $t$ and
     that the sequence can be extended.

    Suppose that $\ell \geq 0$ and let $s^\ell = (f_i)_{0\leq i \leq \ell-1}$ is a  routing sequence with routing vector $\phi^\ell$, such that:
    \begin{itemize}
        \item $s^\ell$ is legal for $(r,\sigma) \legalseq{\cL}{\phi^\ell} (r^\ell,\sigma^\ell)$ ;
        \item $\phi^\ell \leq \phi$ ;
        \item $s^\ell_{|v}$ is a prefix of $s(v)$ for every vertex $v$.
    \end{itemize}

    Suppose that $\phi^\ell \neq \phi$, and let $\alpha^\ell = \tail^A(\phi^\ell)$. 
    By construction of $s(v)$,
    the sequence $(a_i)_{0\leq i \leq \ell-1}$ with $a_i = \tail^A(f_i)$ is  guided by $(\sigma, \alpha, t)$ and is $\alpha$-bounded. Then, by \thref{lemma:CNS_routing2}, there is an active vertex $v$ of $\alpha-\alpha^\ell$, such that $\sigma^\ell_v \geq 1$. Then we choose the face $f \in F(v)$ with $f \in \phi-\phi^\ell$ such that $ s^\ell_{|v}$ appended with $f$  is still a prefix of $s(v)$. Then, the extended sequence $s^{\ell + 1}= (f_0, \dots, f_{\ell -1}, f)$ satisfies 
        \begin{itemize}
        \item $s^{\ell+1}$ is legal for $(r,\sigma)$ since by construction $\sigma^\ell_v \geq 1$ and $s^{\ell+1}_{|v}$
        is a prefix of a legal sequence for $r_{|v} \legalseq{\partial^A}{\phi_{|v}} r'_{|v}$;
        \item $\phi^{\ell+1} \leq \phi$ ;
        \item $s^{\ell+1}_{|v}$ is a prefix of $s(v)$ for every vertex $v$ by construction also.
    \end{itemize}

    Finally, starting from the empty sequence, we obtain recursively a sequence $s^\ell$ with $\phi^\ell=\phi$
    satisfying the three properties above, hence a legal sequence
    for $(r,\sigma) \legalseq{\cL}{\phi} (r',\sigma')$.

\end{proof}

This theorem is illustrated in an example of the GRM multigraph in Figure~\ref{fig:thm_legality_generalized_rotor}.  An example where conditions $(i)$ and $(ii)$ of the theorem are satisfied, but not condition $(iii)$ is shown in Figure~\ref{fig:unlegal_routing_coupled}.

\begin{figure}[!htbp]
    \centering
    \begin{subfigure}[t]{0.45\textwidth}
        \centering
   \begin{tikzpicture}[->, -{Latex[length=2mm]}, node distance=2cm]
        
        \node[draw] (v) at (0, 3) {v};
        \node[draw] (u) at (0, 0) {u};
        \node[draw] (x) at (-3, 0) {x};
        \node[draw] (y) at (3, 0) {y};

        \draw (u) -- (x) node[midway, below] {$a_1$};
        \draw (u) -- (y) node[midway, below] {$a_2$};
        \draw[bend left] (u) to node[midway, left] {$a_3$} (v);
        \draw[bend left] (v) to node[midway, right] {$a_4$} (u);
        \draw[bend right] (v) -- (x) node[midway, left] {$a_5$};

        \draw[red, dashed, bend left] (-1.5, 1.5) to node[midway,  above] {$f_{54}$} (0.5, 1.5) ;
        \draw[red, dashed, bend right=30] (0.5, 1.5) to[out=90, in=90] node[midway, above] {$f_{44}$} (1.8, 1.5) to[out=90, in=90]   (0.5, 1.5);
        \draw[red, dashed, bend right=30] (-1.5, 1.5) to[out=90, in=90] node[midway, below] {$f_{55}$} (-2.5, 2.5) to[out=90, in=90] (-1.5, 1.5);
        \draw[red, dashed, bend right=30] (-1.5, 0) to[out=90, in=90] (-1.5, 1) to[out=90, in=90]  node[midway, right] {$f_{11}$} (-1.5, 0);
        \draw[red, dashed, bend right=30] (-1.5, 0) to[out=-90, in=-90] node[midway, above] {$f_{12}$} (1.5, 0);
        \draw[red, dashed] (1.5, 0) to node[midway, right] {$f_{23}$} (-0.5, 1.1);

    \end{tikzpicture} 
        
        \caption{Names of vertices, arcs and faces.}
    \end{subfigure}
    \begin{subfigure}[t]{0.45\textwidth}
        \centering
        \begin{tikzpicture}[->, -{Latex[length=2mm]}, node distance=2cm]
        \draw[white] (-3.5,-1.2) rectangle (2,3.5);
        
        \node[draw] (v) at (0, 3) {$3 \rightarrow 0$};
        \node[draw] (u) at (0, 0) {$3 \rightarrow 0$};
        \node[draw] (x) at (-3, 0) {$0 \rightarrow 4$};
        \node[draw] (y) at (3, 0) {$0 \rightarrow 1$};

        \draw (u) -- (x) node[midway, below] {$1 \rightarrow 0$};
        \draw (u) -- (y) node[midway, below] {$0 \rightarrow 0$};
        \draw[bend left] (u) to node[midway, below] {$0 \rightarrow 1$} (v);
        \draw[bend left] (v) to node[midway, right] {$0 \rightarrow 1$} (u);
        \draw[bend right] (v) -- (x) node[midway, left] {$1 \rightarrow 0$};

        \end{tikzpicture}
        \vspace{-1em}
        \caption{Configurations $(r, \sigma)$ and $(r', \sigma')$ on arcs and vertices.}
    \end{subfigure}\hfill
    \caption{(a) Same GRM multigraph as in Fig.~\ref{fig:generalized_rotor}. (b) Starting and ending configurations $(r,\sigma)$ and $(r',\sigma')$ of arcs and vertices. 
    Consider $\phi = f_{11}+f_{12}+f_{23}+f_{44} + f_{54}+f_{55}$ and $\alpha = \tail^A(\phi) = 2 a_1 + a_2 + a_4 + 2 a_5$, then $(r',\sigma') = (r,\sigma) + \cL(\phi)$. Clearly \mbox{$\sigma  \lineareq{\partial^V}{\alpha} \sigma'$} and $r \lineareq{\partial^A}{\phi} r'$ admit legal routing sequences with routing vectors $\alpha$ and $\phi$ respectively. The set of terminal elements of $r  \legalseq{\partial^V}{\phi} r'$ is $\{f_{23}, f_{44}\}$. Transitory vertices for the routing $\sigma \lineareq{\partial^V}{\alpha} \sigma'$ are $u$ and $v$ and $\tail^A(f_{23}) + \tail^A(f_{44}) = a_2 + a_4$ which is guiding for this routing. Hence there is a legal routing sequence with routing vector $\phi$. }
    \label{fig:thm_legality_generalized_rotor}
\end{figure}

\subsection{Cyclic case}
\label{sec:legal_routings_rotor}

In this section, we characterize the routing vectors that admit a legal routing sequence in a cyclic GRM multigraph, which provides a condition easier to check than the general one in \thref{thm:routing_rotor_mecha}. 
We recall that a cyclic GRM multigraph is equivalent to a standard rotor-routing graph, except that  it allows routing with any arc configuration (see Section~\ref{sec:cyclic_GRM}).

We begin with a useful lemma that characterizes terminal elements in the legal free routing of arcs in the multigraph $G^A$. 

\begin{lemma}
    \thlabel{lem:terminal_condition}
    Suppose that $(G^A, G^V)$ is a cyclic GRM multigraph.
    Consider a legal free routing $r \legalseq{\partial^A}{\phi} r'$ in $G^A$. Let $f_0 \in \phi$,
    and let $a_0=\tail^A(f_0)$ so that $\head^A(f_0)=\theta(a_0)$.
    Then $f_0$ is a terminal element of this routing if and only if $\theta(a_0) \not\in \trans(\phi,r')$, i.e. if $r'_{\theta(a_0)} \geq 1$, or
    there is no $f \in F$ with $\phi_{f} \geq 1$ and $\tail^A(f) = \theta(a_0)$.
\end{lemma}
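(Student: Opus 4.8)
The plan is to prove \thref{lem:trace_condition} by unwinding the definitions and applying the general trace characterization available for free routing, specialized to the cyclic structure of $G^A$. Recall from the remarks after the definition of traces that for a non-loop arc $f_0 \in \phi$, being a trace of $r \legalseq{\partial^A}{\phi} r'$ is equivalent to $r \legalseq{\partial^A}{\phi - f_0} (r' - \partial^A(f_0))$ together with $(r' - \partial^A(f_0))_{a_0} \geq 1$, i.e. $r'_{a_0} \geq 0$ after one step has been "undone" — but more usefully, I would characterize traces via \thref{prop:CNS_routing_vector} and \thref{lemma:routing2tree}: an arc $f_0 \in \phi$ with $a_0 = \tail^A(f_0)$ is a trace precisely when we can complete a legal routing sequence for $\phi$ that finishes with $f_0$. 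First I would reduce to the weak connected component $G^A(v)$ containing $f_0$, where $v$ is the vertex with $a_0 \in A^+(v)$, using \thref{lem:weak_connect_trace}: $f_0$ is a trace of $r \legalseq{\partial^A}{\phi} r'$ in $G^A$ if and only if it is a trace of $r_{|v} \legalseq{\partial^A}{\phi_{|v}} r'_{|v}$ in $G^A(v)$. So everything happens inside a single directed cycle $G^A(v)$ on vertex set $A^+(v)$.

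**The cyclic-component argument.**

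Inside the directed cycle $G^A(v)$, the structure is rigid: order the arcs of $A^+(v)$ as $a_0, \theta(a_0), \theta^2(a_0), \dots$ around the cycle, and the faces are exactly the $|A^+(v)|$ arcs of this directed cycle. A legal free routing $r_{|v} \legalseq{\partial^A}{\phi_{|v}} r'_{|v}$ exists, so by \thref{prop:CNS_routing_vector} the set of elements of $\phi_{|v}$ is guiding for the routing, and every active vertex (here: every arc $a \in A^+(v)$ from which a face in $\phi_{|v}$ emanates) has $r'_a \geq 0$. Now $f_0$ is a trace iff there is a guiding tree $T$ for $r_{|v} \lineareq{\partial^A}{\phi_{|v}} r'_{|v}$ with $T(a_0) = f_0$, or — in the edge case where $a_0$ is not transitory — $f_0$ can simply be placed last because nothing forces another face out of $\head^A(f_0)$ afterward. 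The key observation is that in a directed cycle there is at most one face out of each vertex, so a guiding tree, if it exists, is forced: the only way a guiding tree can contain $f_0 = T(a_0)$ is if $a_0$ is transitory (active in $\phi_{|v}$ with $r'_{a_0} = 0$) and the successor $\theta(a_0)$ is not transitory — because if $\theta(a_0)$ were transitory, the guiding tree would need an arc out of $\theta(a_0)$, continuing around the cycle, and eventually (the component being a single cycle) it would close up into a directed cycle, contradicting the tree/acyclicity property. Conversely, if $a_0$ is not transitory, then $f_0$ is trivially available to be the last step (it need not lie in any guiding tree). This gives: $f_0$ is a trace iff $\theta(a_0)$ is not transitory, i.e. $r'_{\theta(a_0)} \geq 1$ or no $f \in F$ with $\phi_f \geq 1$ has $\tail^A(f) = \theta(a_0)$ — which is exactly the statement, since $\theta(a_0) \in \trans(\phi, r')$ means $\theta(a_0)$ is active in $\phi$ and $r'_{\theta(a_0)} = 0$.

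**Assembling and the main obstacle.**

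To make the converse direction rigorous I would argue: if $\theta(a_0) \notin \trans(\phi_{|v}, r'_{|v})$, build a guiding tree $T$ for the routing that takes $f_0$ as $T(a_0)$ whenever $a_0$ is transitory (this is consistent precisely because the "chain" of transitory successors starting at $\theta(a_0)$ now terminates — it cannot wrap around — so choosing arcs greedily around the cycle stays acyclic), then invoke \thref{lemma:CNS_routing2} to produce a legal routing sequence guided by $T$ with routing vector exactly $\phi_{|v}$; such a sequence ends with $f_0$ (for a transitory $a_0$, $T(a_0)=f_0$ is by definition the last face with tail $a_0$, and in the cycle $f_0$ is the unique face out of $a_0$, so it is literally the last face in the sequence among those incident to $a_0$ — one checks it can be pushed to the very end). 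For non-transitory $a_0$ a direct greedy argument via \thref{lemma:CNS_routing1} suffices, reserving $f_0$ for last. For the forward direction, if $f_0$ is a trace then by \thref{lemma:routing2tree} the last faces with each tail form a guiding tree, $f_0$ is among them, $a_0$ is transitory, and acyclicity of that tree forbids $\theta(a_0)$ from also being transitory. The main obstacle I anticipate is the bookkeeping around the two special cases (loop faces, i.e. $\theta(a_0) = a_0$ when $|A^+(v)|=1$, and $a_0$ non-transitory), and making precise the claim that in the single-cycle component a guiding tree containing $f_0$ forces $\theta(a_0)$ to be non-transitory without hand-waving about "wrapping around" — this should be handled by a clean induction on the length of the chain of transitory successors, or by directly noting that the guiding tree restricted to $A^+(v)$ is a subforest of a cycle, hence omits at least one arc, and that omitted arc's tail must be non-transitory, forcing it (after re-indexing) to be $\theta(a_0)$ when $f_0 \in T$.
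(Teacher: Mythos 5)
Your reduction to a single cyclic component and your use of the wrap-around argument are in the spirit of the paper's proof, but the pivot of your argument is a false intermediate claim. You assert that $f_0$ is a trace iff either some guiding tree $T$ for $r_{|v} \lineareq{\partial^A}{\phi_{|v}} r'_{|v}$ has $T(a_0)=f_0$, or $a_0$ is not transitory, and in particular that when $a_0$ is not transitory ``$f_0$ can simply be placed last because nothing forces another face out of $\head^A(f_0)$ afterward.'' This is wrong: what can force trouble is precisely a face out of $\head^A(f_0)=\theta(a_0)$ that still needs the particle $f_0$ delivers there. Concretely, take $A^+(v)=\{a_0,a_1,a_2,\dots\}$ with $\theta(a_0)=a_1$, $\theta(a_1)=a_2\neq a_0$, faces $f_i=(a_i,\theta(a_i))$, and set $r=2a_0$, $\phi=f_0+f_1$, so $r'=a_0+a_2$. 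The routing is legal via $(f_0,f_1)$; here $a_0$ is active with $r'_{a_0}=1$, hence \emph{not} transitory, yet $f_0$ is not a trace, since $(f_1,f_0)$ is illegal at its first step ($r_{a_1}=0$). Your claim predicts $f_0$ is a trace; the lemma (correctly) predicts it is not, because $\theta(a_0)=a_1$ is transitory. The same example shows that your ``push $f_0$ to the very end'' step fails in general: commuting the last occurrence of $f_0$ past a later face with tail $\theta(a_0)$ can destroy legality. Your forward direction is also incomplete for the same reason, as it only treats the case where $a_0$ is transitory.

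The structural source of the error is that being a trace is a statement about the \emph{whole} sequence ending with $f_0$, not about $f_0$ being the last face with tail $a_0$; guiding trees for the full vector $\phi$ only control the latter. The paper's proof avoids this by working with the reduced routing $r \lineareq{\partial^A}{\phi-f_0} r''$ where $r''=r'-\partial^A(f_0)$: one checks $r''_{a_0}\geq 1$ (so $f_0$ can be legally appended at the end), and then verifies the two conditions of \thref{prop:CNS_routing_vector} for $\phi-f_0$ — it is there, not for $\phi$ itself, that the nonnegativity of $r''_{\theta(a_0)}$ requires $\theta(a_0)\notin\trans(\phi,r')$, and that your wrap-around argument on the directed cycle (anchored at $a_0$, which is non-transitory for the reduced routing since $r''_{a_0}\geq 1$) correctly produces a guiding tree. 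Your proof would need to be rebuilt along these lines; as written, the characterization it rests on is false.
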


\begin{proof}
    If $f_0$ is a terminal element, then $r \legalseq{\partial^A}{\phi-f_0} r''$, where $ r'' = r' - \partial^A(f_0)$. 
    If $a_0=\theta(a_0)$, which means that $\tail^V(a_0)$ has outdegree $1$ in
    $G^V$, then $r_{a_0} = r'_{a_0} = r''_{a_0}$; since ${a_0}$ is active in $\phi$, we must have from the beginning $r_{a_0} \geq 1$ so that $r'_{a_0} \geq 1$, therefore $\theta({a_0})={a_0} \notin \trans(\phi,r')$.
    If ${a_0} \neq \theta({a_0})$, and $\theta({a_0})$ is active in 
    $\phi$, then it is also
    active in $\phi-f_0$, then $r''_{\theta({a_0})} \geq 0$,
    so $r'_{\theta({a_0})} \geq 1$, and $\theta({a_0}) \notin \trans(\phi,r')$.

    Conversely, suppose that $\theta({a_0}) \notin \trans(\phi, r')$. 
    Since $f_0 \in \phi$, ${a_0}$ is active in $\phi$. If ${a_0} \neq \theta({a_0})$, we have
    $r'_{a_0} \geq 0$ so $r''_{a_0} \geq 1$. If ${a_0}=\theta({a_0})$,
    then $r_{a_0}=r'_{a_0}=r''_{a_0} \geq 1$. In all cases, 
    $r''_{a_0} \geq 1$. So if we prove that
    $r \lineareq{\partial^A}{\phi - f_0} r''$ admits a legal routing
    sequence, it will be legal to add $f_0$ at the end,
    proving that $f_0$ is a terminal element. To see the existence of
    a legal sequence with routing vector $\phi-f_0$, we check the conditions of \thref{prop:CNS_routing_vector}.

    First, for any active arc $a$ in $\phi$, we have $r'_a \geq 0$, since $r \legalseq{\partial^A}{\phi} r'$. The active arcs in $\phi - f_0$ are those in $\phi$, possibly excluding arc ${a_0}$. Meanwhile, the only possible arc $a$ such that $r''_a < r'_a$ is $\theta({a_0})$, and $r'_{\theta({a_0})} -1 \leq r''_{\theta({a_0})} \leq r'_{\theta({a_0})}$. If $\theta({a_0})$ is active in $\phi - f_0$, then it is also active in $\phi$. However, we assumed that $\theta({a_0}) \notin \trans(\phi, r')$, which implies $r'_{\theta({a_0})} \geq 1$, and therefore $r''_{\theta({a_0})} \geq 0$. Hence, every arc $a$ active in $\phi - f_0$ satisfies $r''_{a} \geq 0$.

    Second, we check the existence of a guiding forest for $r \lineareq{\partial^A}{\phi - f_0} r''$. Let $v_0 = \tail^V({a_0})$.
    Since $r \lineareq{\partial^A}{\phi - f_0} r''$ and 
     $r \lineareq{\partial^A}{\phi} r'$ only differ in $G^A(v_0)$,
     the only remaining task is to verify the existence of a guiding forest for $r_{|v_0} \lineareq{\partial^A}{\phi_{|v_0} - f_0} r''_{|v_0}$ (we recall that notation $r_{|v_0}$ means the part of $r$ with arcs in $A^+(v_0)$).
     
     If  $F(v_0) \cap \trans(\phi-f_0, r'') = \emptyset$, then there
     is nothing to check, the empty set is a guiding forest for 
     $r_{|v_0} \lineareq{\partial^A}{\phi_{|v_0} - f_0} r''_{|v_0}$.
    Otherwise let $a \in \trans(\phi_{|v_0}-f_0, r''_{|v_0})$ and suppose that the
    set of faces which are elements of $\phi_{|v_0} - f_0$ is not guiding.
    Since $G^A(v_0)$ is a directed cycle, we also
    have $\theta(a) \in \trans(\phi_{|v_0}-f_0, r''_{|v_0})$, $\theta^2(a) \in \trans(\phi_{|v_0}-f_0, r''_{|v_0})$ and so on, so that  $A^+(v_0) \subset \trans(\phi_{|v_0}-f_0, r''_{|v_0})$,
    and in particular ${a_0} \in \trans(\phi_{|v_0}-f_0, r''_{|v_0})$. This contradicts the fact that $r''_{a_0} \geq 1$, which was proved above.

    We checked the two conditions of \thref{prop:CNS_routing_vector}.
\end{proof}

Consider now a free linear routing $r \lineareq{\partial^A}{\phi} r'$.
If $r \legalseq{\partial^A}{\phi} r'$, by \thref{lem:terminal_condition}, the terminal elements of this routing
are exactly the faces $f \in \phi$ such that $\tail^A(f)$ belongs to the following set ${\cal T}_A$:
\[{\cal T}_A = \{a \in \alpha: r'_{\theta(a)} > 0 \} \cup \{a \in \alpha : \theta(a) \notin \alpha \},\]
where $\alpha = \tail^A(\phi)$. Note that terminal elements are entirely characterized by the values $\phi$ and $r'$.

Based on this characterization, we can derive the following result as an adaptation of \thref{thm:routing_rotor_mecha} for GRM multigraphs. Unlike \thref{thm:routing_rotor_mecha}, which relied on applying the characterization from \thref{prop:CNS_routing_vector} several times as a subroutine, this result is self-contained.

\begin{proposition} \thlabel{thm:charac_legal_rotor}
    Suppose that $(G^A, G^V)$ is a cyclic GRM multigraph and $(r,\sigma) \lineareq{\cL}{\phi} (r',\sigma')$ for some $\phi \in C_F^+$.  Let $\alpha = \tail^A(\phi)$ and 
    $${\cal T}_A = \{a \in \alpha: r'_{\theta(a)} > 0 \} \cup \{a \in \alpha : \theta(a) \notin \alpha \} .$$ 
    Then $(r,\sigma) \legalseq{\cL}{\phi} (r',\sigma')$ if and only if 
    \begin{enumerate}[(i)]
        \item $\sigma'_v \geq 0$ for every vertex $v$ active in $\alpha$;
        \item $ r'_a \geq 0$ for every arc $a$ active in $\phi$;
        \item ${\cal T}_A \cap A^+(v) \neq \emptyset$ for every vertex $v$ active in $\alpha$;
        \item ${\cal T}_A$ is guiding for $\sigma \lineareq{\partial^V}{\alpha} \sigma'$.
    \end{enumerate}
    
\end{proposition}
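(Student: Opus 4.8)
The plan is to derive this from \thref{thm:routing_rotor_mecha}, which already characterizes $(r,\sigma)\legalseq{\cL}{\phi}(r',\sigma')$ in an arbitrary GRM multigraph by three conditions that I will denote (a) $r\legalseq{\partial^A}{\phi}r'$, (b) $\sigma\legalseq{\partial^V}{\alpha}\sigma'$, and (c) that the set of tails of traces of $r\legalseq{\partial^A}{\phi}r'$ is guiding for $\sigma\lineareq{\partial^V}{\alpha}\sigma'$. The goal is to show that, in the cyclic case, (a)$\wedge$(b)$\wedge$(c) is equivalent to (i)$\wedge$(ii)$\wedge$(iii)$\wedge$(iv). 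Two earlier results feed in: \thref{prop:CNS_routing_vector}, which expands each of (a) and (b) into a non-negativity condition together with a ``guiding'' condition, and \thref{lem:trace_condition} together with the description of $T_A$ immediately following it, which — as soon as (a) holds — identifies the set of tails of traces of $r\legalseq{\partial^A}{\phi}r'$ with exactly $T_A$; here I would first record that in a cyclic GRM each arc $a$ carries a unique outgoing face $(a,\theta(a))$, so $f\mapsto\tail^A(f)$ is a bijection from the elements of $\phi$ (viewed as faces) onto the support of $\alpha$, whence the set of tails of traces equals $T_A$ as a set of arcs.

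The central new step is a lemma isolating the use of the directed-cycle structure of $G^A(v)$: \emph{assuming} $r'_a\ge 0$ for every arc $a$ active in $\phi$ (i.e.\ assuming (ii)), the set of faces that are elements of $\phi$ is guiding for $r\lineareq{\partial^A}{\phi}r'$ if and only if $T_A\cap A^+(v)\neq\emptyset$ for every vertex $v$ active in $\alpha$, i.e.\ condition (iii). For the ``if'' direction I would fix a transitory ``vertex'' of $G^A$, i.e.\ an arc $a\in A^+(v)$ active in $\phi$ with $r'_a=0$; since $v$ is active in $\alpha$, choose $a_0\in T_A\cap A^+(v)$ and put $b=\theta(a_0)\in A^+(v)$, which is non-transitory by the definition of $T_A$ (either $r'_b>0$, or $b\notin\alpha$ so $b$ is inactive). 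Following the directed cycle $G^A(v)$ from $a$, i.e.\ $a,\theta(a),\theta^2(a),\dots$, until the first non-transitory arc is reached: every arc strictly before it is transitory, hence active in $\phi$, hence its outgoing face is an element of $\phi$, so this is a legal guiding path, and a non-transitory arc is reached before returning to $a$ because $b\neq a$ lies on the cycle. For the ``only if'' direction I argue contrapositively: if (iii) fails at some $v$ active in $\alpha$, pick any $a_1\in A^+(v)$ active in $\phi$; then $a_1\notin T_A$ forces $\theta(a_1)\in\alpha$ and $r'_{\theta(a_1)}\le 0$, and invoking (ii) gives $r'_{\theta(a_1)}=0$, so $\theta(a_1)$ is again active and transitory; iterating around the cycle shows $A^+(v)\subseteq\alpha$ and $r'\equiv 0$ on $A^+(v)$, i.e.\ every arc of $A^+(v)$ is transitory, so — as $G^A$ is the disjoint union of the $G^A(u)$ — no path out of $a_1$ can reach a non-transitory arc, and the elements of $\phi$ are not guiding. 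The subtle point, which is the main obstacle, is precisely this: arcs with $r'<0$ count as ``non-transitory'' and could spuriously make the guiding condition hold, so hypothesis (ii) cannot be dispensed with at this step.

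With the cyclic lemma available, the assembly is bookkeeping. By \thref{prop:CNS_routing_vector} applied in $G^A$, (a) holds iff $r'_a\ge 0$ for all $a$ active in $\phi$ and the elements of $\phi$ are guiding for $r\lineareq{\partial^A}{\phi}r'$; by the cyclic lemma the latter, under the former, is equivalent to (iii), so (a) $\iff$ (ii)$\wedge$(iii). By \thref{prop:CNS_routing_vector} applied in $G^V$, (b) holds iff (i) together with the support of $\alpha$ being guiding for $\sigma\lineareq{\partial^V}{\alpha}\sigma'$; since $T_A$ is contained in the support of $\alpha$ and the guiding property is monotone under enlarging the arc set within that support, condition (iv) implies this second part, so (i)$\wedge$(iv) implies (b), while (b) gives (i). Finally, whenever (a) holds, \thref{lem:trace_condition} and the ensuing description of $T_A$ identify the set of tails of traces of $r\legalseq{\partial^A}{\phi}r'$ with $T_A$, so conditions (c) and (iv) coincide. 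Chasing these equivalences in both directions — from (a)$\wedge$(b)$\wedge$(c) deduce (ii) and (iii) from (a), then (i) from (b), then (iv) from (c); and from (i)--(iv) deduce (a) from (ii)$\wedge$(iii), then (b) from (i)$\wedge$(iv), then (c) from (a) and (iv) — and applying \thref{thm:routing_rotor_mecha} yields the stated characterization.
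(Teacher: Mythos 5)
Your proof is correct and follows essentially the same route as the paper's: both reduce to the three conditions of \thref{thm:routing_rotor_mecha}, expand the two free-routing conditions via \thref{prop:CNS_routing_vector}, identify the tails of traces with $T_A$ via \thref{lem:trace_condition}, and use the directed-cycle structure of $G^A(v)$ to show that either some arc of $A^+(v)$ is non-transitory (giving a guiding path around the cycle) or all of $A^+(v)$ is transitory (so no guiding set exists). Your only departure is organizational — packaging the cyclic argument as a standalone two-sided lemma under hypothesis (ii) — and your remark that (ii) is needed to rule out spurious non-transitory arcs with $r'<0$ is exactly the point the paper handles implicitly.
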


\begin{proof}
    By~\thref{thm:routing_rotor_mecha}, there is a legal routing sequence from $(r,\sigma)$
    to $(r',\sigma')$ with routing vector $\phi$ if and only if 
    \begin{enumerate}[(1)]
        \item $\sigma \legalseq{\partial^V}{\alpha} \sigma'$
        \item $r \legalseq{\partial^A}{\phi} r'$ 
        \item $\{\tail^A(f) : f \in {\cal T}_F \}$ is guiding for $\sigma \legalseq{\partial^V}{\alpha} \sigma'$ where ${\cal T}_F$ is the set of terminal elements of $r \legalseq{\partial^A}{\phi} r'$.
    \end{enumerate}

    First suppose that (i), (ii), (iii) and (iv) are satisfied. By~\thref{prop:CNS_routing_vector} conditions (i) and (iv) of the proposition imply condition (1), since the set of arcs ${\cal T}_A$ is contained in $\alpha$.

    Let  $v$ be active in $\alpha$.  By (iii),  there is $a \in A^+(v) \cap {\cal T}_A$. By definition of ${\cal T}_A$ we have  $\theta(a) \notin \trans(\phi, r')$. We show the existence of a guiding forest for  $r_{|v} \lineareq{\partial^A}{\phi_{|v}} r'_{|v}$.

    If  $F(v) \cap \trans(\phi, r') = \emptyset$, then there
     is nothing to check and the empty set is suitable.
    Otherwise let $\hat a \in \trans(\phi_{|v}, r'_{|v})$ and suppose that the set of faces that are elements of $\phi_{|v}$ is not guiding.
    Since $G^A(v)$ is a directed cycle, we also
    have $\theta(\hat a) \in \trans(\phi_{|v}, r'_{|v})$, $\theta^2(\hat a) \in \trans(\phi_{|v}, r'_{|v})$ and so on, so that  $A^+(v) \subset \trans(\phi_{|v}, r'_{|v})$. This is a contradiction with $\theta(a)$ not being transitory.  Hence, the support of $\phi$ is guiding  $r \lineareq{\partial^A}{\phi} r$. This together with (ii) implies by~\thref{prop:CNS_routing_vector} that $r \legalseq{\partial^A}{\phi} r'$. Hence  condition (2) is satisfied. 
    Finally, by \thref{lem:terminal_condition}, we have ${\cal T}_A = \tail^A({\cal T}_F)$,  hence (iv) and (3) are equivalent.

    Conversely, suppose that condition (1), (2) and (3) are satisfied. We already showed that (3) implies (iv).
    
    By (2), for every active vertex $v$ in $\alpha$ there must be an arc $a \in A^+(v)$ which is not transitory for $r \legalseq{\partial^A}{\phi} r'$. This corresponds to condition (iii).

    Finally, condition (1) implies (i), and condition (2) implies (ii), hence the equivalence between the two sets of conditions.
\end{proof}

To conclude this section, we show how to apply these results
to the case of standard rotor-routing. We suppose that $\sigma, \sigma' \in C^+_V$, and that  $\rho,\rho'$ are rotor configurations (see Section~\ref{sec:cyclic_GRM}).

\begin{corollary}
    Suppose that $(G^A, G^V)$ is a cyclic GRM multigraph
    and that $(\rho',\sigma') \lineareq{\cL}{\phi} (\rho,\sigma)$, with $\phi \in C_F^+$, with $\rho, \rho'$ rotor configurations
    and $\sigma, \sigma' \in C^+_V$. Let $\alpha = \tail^A(\phi)\in C_A^+$. Then $(\rho,\sigma) \legalseq{\cL}{\phi} (\rho',\sigma')$ if and only if 
    $$\{a \in \alpha : \theta(a) \in \rho'\}$$
    is guiding for $\sigma \lineareq{\partial^V}{\alpha} \sigma'$.
\end{corollary}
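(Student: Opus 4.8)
The plan is to specialize \thref{thm:charac_legal_rotor} to the setting where $\rho,\rho'$ are rotor configurations. The corollary asserts that conditions (i)--(iv) of that proposition collapse to the single condition that $\{a \in \alpha : \theta(a) \in \rho'\}$ is guiding for $\sigma \lineareq{\partial^V}{\alpha} \sigma'$. So the work is to check that (i), (ii), (iii) of \thref{thm:charac_legal_rotor} are automatically satisfied under the rotor-configuration hypotheses, and that $T_A$ reduces to the stated set.

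First I would unwind the definition of $T_A$. Recall $T_A = \{a \in \alpha : \rho'_{\theta(a)} > 0\} \cup \{a \in \alpha : \theta(a) \notin \alpha\}$. Since $\rho'$ is a rotor configuration, for every non-sink $v$ there is exactly one arc of $A^+(v)$, namely $\rho'(v)$, with positive coefficient, and $\rho'_{\theta(a)} > 0$ is equivalent to $\theta(a) \in \rho'$. The key observation is that the second set is contained in the first: if $a \in \alpha$ with $\tail^V(a)=v$, and $\theta(a) \notin \alpha$, then since $\deg^A_v(\rho) = \deg^A_v(\rho') = 1$ and $a$ is active in $\phi$ (so a routing step consumes the arc-particle initially or eventually placed on some arc of $A^+(v)$), I would argue that the arc-particle in $A^+(v)$ must finish on $\theta(a)$: indeed in $G^A(v)$, which is a directed cycle, starting from the unique arc carrying $\rho$ and performing the routing steps recorded by $\phi_{|v}$, the final position $\rho'(v)$ is reached, and if $\theta(a)$ is never used as a tail (i.e. $\theta(a)\notin\alpha$) while $a$ is used, the particle that arrives at $\theta(a)$ cannot leave, forcing $\rho'(v) = \theta(a)$, i.e. $\theta(a)\in\rho'$. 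Hence $T_A = \{a \in \alpha : \theta(a) \in \rho'\}$, which also shows condition (iv) of \thref{thm:charac_legal_rotor} is exactly the condition in the corollary.

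Next I would dispatch (i), (ii), (iii). Condition (ii) ($r'_a \geq 0$ for every $a$ active in $\phi$) is immediate since $r' = \rho'$ is a rotor configuration, hence in $C^+_A$. Condition (i) ($\sigma'_v \geq 0$ for every $v$ active in $\alpha$) is immediate since $\sigma' \in C^+_V$ by hypothesis. Condition (iii) ($T_A \cap A^+(v) \neq \emptyset$ for every $v$ active in $\alpha$) follows from the reduction of $T_A$ above: if $v$ is active in $\alpha$, then $A^+(v)$ contains an arc in $\alpha$; running the routing in the cycle $G^A(v)$ from the arc carrying $\rho$ to the arc carrying $\rho'$, the arc $\theta^{-1}(\rho'(v))$ — the last arc routed in $G^A(v)$, which exists because $v$ is active — lies in $\alpha$ and has $\theta(\theta^{-1}(\rho'(v))) = \rho'(v) \in \rho'$, so it lies in $T_A \cap A^+(v)$. (Alternatively, (iii) follows from condition (2) of \thref{thm:routing_rotor_mecha} via the equivalence $r \legalseq{\partial^A}{\phi} r'$, which holds automatically here since in a directed cycle every linear routing between two single-particle configurations with a nonnegative routing vector is realizable legally — one just follows the particle around the cycle.) Therefore all of (i), (ii), (iii) hold unconditionally, and $(\rho,\sigma) \legalseq{\cL}{\phi} (\rho',\sigma')$ is equivalent to (iv), i.e. to $\{a \in \alpha : \theta(a)\in\rho'\}$ being guiding for $\sigma \lineareq{\partial^V}{\alpha} \sigma'$.

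The main obstacle I anticipate is the careful bookkeeping in the directed cycle $G^A(v)$: precisely verifying that for an active $v$, (a) the arc $\theta^{-1}(\rho'(v))$ is indeed in $\alpha$, and (b) $T_A$ meets $A^+(v)$ exactly at this arc (and possibly nowhere else when $v$ is active, by \thref{lem:trace_condition}). This requires reasoning about how a single arc-particle circulates on a directed cycle under a prescribed multiset of routing steps, which is essentially the single-particle rotor-walk analysis — straightforward but needs to be stated cleanly. Everything else is a direct substitution into \thref{thm:charac_legal_rotor}.
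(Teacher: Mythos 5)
Your proposal is correct and follows essentially the same route as the paper: specialize \thref{thm:charac_legal_rotor}, observe that conditions (i)--(iii) hold automatically because $\sigma'\in C_V^+$ and $\rho'$ is a rotor configuration, and reduce $T_A$ to $\{a\in\alpha:\theta(a)\in\rho'\}$ so that (iv) becomes the stated guiding condition. The only cosmetic difference is that the paper proves the inclusion $\{a\in\alpha:\theta(a)\notin\alpha\}\subset\{a\in\alpha:\theta(a)\in\rho'\}$ by the one-line conservation identity $\rho'_{\theta(a)}=\rho_{\theta(a)}+\alpha_a$, where you argue the same fact dynamically by tracking the arc-particle around the cycle $G^A(v)$.
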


\begin{proof}

Recall that in this context ${\cal T}_A$ as defined in \thref{thm:charac_legal_rotor}  can be expressed as
$${\cal T}_A = \{ a \in \alpha: \theta(a) \in \rho'\} \cup \{a \in \alpha : \theta(a) \notin \alpha \} .$$ 
We prove that ${\cal T}_A = \{a \in \alpha : \theta(a) \in \rho'\}$. Let $a \in \alpha$ such that $\theta(a) \notin \alpha$.
Then $\rho'_{\theta(a)} = \rho_{\theta(a)} + \alpha_a $ which implies that $\rho'_{\theta(a)} \geq 1$ and then $\rho'_{\theta(a)} = 1$. Thus, $\{ a \in \alpha: \theta(a) \notin \alpha \} \subset \{ a \in \alpha: \theta(a) \in \rho'\}$.

Suppose that $\{a \in \alpha : \theta(a) \in \rho'\}$ is guiding for $\sigma \lineareq{\partial^V}{\alpha} \sigma'$.
We check conditions (i) to (iv) of \thref{thm:charac_legal_rotor}. 

Since $\sigma' \in C_V^+$ and $\rho' \in C_A^+$, conditions (i) and (ii) are satisfied. Moreover,  for every vertex $v \notin S$ there is $a \in A^+(v)$ such that $\rho'_a > 0$. Hence $ \{a \in \alpha: \rho'_{\theta(a)} > 0 \} \cap A^+(v)$ is nonempty for every active vertex $v$ in $\alpha$ and  condition (iii) is satisfied. Condition (iv) follows from ${\cal T}_A = \{a \in \alpha : \theta(a) \in \rho'\}$. 

Conversely, condition (iv) implies that $\{a \in \alpha : \theta(a) \in \rho'\}$ is guiding for $\sigma \lineareq{\partial^V}{\alpha} \sigma'$.

\end{proof}

In the special case where $G^V$ is stopping, and we aim
to simulate maximal rotor walks (see Sec. \ref{sec:standard}) in GRM multigraphs,
the previous result leads to the characterization of runs
among flows, as stated in   \thref{thm:run_carac}.

\section{Legal reachability in GRM multigraphs}
\label{sec:legal_reach}

In the previous section, we developed a procedure to decide whether a legal routing exists in GRM multigraphs when a routing vector is specified, and we derived a simplified characterization for the cyclic case; both can be verified in polynomial time.
In this section, we extend the analysis to the setting where no routing vector is given.

In full generality, the problem becomes computationally hard: we prove that legal reachability in GRM multigraphs is NP-complete (Section~\ref{sec:legal_reach_general}).
Since the proof is lengthy, we present it separately for clarity (Section~\ref{sec:proof_np_hard}).

In contrast, the cyclic case admits an explicit structural description.
Periodic routing vectors form a computable lattice, and every reachability instance reduces to checking a small set of canonical candidates.
This yields a complete polynomial-time criterion for legal reachability without a prescribed routing vector in cyclic GRMs (Section~\ref{sec:cyclic_GRM_reachability}), which in particular encompasses standard rotor-routing.

The main results of this section are \thref{thm:np_hard1} and \thref{thm:lr_cy_grm}.

\subsection{General Case}
\label{sec:legal_reach_general}

Let us  define formally the  {\sc Legal reachability in GRM multigraph} problem as follows:

\vskip .5cm

\begin{tabularx}{15cm}{lp{11cm}}
    \hline \multicolumn{2}{c}{{\sc Legal reachability in GRM multigraph (LR-GRMM)}} \\
    \hline
    {\sc input:} & $(G^A, G^V)$ a GRM multigraph,  $r, r' \in C_A$ and $\sigma, \sigma' \in C_V$.\\       
    {\sc question:} & does
    $(r,\sigma) \legalseq{\cL}{*} (r',\sigma')$ 
    ? \\
    \hline
\end{tabularx}

\vskip .5cm
The challenge lies in the fact that multiple routing vectors
$\phi$ may satisfy the equation $(r', \sigma') = (r, \sigma) + \cL(\phi) $. Among these, some routing vectors may admit a legal routing sequence, while others may not.
We prove:

\begin{theorem}
    \thlabel{thm:np_hard1}
    {\sc Legal Reachability in GRM multigraph} problem is NP-complete.
\end{theorem}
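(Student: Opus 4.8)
We establish both membership in NP and NP-hardness.

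\emph{Membership in NP.} I would use as a certificate for a positive instance a routing vector $\phi \in C_F^+$ with $(r,\sigma)\legalseq{\cL}{\phi}(r',\sigma')$. Checking a candidate $\phi$ is polynomial: verify $\cL(\phi)=(r'-r,\sigma'-\sigma)$ by a matrix product, then verify the three conditions of \thref{thm:routing_rotor_mecha}, which — as observed right after that theorem — can all be tested in polynomial time via \thref{prop:CNS_routing_vector} together with a trace and reachability computation. The one thing to prove is that a positive instance always admits such a $\phi$ of bit-size polynomial in the input. Write $b=(r'-r,\sigma'-\sigma)$. Using \thref{thm:routing_rotor_mecha} and \thref{prop:CNS_routing_vector}, one sees that, for $\phi\in C_F^+$ with $\cL(\phi)=b$, whether $\phi$ is a legal routing vector depends only on which faces have coefficient $0$, which have coefficient exactly $1$, and which have coefficient at least $2$: these data determine the active and transitory sets in $G^A$ and $G^V$ and the trace set, hence all three conditions. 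Consequently the set of legal routing vectors with $\cL(\phi)=b$ is a union of sets of the form $\{\phi\in C_F^+ : \cL(\phi)=b,\ \phi_f = 1\ \text{for}\ f\in S_1,\ \phi_f\geq 2\ \text{for}\ f\in S_2,\ \phi_f = 0\ \text{otherwise}\}$; each non-empty such set is the integer solution set of a linear system with $\{0,\pm1\}$ coefficient matrix and right-hand side of polynomial bit-size, hence (by the classical bound on the size of solutions of integer linear systems) contains an element of polynomial bit-size. So a yes-instance has a polynomial-size certificate and LR-GRMM $\in$ NP.

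\emph{NP-hardness.} I would reduce from {\sc 3-Sat}. Given a $3$-CNF formula $\psi=C_1\wedge\cdots\wedge C_m$ over $x_1,\dots,x_n$, the plan is to construct in polynomial time a GRM multigraph $(G^A,G^V)$ with $r,r'\in C_A$ and $\sigma,\sigma'\in C_V$ such that $(r,\sigma)\legalseq{\cL}{*}(r',\sigma')$ iff $\psi$ is satisfiable, by gluing together two families of gadgets. A \textbf{variable gadget} for $x_i$ carries a prescribed bundle of arc-particles that must be moved to match $r'$; its mechanisms $G^A(\cdot)$ are built so that this transfer can be completed legally in exactly two ways — one encoding $x_i=\mathsf{true}$, one $x_i=\mathsf{false}$ — and the two completions make different arcs available as \emph{traces} of the induced free routing in $G^A$, which are then used to guide particles of $G^V$ toward the clauses in which $x_i$ occurs positively, resp.\ negatively. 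This is exactly where the extra freedom of GRM multigraphs over cyclic ones is exploited: when $G^A(v)$ is a single directed cycle, \thref{lem:trace_condition} forces its trace set, so no genuine binary choice is possible — which is why the cyclic case is in P (\thref{thm:lr_cy_grm}). A \textbf{clause gadget} for $C_j$ contains a vertex $c_j$ of $G^V$ that $\sigma$ and $\sigma'$ force to be transitory, with three out-arcs, one per literal; by condition $(iii)$ of \thref{thm:routing_rotor_mecha} the routing can be completed legally only if $c_j$ can be guided out along one of these arcs, which happens precisely when at least one literal of $C_j$ was set to a satisfying value in the corresponding variable gadget. With the global target chosen so that every variable gadget must be used and every clause vertex is transitory, a legal routing vector exists iff the per-variable choices form a truth assignment satisfying all clauses.

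\emph{Main obstacle.} Membership in NP is essentially routine given \thref{thm:routing_rotor_mecha}; the hard part is the gadget construction and its correctness. The delicate point is to set up the gadgets so that the combined effect of (a) the linear constraint $\cL(\phi)=b$, (b) the non-negativity of $r'$ and $\sigma'$ on arcs and vertices active in $\phi$ (conditions $(i)$ and $(ii)$ of \thref{thm:routing_rotor_mecha}), and (c) the trace-guiding condition (condition $(iii)$) makes legal routing vectors correspond \emph{exactly} to satisfying assignments, with no spurious legal routing that combines faces across gadgets in an unintended way; the locality of the mechanisms $G^A(v)$ and of the trace/guiding conditions is what makes such rigid gadgets possible, and verifying that the whole construction is polynomial-size and leak-free is the crux of the proof.
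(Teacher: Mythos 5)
Your NP-membership argument follows the paper's (a legal routing vector $\phi$ is the certificate, checked via \thref{thm:routing_rotor_mecha}), and you add a point the paper glosses over: justifying that a yes-instance admits a certificate of polynomial bit-size. Your observation that legality of $\phi$ depends only on the partition of faces into coefficient $0$, exactly $1$, and $\geq 2$ (since activity, transitory vertices, guiding sets, and traces are determined by that data together with $r',\sigma'$), combined with the standard small-solution bound for integer linear systems, is a sound way to close that gap and is a genuine improvement in rigor over the paper's one-line treatment.

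The hardness half, however, is a plan rather than a proof, and you say so yourself ("verifying that the whole construction is polynomial-size and leak-free is the crux"). Nothing in the proposal actually constructs the gadgets: you do not specify the mechanisms $G^A(x_i)$ and $G^A(c_j)$, the configurations $r,r',\sigma,\sigma'$, or the correctness argument in either direction. The paper's reduction is from {\sc 3-SAT-(2,2)} (each variable occurring exactly twice positively and twice negatively), and that restriction is load-bearing: the variable gadget places a single arc-particle on $a_i^{\text{start}}$ with target $a_i^{\text{end}}$, and $G^A(x_i)$ consists of exactly two directed paths of length three between them (one through the two ``positive'' arcs, one through the two ``negative'' arcs), so that the forced traversal routes exactly the three particles of $\sigma_{x_i}=3$ — one to $s_i^{\text{start}}$ and one to each clause of the chosen polarity. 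Your sketch, which reduces from unrestricted {\sc 3-Sat} and relies on ``traces'' of the variable gadget guiding particles, does not confront how a variable with $k$ positive and $\ell$ negative occurrences would be handled, nor how the particle counts are balanced globally. Likewise your clause gadget (three out-arcs, one per literal, with $c_j$ forced transitory) differs from the paper's (two out-arcs $a_{j,0},a_{j,1}$ with $r$ on $a_{j,0}$ and $r'$ on $a_{j,1}$, so the face $f_j^{01}$ must be routed, which requires at least one particle to reach $c_j$ and sends it to $s_{\text{sat}}$; the count $\sigma'_{s_{\text{sat}}}=m$ then pins down exactly one such particle per clause); whether your variant is leak-free cannot be assessed because it is not specified. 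So the overall strategy is the right one and matches the paper's in spirit, but the reduction — the actual content of the hardness proof — is missing.
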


\subsection{Proof of \thref{thm:np_hard1}}
\label{sec:proof_np_hard}

A routing vector $\phi \in C_F^+$ with 
  $(r,\sigma) \legalseq{\cL}{\phi} (r',\sigma')$ 
is a certificate that $((G^A, G^V), r,r',\sigma,\sigma')$ is a positive instance of {\sc LR-GRMM}, which can be checked in polynomial time by \thref{thm:routing_rotor_mecha}. Hence {\sc LR-GRMM} is in NP. The  rest of the section is dedicated to the proof of NP-hardness by polynomial reduction from a boolean satisfiability problem.

We consider a special version of the {\sc 3-SAT} problem, where boolean formulas are given in conjunctive normal form with clauses of 3 literals, where each variable appears exactly twice unnegated and exactly twice negated,
as in the formula 
\[(x_1 \vee x_2 \vee x_3) \wedge (x_1 \vee \overline{x_2} \vee \overline{x_3}) \wedge (\overline{x_1} \vee x_2 \vee \overline{x_3}) \wedge (\overline{x_1} \vee \overline{x_2} \vee x_3).\]
This restriction of {\sc 3-SAT}, which we will call {\sc 3-SAT-(2,2)}, has been proved NP-complete~\cite{darmann_simplified_2021}. An instance of \mbox{{\sc 3-SAT-(2,2)}} is then a boolean formula made of $n$ variables $x_1, x_2, \dots, x_n$ and $m$ clauses of three literals $c_1, c_2, \dots, c_m$, such that for all $i$, $x_i$ appears in exactly two clauses $c_j$, and $\overline{x_i}$ appears in exactly two clauses $c_j$ as well.

To  an instance of {\sc 3-SAT-(2,2)}, we will associate  an instance $((G^A, G^V), r, r', \sigma, \sigma')$ of {\sc LR-GRMM}, where $(G^A, G^V)$ is a GRM multigraph. 
We shall first define a core multigraph, 
consisting only of arcs and vertices. Then, the clause
gadgets will add some faces to the core multigraph, and
then the variable gadgets will add more vertices and arcs, as well as faces. Finally, we will define $r, r', \sigma$ and $ \sigma'$.

\paragraph{Core multigraph.}

We begin by defining the core multigraph:
\begin{itemize}

\item the set of vertices is $V  = V_\text{var}\cup V_\text{clause}  \cup S$ with
\begin{itemize}
\item the set of \emph{variable vertices}, $  V_\text{var} = \{x_i\}_{ i \in \{1, \dots, n\}}$ 
\item the set of \emph{clause vertices}, $ V_\text{clause} = \{c_j\}_{j \in \{1, \dots, m\}}$
\item the set of sinks $ S = \{s, s_\text{sat}\} $
\end{itemize}

\item the set of arcs is $A  = A_\text{var}^+ \cup A_\text{var}^- \cup A_\text{clause,s} \cup A_\text{clause,sat} $ with
\begin{itemize}
\item $A_\text{var}^+$ contains an arc from every variable vertex to each of the two clause vertices where it appears unnegated; more precisely $A_\text{var}^+ = \{a_{i,j}^+\}_{i \in \{1, \dots, n \}, j \in \{1,2\}}$, where $\tail(a_{i,j}^+) = x_i$ and $\{ \head(a_{i,1}^+), \head(a_{i,2}^+) \}$ are the two clause vertices $c$ such that $x_i$ appears unnegated in $c$;

\item $A_\text{var}^-$ contains an arc from every variable vertex to each of the two clause vertices where it appears negated;
$A_\text{var}^- = \{a_{i,j}^-\}_{i \in \{1, \dots, n \}, j \in \{1,2\}}$, where $\tail(a_{i,j}^-) = x_i$ and 
$\{ \head(a_{i,1}^-), \head(a_{i,2}^-) \}$ are the two clause vertices $c$ such that $x_i$ appears negated in $c$;

\item $ A_\text{clause,s}$ contains an arc from every clause vertex to $s$; more precisely 
$ A_\text{clause,s} = \{a_{j,1}\}_{j \in \{1, \dots, m \}}$ with $(\tail(a_{j,1}), \head(a_{j,1})) = ( c_j , s)$;

\item $ A_\text{clause,sat}$ contains an arc from every clause vertex to $s_\text{sat}$; more precisely  $A_\text{clause,sat} = \{a_{j,0}\}_{j \in \{1, \dots, m \}}$ with   $(\tail(a_{j,0}), \head(a_{j,0})) = (c_j, s_\text{sat})$.

\end{itemize}
\end{itemize}
Figure~\ref{fig:reduction_core} shows an example of this core multigraph.

\begin{figure}[!htbp]
    \centering
    \begin{tikzpicture}[
      node distance=2cm and 2cm,
      every node/.style={minimum size=1cm},
      every path/.style={-Stealth}
    ]

    \node[draw] (x1) {x$_1$};
    \node[draw] (x2) [right=of x1] {x$_2$};
    \node[draw] (x3) [right=of x2] {x$_3$};

    \node[draw] (c1) [below=of x1, xshift=-1.5cm] {c$_1$};
    \node[draw] (c2) [right=of c1] {c$_2$};
    \node[draw] (c3) [right=of c2] {c$_3$};
    \node[draw] (c4) [right=of c3] {c$_4$};

    \node[draw] (ssat) [below=of c2] {s$_{\text{sat}}$};
    \node[draw] (s) [right=of ssat] {s};

    \draw (x1) -- (c1);
    \draw (x1) -- (c2);
    \draw (x1) -- (c3);
    \draw (x1) -- (c4);
    \draw (x2)  -- (c1);
    \draw (x2)  -- (c2);
    \draw (x2)  -- (c3);
    \draw (x2) -- (c4);
    \draw (x3)  -- (c1);
    \draw (x3)  -- (c3);
    \draw (x3) -- (c2);
    \draw (x3) -- (c4);

    \draw (c1) to  (s);
    \draw (c2) to  (s);
    \draw (c3) to  (s);
    \draw (c4) to  (s);
    \draw (c1) -- (ssat);
    \draw (c2) -- (ssat);
    \draw (c3) -- (ssat);
    \draw (c4) -- (ssat);

    \end{tikzpicture}
    \caption{Core multigraph  built from the {\sc 3-sat-(2,2)} instance $c_1 \wedge c_2 \wedge c_3 \wedge c_4 = (x_1 \vee x_2 \vee x_3) \wedge (x_1 \vee \overline{x_2} \vee \overline{x_3}) \wedge (\overline{x_1} \vee x_2 \vee \overline{x_3}) \wedge (\overline{x_1} \vee \overline{x_2} \vee x_3) $. The graph does not differentiate among unnegated and negated variables.}
    \label{fig:reduction_core}
\end{figure}
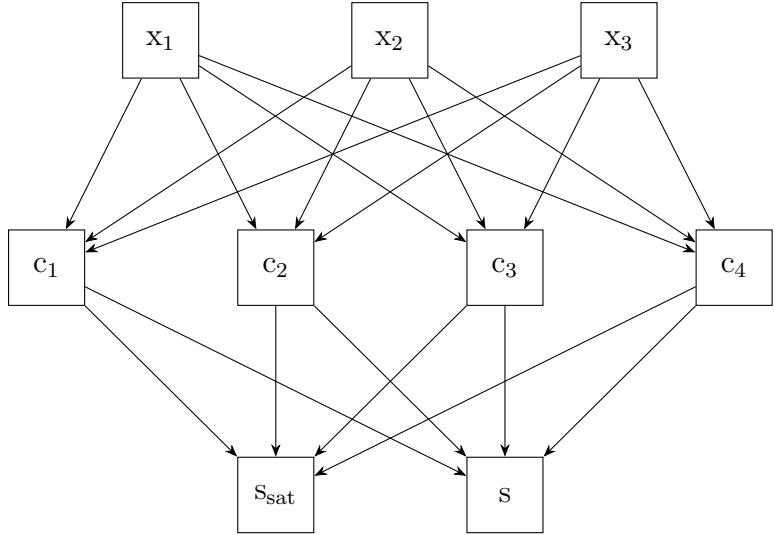

In the following, we extend this graph by adding two gadgets called respectively \emph{clause gadget} and \emph{variable gadget} that will ensure that any legal routing for the {\sc LR-GRMM} instance is coherent with the choice of a satisfying assignment of the {\sc 3-SAT-(2,2)} instance, if any.

\paragraph{Clause gadget.}

In the core multigraph, for each clause vertex $c_j, j \in \{1, \dots, m \}$, we have $A^+(c_j) = \{a_{j,0}, a_{j,1}\}$. We add for every $j$ a set of faces $F(c_j) = \{f_j^{01}, f_j^{11} \}$ where $(\tail^A(f_j^{01}), \head^A(f_j^{01})) = (a_{j,0}, a_{j,1})$ and $(\tail^A(f_j^{11}), \allowbreak \head^A(f_j^{11})) = (a_{j,1}, a_{j,1})$. See Fig.~\ref{fig:gadget_clause1} for an illustration of the clause gadget.

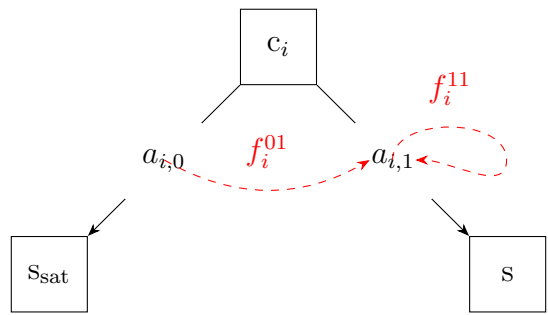
\begin{figure}[!htbp]
    \centering
    \begin{tikzpicture}[
      node distance=2cm and 2cm,
      every node/.style={minimum size=1cm},
      every path/.style={-Stealth}
    ]

    \node[draw] (c)  {c$_i$};
    \node[draw] (s) [below right=of c] {s};
    \node[draw] (ss) [below left=of c] {s$_\text{sat}$};

    \draw (c)  to node[midway, fill=white] {$a_{i, 1}$} (s);
    \draw (c)  to node[midway, fill=white] {$a_{i, 0}$} (ss);

    \draw[red, dashed, bend right] (-1.5, -1.5) to node[midway,  above] {$f_i^{01}$} (1.2, -1.5) ;
    \draw[red, dashed, bend right=30] (1.5, -1.5) to[out=90, in=90] node[midway, above] {$f_i^{11}$} (3, -1.5) to[out=90, in=180]   (1.8, -1.5);
    \end{tikzpicture}

    \caption{The clause gadget for $c_i$. Faces are represented by dashed red arcs. }
    \label{fig:gadget_clause1}
\end{figure}

\paragraph{Variable gadget.}

For every $x_i, i \in \{1, \dots, n \}$,
we add two sink vertices $s_i^\text{start}$ and $ s_i^\text{end} $,
and two arcs $a^\text{start}_i$ and $a^\text{end}_i$, respectively from $x_i$ to $s_i^\text{start}$ and $ s_i^\text{end}$. At this point, we have
\[A^+(x_i) = \{a^\text{start}_i, a^\text{end}_i, a^+_{i,1}, a^+_{i,2}, a^-_{i,1}, a^-_{i,2} \}\]
where:
\begin{itemize}
\item $a^+_{i,1}, a^+_{i,2} $ (resp.  $a^-_{i,1}, a^-_{i,2} $) are arcs whose heads are the clause vertices  where $x_i$ appears unnegated (resp. negated)
\item $a^\text{start}_i$ is such that $\head(a^\text{start}_i) = s_i^\text{start}$
\item $a^\text{end}_i$ is such that $\head(a^\text{end}_i) = s_i^\text{end}$.
\end{itemize}

We now add for every $i$ the sets of faces $F(x_i) = \{ f_{i,1}^+, f_{i,2}^+, f_{i,3}^+, f_{i,1}^-, f_{i,2}^-, f_{i,3}^-  \}$, as illustrated in Fig.~\ref{fig:gadget_variable2}. Faces $f_{i,1}^+, f_{i,2}^+, f_{i,3}^+$ are the \emph{positive faces} of $x_i$ and $f_{i,1}^-, f_{i,2}^-, f_{i,3}^-$ are the \emph{negative faces} of $x_i$.

\begin{figure}[!htbp]
    \centering
    \begin{tikzpicture}[
      node distance=2cm and 2cm,
      every node/.style={minimum size=1cm},
      every path/.style={-Stealth}
    ]

    \node[draw] (xi) {x$_i$};
    \node[draw] (c2) [below=of xi] {s$_i^\text{end}$};
    \node[draw] (c1) [left=of xi] {c$_2$};
    \node[draw] (c4) [above=of xi] {s$_i^\text{start}$};
    \node[draw] (iplus) [right=of c4] {c$_3$};
    \node[draw] (si) [right=of xi] {c$_4$};
    \node[draw] (imoins) [left=of c4] {c$_1$};

    \draw (xi)  to node[midway, fill=white] {$a_{i,2}^+$} (c1);
    \draw (xi)  to node[midway, fill=white] {$a_{i}^\text{end}$} (c2);
    \draw (xi)  to node[midway, fill=white] {$a_{i}^\text{start}$} (c4);
    \draw (xi)  to node[midway, fill=white] {$a_{i,1}^-$} (iplus);
    \draw (xi)  to node[midway, fill=white] {$a_{i,1}^+$} (imoins);
    \draw (xi)  to node[midway, fill=white] {$a_{i,2}^-$} (si);

    \draw[red, dashed] (0.1, 1.5) to node[midway,  above] {$f_{i,1}^-$} (1.2, 1.5) ;
    \draw[red, dashed] (-0.1, 1.5) to node[midway,  above] {$f_{i,1}^+$} (-1.2, 1.5) ;
    \draw[red, dashed] (1.4, 1.3) to node[midway,  right] {$f_{i,2}^-$} (1.4, 0.2) ;
    \draw[red, dashed] (-1.4, 1.3) to node[midway,  left] {$f_{i,2}^+$} (-1.4, 0.2) ;
    \draw[red, dashed, bend left] (1.3, -0.2) to node[midway,  right] {$f_{i,3}^-$} (0.2, -1.6) ;
    \draw[red, dashed, bend right] (-1.3, -0.2) to node[midway,  left] {$f_{i,3}^+$} (-0.3, -1.6) ;
    
    \end{tikzpicture}

    \caption{The variable gadget for $x_i$ which is assumed to appear unnegated in clauses $c_1$ and $c_2$, and negated in clause $c_3$ and $c_4$. Faces are represented by dashed  red arcs. The positive faces are $f_{i,1}^+, f_{i,2}^+, f_{i,3}^+$, and the negative faces are $f_{i,1}^-, f_{i,2}^-, f_{i,3}^-$.}
    \label{fig:gadget_variable2}
\end{figure}
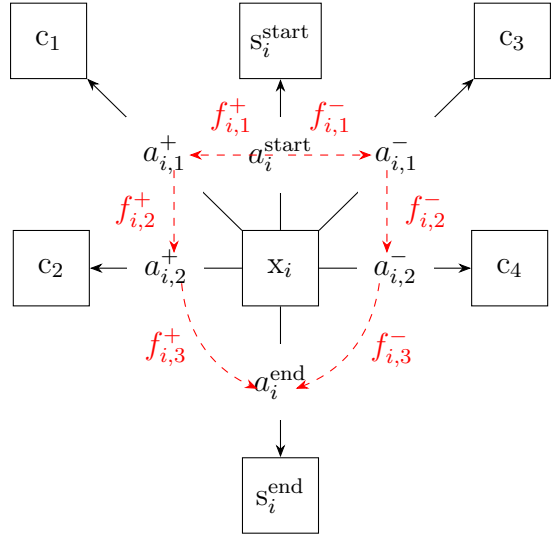

\paragraph{Full specification of the {\sc LR-GRMM} instance.}

Finally, to a {\sc 3-SAT-(2,2)} instance, we associate the {\sc LR-GRMM} problem $((G^A, G^V), r, r', \sigma, \sigma')$ defined by 
\begin{itemize}

    \item $(G^A, G^V)$ is the GRM multigraph  as described above by union of the core multigraph, and all variable and clause gadgets;
    
    \item $\sigma = 3 \sum_{i=1}^n x_i$  (i.e. three particles on each variable vertex) ;

    \item $\sigma'  = m   s_\text{sat} + (2n -m) s + \sum_{i=1}^n s_i^\text{start} $ ;

    \item $r = \sum_{i=1}^n a_{i}^\text{start} + \sum_{j = 1}^m a_{j,0} $ ;

    \item $r' = \sum_{i=1}^n a_{i}^\text{end} + \sum_{j = 1}^m a_{j,1} $.

\end{itemize}

\paragraph{Proof of the reduction.}

It is given in two separate lemmas.

\begin{lemma}
    If the {\sc 3-SAT-(2,2)} instance is satisfiable, then the corresponding instance of {\sc LR-GRMM} has a solution.
\end{lemma}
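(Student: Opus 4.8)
The plan is constructive. From a satisfying assignment $\tau$ of the {\sc 3-SAT-(2,2)} instance I will exhibit a routing vector $\phi \in C_F^+$ together with an explicit legal routing sequence realising $(r,\sigma) \legalseq{\cL}{\phi} (r',\sigma')$, which already witnesses $(r,\sigma)\legalseq{\cL}{*}(r',\sigma')$, hence positivity of the {\sc LR-GRMM} instance. Define $\phi$ as follows: for each variable $x_i$ put into $\phi$ the three \emph{positive} faces $f_{i,1}^+ + f_{i,2}^+ + f_{i,3}^+$ if $\tau(x_i)$ is true, and the three \emph{negative} faces $f_{i,1}^- + f_{i,2}^- + f_{i,3}^-$ if $\tau(x_i)$ is false; for each clause $c_j$ let $k_j \geq 1$ be the number of literals of $c_j$ made true by $\tau$, and put $f_j^{01} + (k_j-1)\,f_j^{11}$ into $\phi$.

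The legal sequence has two phases. \textbf{Phase 1 (variable gadgets).} Processing the $x_i$ in any fixed order, route the three chosen faces of $G^A(x_i)$ in the order $f_{i,1}^{\pm},\, f_{i,2}^{\pm},\, f_{i,3}^{\pm}$. Legality at these steps is checked as follows: the $G^A$-tail always carries an arc-particle because the arc-particle of $G^A(x_i)$ starts on $a_i^{\text{start}}$ and travels along the directed path $a_i^{\text{start}} \to a_{i,1}^{\pm} \to a_{i,2}^{\pm} \to a_i^{\text{end}}$; and the common $G^V$-tail of these three arcs is $x_i$, which starts with $3$ particles and loses exactly one per step. The net effect on $x_i$ is that it is emptied, one particle is deposited on $s_i^{\text{start}}$ (by the $f_{i,1}^{\pm}$-step), and one particle on each of the two clause vertices containing the literal of $x_i$ satisfied by $\tau$ (by the $f_{i,2}^{\pm}$- and $f_{i,3}^{\pm}$-steps), while the arc-particle of $G^A(x_i)$ ends on $a_i^{\text{end}}$. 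Distinct variable gadgets share no faces and Phase 1 only moves vertex-particles out of variable vertices, so these routings do not interfere. \textbf{Phase 2 (clause gadgets).} After Phase 1, each clause vertex $c_j$ holds exactly $k_j$ particles, one for each literal of $c_j$ satisfied by $\tau$; here $k_j\ge 1$ precisely because $\tau$ satisfies $c_j$. For each $j$, route $f_j^{01}$ once — legal since $r_{a_{j,0}}=1$ is still untouched and $\sigma_{c_j}=k_j\ge 1$ — which sends one particle to $s_{\text{sat}}$ and moves the arc-particle of $G^A(c_j)$ from $a_{j,0}$ to $a_{j,1}$; then route the loop $f_j^{11}$ exactly $k_j-1$ times, each step legal because $f_j^{11}$ keeps an arc-particle on $a_{j,1}$ and $c_j$ still carries a particle, each step sending one further particle to $s$.

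Finally one tallies the resulting configuration and compares it to $(r',\sigma')$. Every variable and clause vertex is empty; each $s_i^{\text{start}}$ carries $1$, giving $\sum_i s_i^{\text{start}}$; $s_{\text{sat}}$ received exactly one particle per clause, i.e. $m$ in total; and $s$ received $\sum_j (k_j-1) = \bigl(\sum_j k_j\bigr) - m = 2n - m$, using that each variable contributes exactly two satisfied literal occurrences (its two positive occurrences if true, its two negative ones if false). On the arc side the arc-particles end on $\sum_i a_i^{\text{end}} + \sum_j a_{j,1}$. This is exactly $(r',\sigma')$, so the constructed sequence is a legal routing sequence from $(r,\sigma)$ to $(r',\sigma')$. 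The only genuinely delicate points — and the places where the hypotheses are used — are that Phase 1 empties each $x_i$ \emph{exactly} (which relies on the $(2,2)$-structure giving each variable exactly two positive and two negative occurrences, matched with the choice $\sigma = 3\sum_i x_i$) and that each $k_j \ge 1$, so the mandatory first step $f_j^{01}$ of every clause gadget — the only step that can feed $s_{\text{sat}}$ — is legal; the latter is exactly where satisfiability of $\tau$ is invoked.
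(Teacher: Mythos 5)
Your proof is correct and follows essentially the same construction as the paper's: the same variable routing vector (positive or negative faces according to the assignment), the same clause routing ($f_j^{01}$ once plus loops $f_j^{11}$ for the remaining particles), and the same tally showing the result is exactly $(r',\sigma')$. You are somewhat more explicit than the paper about the step-by-step legality checks and the exact multiplicity $k_j-1$ of the loop faces, but there is no substantive difference in approach.
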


\begin{proof}
    Assume that the {\sc 3-SAT-(2,2)} instance has a satisfiable assignment. We will construct a solution for the corresponding {\sc LR-GRMM} problem $((G^A,G^V), r, r', \sigma, \sigma')$ by giving a routing vector $\phi = \phi^\text{var} + \phi^\text{clause}$, where the supports of $\phi^\text{var}$ and $\phi^\text{clause}$ are respectively contained in the union of the faces set $F(x_i)$ for all variable vertices $x_i$, and in the union of $F(c_i)$ for all clause vertices $c_i$. We will prove that  $(r, \sigma) \legalseq{\cL}{\phi} (r', \sigma')$.
    
    Let $\phi^\text{var} \in C_F$ be defined as 
    $\phi^\text{var} = \sum_{i=1}^n \phi_i^\text{var}$,
    where $\phi_i^\text{var} = f^+_{i,1} + f^+_{i,2} + f^+_{i,3}$
    if the satisfying assignment of $x_i$ is true,
    and $\phi_i^\text{var} = f^-_{i,1} + f^-_{i,2} + f^-_{i,3}$
    if it is false. In both cases, note that 
    $\partial^A(\phi_i^\text{var}) = a_i^\text{end} -a_i^\text{start}$.

    If we define $ (r_0,\sigma_0) = (r,\sigma) + \cL(\phi_\text{var})$, then it follows that
    $r_0 = \sum_{i=1}^n a_i^\text{end} + \sum_{j = 1}^m a_{j,0} $. On the other hand, all particles were transferred
    in this routing from variable vertices $x_i$ to clause vertices $c_j$ and sinks $s_i^\text{start}$, and
    we can write $\sigma_0 = \sum_{i=1}^m s_i^\text{start} +  \sum_{j=1}^m \ell_j c_j$  for some integers $\ell_1, \dots, \ell_m$ satisfying $1 \leq \ell_j \leq 3$ for all $1 \leq j \leq m$, and $\sum_{i=1}^m \ell_i = 2n$ (the fact that $\ell_j \geq 1$ follows from the assignment satisfying the formula). Note that
    this routing can be done legally, simply by following the order $1,2,3$ on faces.

    Then, consider $\phi_\text{clause}$ as the routing vector consisting of legally routing from $(r_0, \sigma_0)$ all particles $\ell_j$ from clause vertices $c_j$ to sinks $s_\text{sat}$ and $s$, for all $1 \leq j \leq m$. Since $1 \leq \ell_j \leq 3$ for all $1 \leq j \leq m$, at least one outgoing arc of each clause vertex $c_i$ is routed and will emit one particle to $s_\text{sat}$, whereas the other routed arcs in $A^+(c_j)$, if any, will emit particles to sink $s$. Then we have $(r', \sigma') = (r, \sigma) + \cL(\phi^\text{var} + \phi^\text{clause})$ and the whole routing can be done legally.
\end{proof}




\begin{lemma}
    Let $((G^A, G^V), r, r', \sigma, \sigma')$ be the {\sc LR-GRMM} problem associated to a  given {\sc 3-SAT-(2,2)} instance. If $((G^A, G^V), r, r', \sigma, \sigma')$  has a solution, then the {\sc 3-SAT-(2,2)} instance is satisfiable.
\end{lemma}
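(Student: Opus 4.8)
The approach is to read a satisfying assignment directly off the routing vector, by checking, at each variable vertex, whether it used the ``positive'' faces or the ``negative'' faces. Everything here is purely algebraic: from a solution of the instance we only use that there is a routing vector $\phi \in C_F^+$ with $(r',\sigma') = (r,\sigma) + \cL(\phi)$; the legality (the order of the elementary steps) plays no role in this direction. Write $\alpha = \tail^A(\phi) \in C_A^+$, so that $r' = r + \partial^A(\phi)$ and $\sigma' = \sigma + \partial^V(\alpha)$. Since the face sets $F(v)$ for $v \in V_\text{var}\cup V_\text{clause}$ partition $F$, decompose $\phi = \sum_{v} \phi_{|v}$ with $\phi_{|v}\in C^+_{F(v)}$; as every face of $F(v)$ has both endpoints in $A^+(v)$, restricting $r' = r + \partial^A(\phi)$ to $A^+(v)$ gives $\partial^A(\phi_{|v}) = r'_{|v} - r_{|v}$ for each $v$.

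The first step — which I expect to be the crux — is the analysis at a variable vertex $x_i$. There $r_{|x_i} = a_i^\text{start}$ and $r'_{|x_i} = a_i^\text{end}$, hence $\partial^A(\phi_{|x_i}) = a_i^\text{end} - a_i^\text{start}$. By construction $G^A(x_i)$ is the union of two internally disjoint directed paths from $a_i^\text{start}$ to $a_i^\text{end}$: the positive one $P_i^+ = f_{i,1}^+ + f_{i,2}^+ + f_{i,3}^+$ through $a_{i,1}^+, a_{i,2}^+$, and the negative one $P_i^- = f_{i,1}^- + f_{i,2}^- + f_{i,3}^-$ through $a_{i,1}^-, a_{i,2}^-$. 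This graph has six vertices and six arcs and is connected, so its integral cycle space is generated by a single element, which one checks is $z_i := P_i^+ - P_i^-$; since $P_i^+$ is one solution of $\partial^A(\psi) = a_i^\text{end} - a_i^\text{start}$ in $C_{F(x_i)}$, the general solution is $P_i^+ + k z_i$, $k\in\Z$. The coefficient of a positive face is then $1+k$ and that of a negative face is $-k$, so $\phi_{|x_i}\in C^+_{F(x_i)}$ forces $k\in\{0,-1\}$, i.e. $\phi_{|x_i}=P_i^+$ or $\phi_{|x_i}=P_i^-$. Define the assignment $\beta$ by $\beta(x_i)=\text{true}$ if $\phi_{|x_i}=P_i^+$ and $\beta(x_i)=\text{false}$ otherwise; note that then $\alpha$ carries a $1$ on each of $a_{i,1}^+,a_{i,2}^+$ when $\beta(x_i)$ is true, a $1$ on each of $a_{i,1}^-,a_{i,2}^-$ when $\beta(x_i)$ is false, and $0$ on the two arcs of the opposite sign.

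Next I carry out the same local analysis at a clause vertex $c_j$, where $A^+(c_j)=\{a_{j,0},a_{j,1}\}$ and $G^A(c_j)$ consists of the face $f_j^{01}$ from $a_{j,0}$ to $a_{j,1}$ together with the loop $f_j^{11}$ at $a_{j,1}$. Here $\partial^A(\phi_{|c_j}) = a_{j,1}-a_{j,0}$, the cycle space is $\Z f_j^{11}$, and $f_j^{01}$ is a particular solution, so $\phi_{|c_j} = f_j^{01} + t_j f_j^{11}$ with $t_j\in\Z$, and nonnegativity forces $t_j\ge 0$. Consequently $\alpha$ restricted to $A^+(c_j)$ equals $a_{j,0} + t_j\, a_{j,1}$, so $c_j$ emits exactly $1+t_j$ particles along arcs of $G^V$. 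The remaining point is conservation at $c_j$: since $\sigma_{c_j}=\sigma'_{c_j}=0$ and $\sigma' = \sigma + \partial^V(\alpha)$, the number $\ell_j := \sum_{a\in A^-(c_j)}\alpha_a$ of particles entering $c_j$ equals $1+t_j\ge 1$.

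Finally I identify $\ell_j$ with the number of literals of $c_j$ satisfied by $\beta$. The arcs of $A^-(c_j)$ are precisely the three literal arcs $a_{i,k}^+$ or $a_{i,k}^-$ with head $c_j$; by the variable-vertex step $\alpha$ puts a $1$ on such an $a_{i,k}^+$ exactly when $\beta(x_i)$ is true (the literal $x_i$ is satisfied), a $1$ on such an $a_{i,k}^-$ exactly when $\beta(x_i)$ is false (the literal $\overline{x_i}$ is satisfied), and $0$ otherwise. Hence $\ell_j$ counts exactly the satisfied literal occurrences of $c_j$, and $\ell_j\ge 1$ means $c_j$ has a satisfied literal. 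As $j$ was arbitrary, $\beta$ satisfies the formula, completing the reduction. The only genuinely delicate point is the variable-vertex step — deducing, from the two-disjoint-paths structure and the one-dimensionality of the cycle space, that nonnegativity leaves only the options $P_i^+$ and $P_i^-$; the clause-vertex computation and the conservation count are then routine bookkeeping.
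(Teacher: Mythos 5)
Your proof is correct, and it reaches the same structural conclusions as the paper's argument (each variable gadget forces a binary choice between its positive and negative path; each clause vertex must receive at least one particle), but it derives them by a genuinely different and purely algebraic route. The paper exploits legality: at $x_i$ it argues that the only \emph{legal orders} of face-routings are $(f_{i,1}^+,f_{i,2}^+,f_{i,3}^+)$ or $(f_{i,1}^-,f_{i,2}^-,f_{i,3}^-)$, and at $c_j$ it observes that routing $f_j^{01}$ legally requires a particle to be present on $c_j$, which starts empty. You instead discard the ordering entirely and retain from the hypothesis only that there is a \emph{nonnegative} routing vector $\phi$ with $(r',\sigma')=(r,\sigma)+\cL(\phi)$; the binary choice at $x_i$ then follows from the rank-one cycle space of $G^A(x_i)$ (generated by $P_i^+-P_i^-$) together with $\phi\geq 0$, and the lower bound $\ell_j\geq 1$ at $c_j$ follows from flow conservation $\sigma'_{c_j}=\sigma_{c_j}=0$ rather than from a legality obstruction. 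Your local computations check out against the gadget definitions ($G^A(x_i)$ is indeed two internally disjoint directed paths on six faces, and $G^A(c_j)$ is $f_j^{01}$ plus the loop $f_j^{11}$), and the final counting of satisfied literal occurrences is sound even if a variable occurs twice in the same clause. What your version buys is a slightly stronger statement: satisfiability already follows from the existence of any nonnegative routing vector realizing the linear equivalence, not just from a legal routing sequence; the paper's version is shorter because it leans on the combinatorics of legality that the gadgets were designed around.
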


\begin{proof}

    Assume that $((G^A, G^V), r, r', \sigma, \sigma')$ has a solution and consider a legal routing sequence from $(r,\sigma)$ to $(r',\sigma')$.
    
    Consider a variable vertex $x_i$ with  $i \in \{1,\dots,n\}$:
    since there are 3 particles on $x_i$ in $\sigma$, that in $r$ the only arc
    of $A^+(x_i)$ is $a_i^\text{start}$, and that in $r'$ the only arc
    of $A^+(x_i)$ is $a_i^\text{end}$, we see that the only possible
    legal routings in $A^+(x_i)$ are in that order $(f_{i,1}^+$, $f_{i,2}^+$, $f_{i,3}^+)$ or $(f_{i,1}^-$, $f_{i,2}^-$, $f_{i,3}^-)$ (consider Fig. \ref{fig:gadget_variable2}). Hence, one particle is routed to $s_i^\text{start}$, and the other two are either routed to the 2 clause vertices where $x_i$ appears negated, or to the 2 clause vertices where $x_i$ appears unnegated.

    Consider now a clause vertex $c_j$ with  $j \in \{1,\dots,m\}$:
    since the only arc in $r$ of $A^+(c_j)$  is $a_{j,0}$, and in $r'$ 
    the only arc of $A^+(c_j)$  is $a_{j,1}$, we see that there must be at least one particle routed from a variable vertex to $c_i$ during the routing sequence.

    From this, we can built a truth assignment, by letting 
    $x_i$ be true if and only if two of its 4 particles were routed  to the 2 clauses where $x_i$ is unnegated. We then see that the {\sc 3-SAT-(2,2)} problem is satisfiable.
\end{proof}

\subsection{Cyclic case}
\label{sec:cyclic_GRM_reachability}

We now consider the case where the GRM multigraph is cyclic. We will demonstrate that the {\sc Legal Reachability in Cyclic GRM Multigraph} problem can be solved in polynomial time. To achieve this, a deeper understanding of the set of routing vectors between the two configurations in question is required.

\subsubsection{Periodicity in cyclic GRM multigraphs}

Consider a cyclic GRM Multigraph. We define a {\bf periodic routing vector} to be a vector
$\phi \in C_F$, such that 
$$(r,\sigma) \lineareq{\cL}{\phi} (r,\sigma)$$
for any $(r,\sigma) \in C_A \times C_V$ (note that by linearity, this is independent on the choice of configurations). The intuition behind this notion is clear: it is a way of routing particles so that we come back to the initial configuration. In this section, we describe the structure of periodic routing vectors.

Some partial results are known. For instance, in classical rotor-routing mechanics of a Eulerian directed graph, it is possible to come back to  $(\rho,\sigma)$ after routing exactly once along every arc of $G$ and this is the only way to do so, up to repeating this process. The same is true in a general strongly connected directed graph, but some arcs should be routed different number of times, depending on the graph.

Let us come back to cyclic GRM Multigraphs and suppose that we have 
$$(r,\sigma) \lineareq{\cL}{\phi} (r,\sigma).$$
It follows that $\cL(\phi)=0$, i.e. $\partial^A(\phi)=0$ and $\partial^V\circ \tail^A(\phi)=0$. The set of periodic routing vectors is then exactly equal to 
$\ker(\partial^A) \cap \ker(\partial^V \circ \tail^A)$.

With a slight abuse of notation, let us
    define for all $v \in V$ the sum $F(v) \in C_F$ as 
    the sum of faces in $G^A(v)$, i.e. 
    $\sum_{f \in F(v)} f$, and extend $F$ to an
    homomorphism $F : C_V \rightarrow C_F$.
For $p \in C_V$, $F(p)$ can be thought as a routing vector
that makes $p_v$ full turns on the cyclic rotor at $v$,
for every $v \in V$.
We can then use the following result (recall that primitive period vectors are defined in~\thref{thm:kerdelta}).

\begin{proposition}
    \thlabel{prop:period_rotor} 
    Let $(G^A, G^V)$ be a cyclic GRM multigraph
    and let $k$ be the number of leaf components of $G^V$ that are not singletons $\{s\}$ where $s \in S$ is a sink.
    Then the rank of $\ker(\partial^A) \cap \ker(\partial^V \circ \tail^A)$
    is $k$.
    More precisely, let $p_1, p_2, \dots, p_k \in C_V$ be the primitive period vectors of $G^V$ corresponding to these
    components. 
    A basis 
    of $\ker(\partial^A) \cap \ker(\partial^V \circ \tail^A)$ is 
    then $(F(p_1), F(p_2),\dots,F(p_k))$.
\end{proposition}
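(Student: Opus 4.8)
The plan is to peel apart $\cL = \partial^A \times (\partial^V \circ \tail^A)$ in two stages, first describing $\ker(\partial^A)$ and then intersecting it with the second kernel. Since the mechanism is cyclic, $G^A$ is the disjoint union of the directed cycles $G^A(v)$ over $v \in V \setminus S$, so I would begin by noting that Kirchoff's law (\thref{prop:kirchoff}), applied in each component $G^A(v)$, forces the face-coefficients of any $r \in \ker(\partial^A)$ to be constant around that cycle. Hence $\ker(\partial^A)$ is freely generated by the vectors $F(v) = \sum_{f \in F(v)} f$, one per non-sink vertex; equivalently $\ker(\partial^A)$ is the image $F(C_V)$, and since the supports $F(v) \subseteq F$ are pairwise disjoint, $F$ restricts to a group isomorphism from the subgroup of $C_V$ spanned by the non-sink vertices onto $\ker(\partial^A)$.

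Next I would push these generators through $\partial^V \circ \tail^A$. In a cyclic mechanism each arc of $A^+(v)$ is the tail of exactly one face of $G^A(v)$, so $\tail^A(F(v)) = \sum_{a \in A^+(v)} a$, and therefore $\partial^V(\tail^A(F(v))) = \sum_{a \in A^+(v)}(\head^V(a) - \tail^V(a)) = \Delta(v)$, the value at $v$ of the Laplacian homomorphism of $G^V$ from Section~\ref{sec:laplacian_matrix}. By linearity, $\partial^V(\tail^A(F(c))) = \Delta(c)$ for every $c$ in the subgroup spanned by the non-sink vertices. Combining the two stages, $\ker(\partial^A) \cap \ker(\partial^V \circ \tail^A)$ is exactly the image under $F$ of the set $\{c \in C_V : c_s = 0 \text{ for all } s \in S,\ \Delta(c) = 0\}$.

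It then remains to pin down $\ker\Delta$ among configurations that vanish on $S$, and here I would apply \thref{thm:kerdelta}. The leaf components of $G^V$ split into two classes: the sink-singletons $\{s\}$, whose primitive period vector is simply $s$ (as $\Delta(s) = 0$), and the $k$ other leaf components $S_1, \dots, S_k$; since any leaf component that meets $S$ must be a single sink, each $S_i$ lies inside $V \setminus S$, so the corresponding primitive period vectors $p_1, \dots, p_k$ are supported away from the sinks. As the $p_i$ together with the sinks $s \in S$ form a basis of $\ker\Delta$, an element of $\ker\Delta$ vanishes on every sink if and only if it is a $\Z$-combination of $p_1, \dots, p_k$ alone. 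Applying the isomorphism $F$ turns the independent family $p_1, \dots, p_k$ into the independent family $F(p_1), \dots, F(p_k)$, which is then the claimed basis; in particular the rank equals $k$.

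The one point requiring care is the sink bookkeeping: $\ker\Delta$ always contains each sink as a period vector, so these must be discarded — which is exactly why the count $k$ excludes the leaf components $\{s\}$, $s \in S$ — and one must separately check that the surviving period vectors are genuinely supported on $V \setminus S$, using the observation that any leaf component containing a sink is that sink. Everything else is linear-algebra bookkeeping on the disjoint-cycle structure of $G^A$ and does not present a genuine obstacle.
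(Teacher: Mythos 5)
Your proposal is correct and follows essentially the same route as the paper: identify $\ker(\partial^A)$ with $C_{V\setminus S}$ via the isomorphism $F$ using Kirchoff's law on each directed cycle $G^A(v)$, observe that $\partial^V\circ\tail^A\circ F=\Delta$, and then invoke \thref{thm:kerdelta} to extract the primitive period vectors $p_1,\dots,p_k$ as the part of $\ker\Delta$ supported on $V\setminus S$. Your explicit bookkeeping of the sink-singleton leaf components is a slightly more careful rendering of a step the paper treats tersely, but it is the same argument.
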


\begin{proof}
    With notation above, it is easy to see that the Laplacian operator $\Delta$ is
    $\Delta = - \partial^V \circ \tail^A \circ F$.

    Because of the cyclic structure of $G^A(v)$ for every 
    $v \in V \setminus S$, by \thref{prop:kirchoff}
    it is easy to see that $(F(v))_{v \in V \setminus S}$ is a basis of $\ker(\partial^A)$, and that $F$ induces an isomorphism
    from $C_{V \setminus S} \subset C_V$ onto
    $\ker(\partial^A)$. An element $F(\sigma) \in \ker(\partial^A)$,
    with $\sigma \in C_{V \setminus S}$, is then
    in $\ker(\partial^V \circ \tail^A)$ if and only if 
    $\partial^V\circ \tail^A( F(\sigma)) = -\Delta(\sigma) = 0$.
    Hence, 
    $$\ker(\partial^A) \cap \ker(\partial^V \circ \tail^A) = F(\ker(\Delta)).$$
    By \thref{thm:kerdelta}, 
    a basis for $\ker(\Delta)$ is given by $p_1, p_2, \dots, p_k$ together with each element of $S$, and so
    $(F(p_i))_{1 \leq i \leq k}$ is a basis of $F(\ker(\Delta))$.
\end{proof}

Interpreting this result for standard rotor-routing, we can state that:
\begin{corollary}
\thlabel{cor:rotor_period}
    Let $(G^A,G^V)$ be a cyclic rotor multigraph. Then:
    \begin{itemize}
        \item if $(G^A,G^V)$ is stopping and $(r,\sigma) \lineareq{\cL}{*} (r',\sigma')$ then the routing vector from $(r,\sigma)$ to $(r',\sigma')$ is unique;
        \item if $(G^A,G^V)$ is strongly connected and $(r,\sigma) \lineareq{\cL}{\phi} (r',\sigma')$ then routing vectors are of the form  $\phi + k p_1$ with $k \in \Z$, and $ \ker(\partial^A) \cap \ker(\partial^V \circ \tail^A) = \Z \cdot F(p_1)$, with $p_1>0$, $p_1 \in C_V$. 
    \end{itemize}
\end{corollary}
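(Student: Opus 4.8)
The plan is to reduce the whole statement to a computation of $\ker(\cL)$ and then invoke \thref{prop:period_rotor}. First I would record two elementary observations. Since $\cL = \partial^A \times (\partial^V \circ \tail^A)$, we have $\ker(\cL) = \ker(\partial^A) \cap \ker(\partial^V \circ \tail^A)$. And if $(r,\sigma) \lineareq{\cL}{\phi_0} (r',\sigma')$ for some fixed $\phi_0$, then an element $\phi \in C_F$ is a routing vector from $(r,\sigma)$ to $(r',\sigma')$ if and only if $\cL(\phi) = \cL(\phi_0)$, i.e. $\phi - \phi_0 \in \ker(\cL)$; hence the set of all such routing vectors is the coset $\phi_0 + \ker(\cL)$. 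So everything boils down to determining $\ker(\partial^A) \cap \ker(\partial^V \circ \tail^A)$, which by \thref{prop:period_rotor} amounts to counting the leaf components of $G$ that are not singleton sinks.

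For the stopping case, I would show that a stopping rotor multigraph has no leaf component besides the singleton sinks: given a leaf component $C$ and a vertex $v \in C$, a directed path from $v$ to a sink exists (by the stopping hypothesis) and cannot leave $C$ because $A(C, V \setminus C) = \emptyset$, so its terminal sink $s$ lies in $C$; since $C$ is strongly connected and $s$ has no outgoing arc, this forces $C = \{s\}$. Therefore $k = 0$ in \thref{prop:period_rotor}, $\ker(\cL)$ has rank $0$, i.e. $\ker(\cL) = \{0\}$, and the routing vector between two $\cL$-equivalent configurations is unique.

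For the strongly connected case, $G^V$ has a single strongly connected component, namely $V$ itself, which is trivially a leaf component and the only one; and $V$ is not a singleton sink, since a strongly connected multigraph on at least two vertices has no sink vertex. Hence $k = 1$, and \thref{prop:period_rotor} yields $\ker(\partial^A) \cap \ker(\partial^V \circ \tail^A) = \Z \cdot F(p_1)$ where $p_1 \in C_V$ is the unique primitive period vector of $G^V$; moreover $p_1 > 0$ by part (ii) of \thref{thm:kerdelta} applied to the leaf component $S_1 = V$. Combining this with the coset description from the first step, the routing vectors from $(r,\sigma)$ to $(r',\sigma')$ are precisely $\phi + k F(p_1)$ for $k \in \Z$ (where, as in the statement, $p_1$ stands for $F(p_1)$).

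I do not expect a genuine obstacle: the only points that require a little care are the two structural facts about leaf components (in stopping and in strongly connected multigraphs) and the bookkeeping observation that routing vectors form a coset of $\ker(\cL)$ — and not of $\ker(\Delta)$ or of $\ker(\partial^A)$ taken separately.
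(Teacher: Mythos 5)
Your proposal is correct and follows essentially the same route as the paper: both reduce the statement to \thref{prop:period_rotor} by identifying the set of routing vectors with a coset of $\ker(\partial^A)\cap\ker(\partial^V\circ\tail^A)$ and then counting the leaf components that are not singleton sinks ($k=0$ in the stopping case, $k=1$ in the strongly connected case). You simply spell out the structural facts about leaf components that the paper's two-line proof leaves implicit.
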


\begin{proof}
    In the first case, note that all leaf  components
    are of the form $\{s\}$ with $s \in S$, hence $ \ker(\partial^A) \cap \ker(\partial^V \circ \tail^A)$ is trivial.

    In the second case, there is a single primitive vector $p_1>0$
    and $ \ker(\partial^A) \cap \ker(\partial^V \circ \tail^A)$ has rank $1$.
\end{proof}

In the case of a strongly connected multigraph, the existence of $p_1$ implies that if $(r,\sigma)  \lineareq{\cL}{*} (r',\sigma')$ then there is a minimal routing vector $\phi_1 \geq 0$ from $(r,\sigma)$ to $(r',\sigma')$. In other words,  if $\phi \geq 0$ is a routing vector from $(r,\sigma)$ to $(r',\sigma')$ then $\phi_1 \leq \phi$. 

To compute $\phi_1$, one can start with any routing vector $\phi$ such that $(r,\sigma) \lineareq{\cL}{\phi} (r',\sigma')$, which can be found using \thref{lem:routing_vector}. Then $\phi_1 = \phi + k^* p_1$, where $k^*$ is the smallest integer $k$ such that $\phi + k F(p_1)$ is nonnegative.

\subsubsection{Legal reachability in strongly connected cyclic GRM multigraphs}
\label{sec:toto}

Let $(G^A, G^V)$ be a strongly connected cyclic GRM multigraph and let $p \in C_V^+$ be the primitive period vector. 
We say that $(r, \sigma) \in C_A \times C_V$ is \emph{recurrent} if there is a legal routing sequence to itself with routing vector $F(p)$, i.e. $(r,\sigma) \legalseq{\cL}{F(p)}  (r,\sigma)$.
Note that by \thref{thm:charac_legal_rotor}, determining if $(r,\sigma)$ is recurrent can be done in polynomial time.

\begin{lemma}
    \thlabel{lem:recurrent}
    Let $(G^A, G^V)$ be a strongly connected cyclic GRM multigraph.
    Suppose that $(r,\sigma) \legalseq{\cL}{\phi_1}  (r',\sigma')$ for some routing vector $\phi_1 \in C_F^+$ such that $\phi_1 \geq F(p)$.
    Let $\phi_0 \in C_F^+$ be the smallest nonnegative routing vector from $(r,\sigma)$ to $(r',\sigma')$. Then $(r', \sigma')$ is recurrent and $(r,\sigma) \stackrel[\cL]{\phi_0+F(p)}{\longrightarrow} (r',\sigma')$.
\end{lemma}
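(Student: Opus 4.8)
The plan is to isolate a predicate $\Pi(r',\sigma')$, depending only on the \emph{target} configuration, that is equivalent (via \thref{thm:charac_legal_rotor}) to legality of \emph{any} routing whose routing vector dominates $F(p)$. Both conclusions then drop out by applying this equivalence to $\phi_1$, to $\phi_0+F(p)$, and to $F(p)$ itself.

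First I would record what the inequality $\phi\geq F(p)$ buys us. Since $G^V$ is strongly connected it has no sink, so $S=\emptyset$ and every coordinate of $p$ is $\geq 1$ by \thref{thm:kerdelta}. In a cyclic GRM multigraph each arc $a$ carries a single face $f_a=(a,\theta(a))$ with $\tail^A(f_a)=a$, so $F(p)=\sum_{a\in A}p_{\tail^V(a)}\,f_a$ and $\tail^A(F(p))=\sum_{a\in A}p_{\tail^V(a)}\,a$. Hence, for $\phi\in C_F^+$ with $\phi\geq F(p)$: every face is active in $\phi$; every arc is active in $\phi$ and in $\alpha:=\tail^A(\phi)$; and every vertex is active in $\alpha$. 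Plugging this into \thref{thm:charac_legal_rotor}, the set $T_A$ there reduces to $T^*:=\{a\in A:r'_{\theta(a)}>0\}$ (its second piece is empty, every arc being in $\alpha$); we get $\trans(\alpha,\sigma')=\{v\in V:\sigma'_v=0\}$; and the requirement ``$T_A$ consists of elements of $\alpha$'' from the definition of guiding is automatic. So the four conditions of \thref{thm:charac_legal_rotor} collapse to the predicate $\Pi(r',\sigma')$: $\sigma'\geq 0$; $r'\geq 0$; $T^*\cap A^+(v)\neq\emptyset$ for every $v\in V$; and, in the subgraph of $G^V$ with arc set $T^*$, every vertex $v$ with $\sigma'_v=0$ admits a directed path to some vertex $v'$ with $\sigma'_{v'}\neq 0$. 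Since $\Pi(r',\sigma')$ involves neither $\phi$ nor the source, I conclude: for every $\phi\in C_F^+$ with $\phi\geq F(p)$ and every $(r'',\sigma'')$ with $(r'',\sigma'')\lineareq{\cL}{\phi}(r',\sigma')$, one has $(r'',\sigma'')\legalseq{\cL}{\phi}(r',\sigma')$ if and only if $\Pi(r',\sigma')$ holds.

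Then I would feed in the three routing vectors. The hypothesis $(r,\sigma)\legalseq{\cL}{\phi_1}(r',\sigma')$ with $\phi_1\geq F(p)$ gives $\Pi(r',\sigma')$ by the previous paragraph. By \thref{prop:period_rotor}, $F(p)$ spans $\ker(\partial^A)\cap\ker(\partial^V\circ\tail^A)=\ker\cL$, so $\cL(F(p))=0$, whence $(r',\sigma')\lineareq{\cL}{F(p)}(r',\sigma')$; since $F(p)\geq F(p)$, the equivalence yields $(r',\sigma')\legalseq{\cL}{F(p)}(r',\sigma')$, i.e.\ $(r',\sigma')$ is recurrent. Similarly, $\phi_0$ being a routing vector from $(r,\sigma)$ to $(r',\sigma')$, we get $\cL(\phi_0+F(p))=\cL(\phi_0)=(r'-r,\sigma'-\sigma)$, so $(r,\sigma)\lineareq{\cL}{\phi_0+F(p)}(r',\sigma')$; and $\phi_0+F(p)\geq F(p)$ together with $\Pi(r',\sigma')$ yields $(r,\sigma)\legalseq{\cL}{\phi_0+F(p)}(r',\sigma')$, the remaining claim.

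The main obstacle I anticipate is the bookkeeping in the second paragraph: one must verify carefully that \emph{each} of conditions (i)--(iv) of \thref{thm:charac_legal_rotor} genuinely loses its dependence on $\phi$ and on the source once $\phi\geq F(p)$ --- in particular the ``guiding'' condition, which is phrased relative to $\sigma\lineareq{\partial^V}{\alpha}\sigma'$ but actually depends only on $\alpha$ (of full support here) and on $\sigma'$. One should also confirm the linear equivalences needed to invoke \thref{thm:charac_legal_rotor} in each application; this comes down to $\cL(F(p))=0$ (from \thref{prop:period_rotor}) and the identity $\partial^V\circ\tail^A\circ F=\Delta$ (from the proof of \thref{prop:period_rotor}), which gives $\partial^V(\tail^A(F(p)))=\Delta(p)=0$.
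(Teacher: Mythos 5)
Your proof is correct and follows essentially the same route as the paper's: both reduce the four conditions of \thref{thm:charac_legal_rotor} to a predicate depending only on $(r',\sigma')$ once the routing vector dominates $F(p)$, and then apply that equivalence to $\phi_1$, to $F(p)$, and to $\phi_0+F(p)$. Your explicit verification that $\cL(F(p))=0$ (needed to set up the linear equivalences before invoking \thref{thm:charac_legal_rotor}) is a small point the paper leaves implicit, but it is not a different argument.
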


\begin{proof}

    Let $\phi \in C_F$ satisfy $\phi \geq F(p)$ and
     $(r,\sigma) \lineareq{\cL}{\phi}  (r',\sigma')$.

    According to \thref{thm:charac_legal_rotor}, we have
    $(r,\sigma) \legalseq{\cL}{\phi}  (r',\sigma')$
    if and only if the following conditions hold, where
    $\alpha = \tail^A(\phi)$ 
        and ${\cal T}_A = \{a \in \alpha: r'_{\theta(a)} > 0 \} \cup \{a \in \alpha : \theta(a) \notin \alpha \}$:
    \begin{enumerate}[(i)]
        \item $\sigma'_v \geq 0$ for every vertex $v$ active in $\alpha$;
        \item $ r'_a \geq 0$ for every arc $a \in \alpha$;
        \item ${\cal T}_A \cap A^+(v) \neq \emptyset$ for every vertex $v$ active in $\alpha$;
        \item ${\cal T}_A$ is guiding for $\sigma  \lineareq{\partial^V}{\alpha} \sigma'$.
    \end{enumerate}

    Given that $\sum_{v \in V} v \leq p$, and $F(p) \leq \phi$, it follows that all arcs are in $\alpha$ and all vertices
    are active in $\alpha$. Hence $\{a \in \alpha : \theta(a) \notin \alpha \} $ is an empty set, i.e. 
    ${\cal T}_A = \{a \in \alpha: r'_{\theta(a)} > 0 \}$.
    Then the set of conditions (i), (ii), (iii) and (iv) can be simplified as:
    \begin{enumerate}[(1)]
        \item $ r' \in C_A^+$;
        \item $\sigma' \in C_V^+$;
        \item for every vertex $v \in V$, there is $a \in A^+(v)$ with $r'_a \geq 1$;
        \item for every $v \in V$ such that $\sigma'(v)=0$,
        there is a directed path within the set $\{a \in A : r'_{\theta(a)} > 0 \}$ to a vertex $v'$ where $\sigma'(v') > 0$.
    \end{enumerate}

    Therefore, these conditions do not depend on the initial configuration $(r, \sigma)$, but only on $(r',\sigma')$.
    Since they are satisfied with $\phi = \phi_1$, it follows that
    $(r',\sigma') \legalseq{\cL}{F(p)}  (r',\sigma')$
    is also true, as well as
    $(r,\sigma) \stackrel[\cL]{\phi_0+F(p)}{\longrightarrow}  (r',\sigma')$.

\end{proof}

The contrapositive of this lemma indicates that any legal routing sequence from $(r, \sigma)$ to a non-recurrent configuration $(r', \sigma')$ with routing vector $\phi \in C_F^+$ must satisfy the condition that there exists a face $f \in F$ such that $\phi_f < p_a$ where $p$ is the primitive period routing vector and $a = \tail^A(f)$. This is stated in the following corollary.

\begin{corollary}
    Let $(G^A, G^V)$ be a strongly connected cyclic GRM multigraph.
   Assume that $(r', \sigma')$ is not recurrent. If there is $\phi \in C_F^+$ such that $(r,\sigma) \legalseq{\cL}{\phi}  (r',\sigma')$, then
    $\phi$ is the smallest nonnegative vector such that 
    $(r,\sigma) \lineareq{\cL}{\phi}  (r',\sigma')$.
\end{corollary}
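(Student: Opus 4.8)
The plan is to read this corollary as the contrapositive of \thref{lem:recurrent}, fed through the description of routing vectors from \thref{cor:rotor_period}. First I would observe that the hypothesis $(r,\sigma) \legalseq{\cL}{\phi} (r',\sigma')$ gives in particular $(r,\sigma) \lineareq{\cL}{*} (r',\sigma')$, so that (as noted right after \thref{cor:rotor_period}, using strong connectedness of $G^V$) there is a well-defined smallest nonnegative routing vector $\phi_0 \in C_F^+$ from $(r,\sigma)$ to $(r',\sigma')$, and every nonnegative routing vector between these two configurations is of the form $\phi_0 + k\,F(p)$ for some integer $k \geq 0$, where $p$ is the primitive period vector of $G^V$ (recall $F(p) \in C_F^+$).

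Next I would write the given $\phi$ as $\phi = \phi_0 + k\,F(p)$ with $k \geq 0$ and argue by contradiction that $k = 0$. Suppose $k \geq 1$. Then $\phi - F(p) = \phi_0 + (k-1)F(p)$ is a sum of two nonnegative elements of $C_F$, hence $\phi \geq F(p)$. Now \thref{lem:recurrent} applies with $\phi_1 = \phi$: it concludes that $(r',\sigma')$ is recurrent, contradicting the hypothesis that $(r',\sigma')$ is not recurrent. Therefore $k = 0$, i.e.\ $\phi = \phi_0$ is the smallest nonnegative vector with $(r,\sigma) \lineareq{\cL}{\phi} (r',\sigma')$.

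Honestly I do not expect a genuine obstacle here: the whole content has been packed into \thref{lem:recurrent} and \thref{cor:rotor_period}, and the corollary is a two-line deduction. The only point requiring a hair of care is making sure the comparison $\phi \geq F(p)$ is legitimate, which comes down to $F(p) \geq 0$ and to the fact that the set of nonnegative routing vectors is exactly $\phi_0 + \N\cdot F(p)$ — both already established above. I would phrase the final write-up precisely as the contrapositive of \thref{lem:recurrent} to keep it short.
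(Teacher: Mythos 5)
Your proof is correct and is exactly the argument the paper intends: the corollary is stated there as the contrapositive of \thref{lem:recurrent}, combined with the observation after \thref{cor:rotor_period} that all nonnegative routing vectors are $\phi_0 + k\,F(p)$ with $k \ge 0$. Your write-up just makes explicit the step $k\ge 1 \Rightarrow \phi \ge F(p)$ that the paper leaves implicit.
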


It has been shown in~\cite{tothmeresz_rotor-routing_2022} that, in the classical routor-routing framework, if there is a legal routing sequence from $(\rho,\sigma)$ to $ (\rho',\sigma')$, then there is a legal routing sequence with  the smallest nonnegative routing vector, known as reduced routing vector. However, this does not hold in the more general context of cyclic GRM multigraphs, as illustrated in Figure~\ref{fig:rec_pas_legal}.

\begin{figure}[!htbp]
    \centering
    \begin{tikzpicture}[->, >=stealth, node distance=2cm]
        
    \node[draw] (v) at (0, 2) {$v$};
    \node[draw] (u) at (0, 0) {$u$};

    \draw[bend left] (u) to node[midway, right] {$c$} (v);
    \draw[bend left] (v) to node[midway, right] {$b$} (u);
    \draw[->, loop above, min distance=1cm] (v) to node[left] {$a$} (v);

    \draw[red, dashed, bend left] (0.2, 2.6) to node[midway,  right] {$f_a$}  (0.4, 1.2);
    \draw[red, dashed, bend right] (0.3, 0.6) to  (1.4, 1.8) to node[midway,  right] {$f_b$} (0.2, 2.8);
    \draw[red, dashed, bend right=30] (-0.45, 1) to[out=90, in=90] node[midway, above] {$f_c$} (-2.0, 1) to[out=90, in=90]    (-0.45, 1);

    \end{tikzpicture}
        
    \caption{
    This is an example to prove that for strongly connected cyclic GRM multigraphs, the smallest $\phi$
    such that $(r,\sigma) \legalseq{\cL}{\phi}  (r',\sigma')$
    is not always the smallest nonnegative vector 
    $\phi_0$ such that $(r,\sigma) \lineareq{\cL}{\phi_0}  (r',\sigma')$; however $\phi_1=\phi_0+F(p)$
    always works if $(r',\sigma')$ is recurrent.
    In the case above, the primitive period vector is $p = u+v$ and $\tail^A(F(u+v)) = a+b+c$. Let $\sigma = \sigma' = u$, $r=2a+c$ and $r' = a+b+c$. We have $(r', \sigma') = (r, \sigma) + \cL(f_a)$. The smallest nonnegative routing vector from $(r,\sigma)$ to $(r',\sigma')$ is $f_a$. However, there does not exist a legal sequence with this routing vector. Instead, a legal routing sequence from $(r,\sigma)$ to $(r',\sigma')$ exists with the routing vector $f_a+ (f_a+f_b+f_c)$, which corresponds to routing the sequence of faces $f_c,f_a, f_a$, and $f_b$ respectively. Note that $(r', \sigma')$ is a recurrent configuration.}
    \label{fig:rec_pas_legal}
\end{figure}
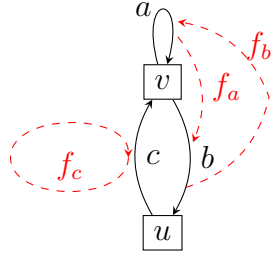

Based on these results, the following proposition provides a unique routing vector that can be used to determine whether a legal routing sequence exists.

\begin{proposition}
    \thlabel{prop:legal_routing_connected}
    Let $(G^A, G^V)$ be a strongly connected cyclic GRM multigraph. Let $(r, \sigma)$ and $(r', \sigma')$
    such that $(r',\sigma') \lineareq{\cL}{*}  (r,\sigma)$. Let $p \in C_V^+$ be the primitive period  vector and $\phi_0 \in C_F^+$ be the smallest nonnegative routing vector from $(r,\sigma)$ to $(r',\sigma')$. Let $\phi = \phi_0 + F(p)$ if $(r', \sigma')$ is recurrent, or $\phi = \phi_0$  otherwise.
    Then $(r,\sigma) \legalseq{\cL}{*}  (r',\sigma')$ if and only if $(r,\sigma) \legalseq{\cL}{\phi}  (r', \sigma')$.
\end{proposition}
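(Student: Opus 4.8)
The plan is to reduce the existence of \emph{some} legal routing sequence from $(r,\sigma)$ to $(r',\sigma')$ to the existence of a legal sequence with the single specific routing vector $\phi$ identified in the statement, distinguishing the recurrent and non-recurrent cases. The ``if'' direction is trivial, since a legal routing sequence with routing vector $\phi$ immediately witnesses $(r,\sigma) \legalseq{\cL}{*} (r',\sigma')$. So the content is the ``only if'' direction: assume $(r,\sigma) \legalseq{\cL}{\psi} (r',\sigma')$ for some $\psi \in C_F^+$, and produce a legal sequence with routing vector exactly $\phi$.

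First I would recall from \thref{cor:rotor_period} that in the strongly connected case every routing vector from $(r,\sigma)$ to $(r',\sigma')$ has the form $\phi_0 + k\, F(p)$ for some integer $k \geq 0$ (nonnegativity forcing $k \geq 0$ once we are at or above the minimal $\phi_0$), so in particular $\psi = \phi_0 + k\, F(p)$ for some $k \geq 0$. Now split into two cases. If $k \geq 1$, then $\psi \geq \phi_0 + F(p) \geq F(p)$, so \thref{lem:recurrent} applies directly: it tells us that $(r',\sigma')$ is recurrent, and that $(r,\sigma) \legalseq{\cL}{\phi_0 + F(p)} (r',\sigma')$, i.e. a legal sequence with routing vector $\phi = \phi_0 + F(p)$ exists, which is what we want in the recurrent case. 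If $k = 0$, then $\psi = \phi_0$, so $(r,\sigma) \legalseq{\cL}{\phi_0} (r',\sigma')$; this settles the case $\phi = \phi_0$ (the non-recurrent case) and also gives a valid witness when $(r',\sigma')$ happens to be recurrent but we only need $\phi = \phi_0 + F(p)$ there.

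The remaining gap is: in the recurrent case, we must exhibit a legal sequence with routing vector exactly $\phi_0 + F(p)$, but the $k=0$ subcase above only gave one with routing vector $\phi_0$. The fix is to observe that recurrence of $(r',\sigma')$ means $(r',\sigma') \legalseq{\cL}{F(p)} (r',\sigma')$, so we can concatenate: a legal sequence realizing $(r,\sigma) \legalseq{\cL}{\phi_0} (r',\sigma')$ followed by a legal sequence realizing $(r',\sigma') \legalseq{\cL}{F(p)} (r',\sigma')$ is a legal sequence from $(r,\sigma)$ to $(r',\sigma')$ with routing vector $\phi_0 + F(p) = \phi$. Conversely, when $k \geq 1$ we already obtained recurrence of $(r',\sigma')$ from \thref{lem:recurrent}, so the two cases are consistent: if any legal $\psi$ with $k\geq 1$ exists then $(r',\sigma')$ is recurrent and we use $\phi = \phi_0 + F(p)$; if the only legal routing vectors have $k = 0$ then either $(r',\sigma')$ is non-recurrent (use $\phi=\phi_0$) or it is recurrent (then concatenate a period loop to reach $\phi = \phi_0 + F(p)$).

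The main obstacle is making sure the case analysis on whether $(r',\sigma')$ is recurrent is exhaustive and matches the definition of $\phi$ in the statement: one must check that \thref{lem:recurrent} is applicable (its hypothesis is a legal routing with vector $\geq F(p)$, which is exactly what $k \geq 1$ provides) and that in the non-recurrent case no legal routing vector can have $k \geq 1$ (immediate from the contrapositive of \thref{lem:recurrent}, i.e. the corollary following it). Everything else is bookkeeping with the lattice structure $\phi_0 + \Z_{\geq 0} F(p)$ of nonnegative routing vectors and concatenation of legal sequences.
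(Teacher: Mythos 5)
Your argument is correct and is exactly the derivation the paper intends (the paper states this proposition without proof, as a consequence of \thref{cor:rotor_period}, \thref{lem:recurrent} and the definition of recurrence): the case $k\geq 1$ is handled by \thref{lem:recurrent}, and the case $k=0$ with $(r',\sigma')$ recurrent is correctly closed by concatenating a legal period loop $(r',\sigma')\legalseq{\cL}{F(p)}(r',\sigma')$ to reach routing vector $\phi_0+F(p)$. No gaps.
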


The previous result shows that we can determine in polynomial time whether 
 \[(r,\sigma) \legalseq{\cL}{*}  (r',\sigma')\]
 in a strongly connected cyclic GRM Multigraph. Indeed, one just has to:
 \begin{itemize}
     \item determine if $(r',\sigma')$ is recurrent by checking if 
      $(r',\sigma') \legalseq{\cL}{F(p)}  (r',\sigma')$
      by \thref{thm:charac_legal_rotor};
      \item compute $p$ then $\phi_0$; both can be done by using Smith Normal Form (see Appendix~\ref{sec:smith});
      \item compute the corresponding $\phi$ depending on the case;
      \item check if 
       $(r,\sigma) \legalseq{\cL}{\phi}  (r', \sigma')$
       once again by \thref{thm:charac_legal_rotor}.
 \end{itemize}

\subsubsection{General cyclic case}

We can now state the main result of this section, i.e. the existence of a polynomial algorithm for legal reachability in the general case of cyclic GRM. This extends Theorem~3.4 from T\'othmérész~\cite{tothmeresz_rotor-routing_2022} to the context of linear rotor-routing, and to the case where the starting arc configuration is not a rotor configuration.

\begin{theorem} \thlabel{thm:lr_cy_grm}
    The restriction of {\sc Legal Reachability in  GRM multigraph} 
    to cyclic GRM multigraphs is in $P$.
\end{theorem}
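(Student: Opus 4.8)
The plan is to reduce, in polynomial time, the decision of $(r,\sigma) \legalseq{\cL}{*} (r',\sigma')$ to checking a single, canonically built routing vector against the polynomial criterion of \thref{thm:charac_legal_rotor}. First, run the algorithm of \thref{lem:routing_vector}: if $(r,\sigma)$ and $(r',\sigma')$ are not related by $\lineareq{\cL}{*}$, output NO; otherwise we obtain in polynomial time a routing vector $\phi_0 \in C_F$. By \thref{prop:period_rotor}, every routing vector from $(r,\sigma)$ to $(r',\sigma')$ is of the form $\phi_0 + \sum_{i=1}^k n_i F(p_i)$ with $n_i \in \Z$, where $S_1,\dots,S_k$ are the leaf components of $G^V$ that are not sinks and $p_1,\dots,p_k$ the associated primitive period vectors of \thref{thm:kerdelta} (computable in polynomial time by solving $\Delta\sigma=0$ and extracting primitive integer solutions). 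The key structural fact is that the faces occurring with nonzero coefficient in $F(p_i)$ are exactly those over $S_i$, namely $\bigcup_{v\in S_i}F(v)$, and that the $S_i$ are pairwise disjoint; hence, writing $V_1 = V\setminus\bigcup_i S_i$, every routing vector from $(r,\sigma)$ to $(r',\sigma')$ has the same restriction to the faces $F(v)$ with $v\in V_1\setminus S$ — this ``transient part'' is forced to equal that of $\phi_0$.

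\emph{Constructing the candidate.} On the faces over each $S_i$, a routing vector reads $\phi_0|_{S_i}+n_iF(p_i)$; since $F(p_i)\ge 0$ has positive coefficient on every face over $S_i$, there is a unique smallest $n_i$ making this nonnegative, unless $\phi_0$ is already negative on some transient face, in which case no nonnegative routing vector exists and we output NO. Let $\phi_0^{S_i}$ be the resulting minimal nonnegative restriction to the faces over $S_i$. Using the simplified conditions from the proof of \thref{lem:recurrent} (equivalently those of \thref{thm:charac_legal_rotor} specialized to the full-period routing vector), decide in polynomial time whether $(r'|_{S_i},\sigma'|_{S_i})$ is recurrent in the strongly connected cyclic sub-GRM carried by $S_i$, and set $\phi^{S_i}:=\phi_0^{S_i}$ if it is not recurrent, or $\phi^{S_i}:=\phi_0^{S_i}+F(p_i)$ if it is. Assemble $\phi^\star\in C_F^+$ from the forced values of $\phi_0$ on the transient faces together with $\phi^{S_i}$ on the faces over each $S_i$. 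Finally test whether $(r,\sigma)\legalseq{\cL}{\phi^\star}(r',\sigma')$ via \thref{thm:charac_legal_rotor}, and output the result.

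\emph{Correctness.} If the test succeeds, $\phi^\star$ witnesses $(r,\sigma)\legalseq{\cL}{*}(r',\sigma')$. For the converse, suppose some routing vector $\phi=\phi_0+\sum_i n_iF(p_i)$ (with each $n_i$ at least its minimal value) admits a legal routing sequence. Since $S_i$ is a leaf component, no face over $S_i$ changes arc- or particle-coordinates outside $S_i$, whereas a face over $V_1$ changes only $V_1$-arcs, $V_1$-particles, and particles flowing into the $S_i$. Consequently, in any legal routing sequence for $\phi$ one may move all transient faces to the front without destroying legality: a transient face is unaffected by $S_i$-faces, and a deferred $S_i$-face can only have gained particles at its tail. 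This reduces the situation to: after the (forced) transient routing, a legal routing inside each $S_i$ with routing vector $\phi_0^{S_i}+n_iF(p_i)$. Applying \thref{prop:legal_routing_connected} to the strongly connected cyclic sub-GRM on $S_i$ — whose initial configuration is $(r|_{S_i},\sigma|_{S_i})$ augmented by the particles that entered $S_i$ during the transient phase, and whose minimal nonnegative routing vector is precisely $\phi_0^{S_i}$ — the canonical vector $\phi^{S_i}$ also admits a legal routing inside $S_i$. Reassembling the transient routing with these per-component routings yields a legal routing sequence with routing vector $\phi^\star$, so the test succeeds.

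\emph{Complexity and main obstacle.} Every ingredient — computing the strongly connected components and the non-sink leaf components of $G^V$, the primitive period vectors $p_i$, the routing vector $\phi_0$ via \thref{lem:routing_vector}, the recurrence tests, and the single call to the polynomial procedure of \thref{thm:charac_legal_rotor} — runs in time polynomial in the binary size of the input; hence {\sc Legal Reachability in GRM multigraph} restricted to cyclic GRM multigraphs is in $P$. The delicate part is the decoupling argument of the correctness step: justifying the reordering ``transient faces first'' (which rests on the leaf components emitting no arcs and on the greedy construction of \thref{lemma:CNS_routing2}), and matching the global minimal nonnegative routing vector, restricted to $S_i$, with the minimal nonnegative routing vector of the $S_i$-subproblem once the incoming particles have been accounted for, so that \thref{prop:legal_routing_connected} can be invoked verbatim.
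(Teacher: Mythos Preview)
Your proof is correct and takes essentially the same approach as the paper: decompose the graph into its non-sink leaf components and the remaining transient part, observe that the transient part of any routing vector is forced (uniqueness in the stopping piece), handle each strongly connected leaf component via \thref{prop:legal_routing_connected}, and reduce the whole question to a single application of \thref{thm:charac_legal_rotor} on the assembled canonical vector. The reordering argument you spell out (transient faces first, then per-component) is exactly the key step the paper uses, and your justification of it is sound.
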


\begin{proof} 
    Let $(G^A, G^V)$ be a cyclic GRM multigraph and
    let $(r, \sigma), (r', \sigma') \in C_A \times C_V$ such that $(r, \sigma) \lineareq{\cL}{*}(r', \sigma')$. We shall prove that we can construct in polynomial time a routing vector $\tilde \phi \in C_F$ such that $(r, \sigma) \legalseq{\cL}{*}(r', \sigma')$ if and only if  $(r, \sigma) \legalseq{\cL}{\tilde \phi}(r', \sigma')$.

    To do this, we consider the $k$ leaf  components
    $V_1, V_2, \cdots, V_k$ of
    $G^V$, and decompose faces and arcs in $(G^A, G^V)$ 
    as $F_1, F_2, \cdots F_k$ and $A_1, A_2, \cdots, A_k$,
    according to which component they belong to. Define $V_0, A_0$
    and $F_0$ for elements that do not belong to a leaf 
    component. This partitions $F, A$ and $V$ in $k+1$ subsets each.
    
    Thus, we obtain strongly connected cyclic GRM multigraphs $(G_1^A, G_1^V), \allowbreak \dots, \allowbreak (G_k^A, G_k^V)$. We also denote by $(G_0^A, G_0^V)$ the stopping cyclic GRM multigraph
    induced by $(G^A, G^V)$ on the arc set $A_0$. The faces
    of $(G_0^A, G_0^V)$ are $F_0$, its arcs are $A_0$, and its vertices
    are $V_0$ together with $\head^V(a)$ for every $a \in A_0$. If 
    $\head^V(a) \notin V_0$, then it is a sink of $(G_0^A, G_0^V)$.

    Assume that there is $\phi \in C_F^+$  such that   $(r, \sigma) \legalseq{\cL}{\phi}(r', \sigma')$.  For every $0 \leq i \leq k$, let $\phi^i \in C_F^+$ such that $\phi^i_f = \phi_f$ if $f \in F_i$ and $\phi^i_f = 0$ otherwise, so that $\phi=\sum_{i=0}^k \phi^i$.
    By definition of leaf components, any routing in a leaf component will not change configurations outside of that component. 
    Hence, there is a legal routing sequence that routes all faces of $\phi^0$ first, then all faces of $\phi^1$, $\phi^2$ and so on until $\phi^k$. 
    For $0 \leq i \leq k$, let $r_i$ (resp. $\sigma_i$) be 
    equal to $r'$ (resp. $\sigma'$) for arcs (resp. vertices) in $A_j$ (resp. $V_j$) for $j \leq i$, and to  $r$ (resp. $\sigma$) for arcs (resp. vertices) in $A_j$ (resp. $V_j$) for $k \geq j > i$.
    We note that  $(r_0, \sigma_0) = (r, \sigma) + \cL(\phi^0)$,
    and $(r_i, \sigma_i) = (r_{i-1}, \sigma_{i-1}) + \cL(\phi^i)$ for all $1\leq i \leq k$.
    By identifying the routing along $\phi^i$ for $1 \leq i \leq k $ to a routing in the strongly connected cyclic GRM $(G_i^{A_i}, G_i^{V_i})$, we obtain
    by \thref{prop:legal_routing_connected}
    a canonical routing vector $\tilde \phi^i \in C_F^+$,     
    computable in polynomial time,
    such that the support of  $\tilde \phi^i$ is in $F_i$, and
    $(r_{i-1}, \sigma_{i-1}) \legalseq{\cL}{\tilde \phi^i}(r_i, \sigma_i)$. Then, by  \thref{cor:rotor_period}, there is a unique routing vector from $(r, \sigma)$ to $(r_0, \sigma_0)$, namely $\phi^0$, since $(G_0^A, G_0^V)$ is stopping. Hence, there is a legal routing sequence with routing vector $\tilde \phi = \phi^0 + \tilde \phi^1 + \dots + \tilde \phi^k$.

    All in all, deciding whether there is a legal routing sequence from $(r, \sigma)$ to $(r', \sigma')$ is equivalent to checking if there is a legal routing sequence from $(r, \sigma)$ to $(r', \sigma')$ with routing vector $\tilde \phi$. This can be checked in polynomial time according to \thref{thm:charac_legal_rotor}.
\end{proof}



\bibliographystyle{abbrv}
\bibliography{rotor_walk}


\appendix

\section{Additional examples}

\subsection{Standard rotor-routing and flows}
\label{sec:flow_example}

\begin{figure}[!htbp]
        \centering

\begin{tikzpicture}[->, >=stealth, node distance=2cm, scale=0.8]

        \draw [ thick,->,>=stealth](25:4) arc (0:330:0.4cm);

        \node[draw] (v2) at (330:3) {$v_2 : 3$};
        \node[draw] (s1) at (310:6) {$s_1 : 0$};
        \node[draw] (v3) at (90:3) {$v_3 : 6$};
        \node[draw] (v4) at (210:3) {$v_4$ : 3};
        \node[draw] (s0) at (230:6) {$s_0 : 0$};

        \draw[bend right=15] (v2) -- (s0) node[midway, below] {$a_{2,0}$};
        \draw (v2) -- (s1) node[midway, below] {$a_{2,1}$};
        \draw (v4) -- (s0) node[midway, right] {$a_{4,0}$};
        \draw[bend right=15] (v4) -- (s1) node[midway, right] {$a_{4,1}$};
        \draw[bend right=15] (v2) to node[midway, right] {$a_{2,3}$} (v3);
        \draw[bend right=15] (v3) to node[midway, below] {$a_{3,2}$} (v2);
        \draw[bend right=15, line width=1.5pt] (v3) to node[midway, left] {$a_{3,4}$} (v4);
        \draw[bend right=15] (v4) to node[midway, right] {$a_{4,3}$} (v3);
        \draw[bend right=15, line width=1.5pt] (v4) to node[midway, below] {$a_{4,2}$} (v2);
        \draw[bend right=15, line width=1.5pt] (v2) to node[midway, above] {$a_{2,4}$} (v4);

    \draw[->] (4,0) -- (7,0);
    \node at (5.5,-.5) {\tiny maximal rotor walk};

        \draw [ thick,->,>=stealth](25:4) arc (0:330:0.4cm);

        \begin{scope}[shift={(10.5,0)}]

        \node[draw] (v2) at (330:3) {$v_2 : 0$};
        \node[draw] (s1) at (310:6) {$s_1 : 6$};
        \node[draw] (v3) at (90:3) {$v_3 : 0$};
        \node[draw] (v4) at (210:3) {$v_4 : 0$};
        \node[draw] (s0) at (230:6) {$s_0 : 6$};

        \draw[bend right=15] (v2) -- (s0) node[midway, below] {$a_{2,0}$};
        \draw (v2) -- (s1) node[midway, below] {$a_{2,1}$};
        \draw (v4) -- (s0) node[midway, right] {$a_{4,0}$};
        \draw[bend right=15] (v4) -- (s1) node[midway, right] {$a_{4,1}$};
        \draw[bend right=15, line width=1.5pt] (v2) to node[midway, right] {$a_{2,3}$} (v3);
        \draw[bend right=15, line width=1.5pt] (v3) to node[midway, below] {$a_{3,2}$} (v2);
        \draw[bend right=15] (v3) to node[midway, left] {$a_{3,4}$} (v4);
        \draw[bend right=15] (v4) to node[midway, right] {$a_{4,3}$} (v3);
        \draw[bend right=15, line width=1.5pt] (v4) to node[midway, below] {$a_{4,2}$} (v2);
        \draw[bend right=15] (v2) to node[midway, above] {$a_{2,4}$} (v4);
        \end{scope}
    
    \end{tikzpicture}

    \caption{Initial configurations $(\rho,\sigma)$ and result of a maximal rotor walk $(\rho',\sigma')$ in $G_2$.}
    \label{fig:appendix_flow_example}
\end{figure}
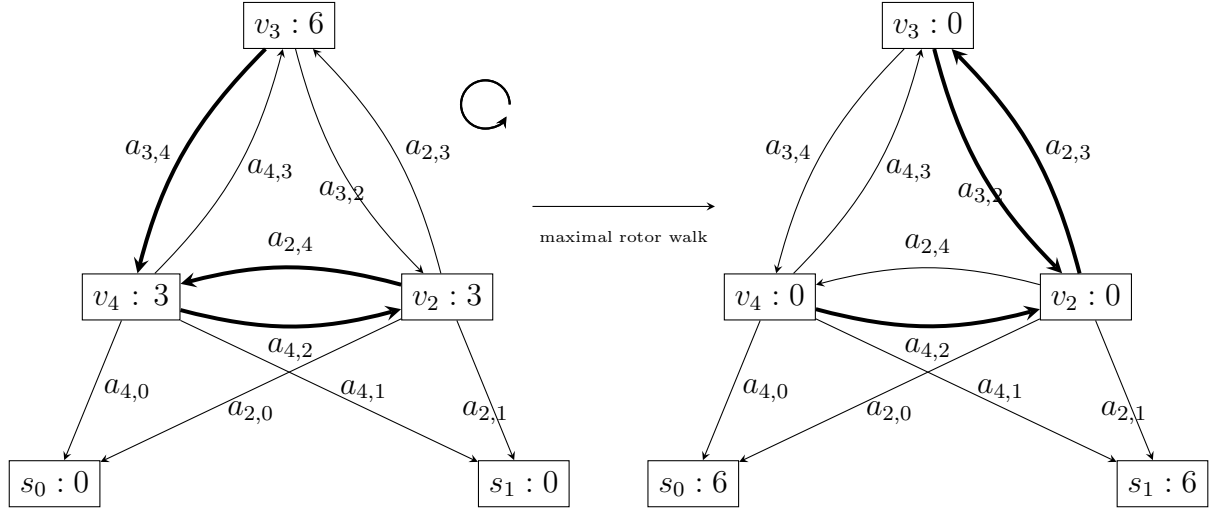

A maximal rotor walk is processed on the graph $G_2$ of Fig. \ref{fig:exempleG2}.
with starting configurations $(\rho,\sigma)$ where $\rho(v_2)=a_{2,4}$, $\rho(v_3)=a_{3,4}$, $\rho(v_4)=a_{4,2}$ together with $\sigma = 3v_2 +6v_3 + 3v_4$. The walk ends in $(\rho',\sigma')$, with $\rho'(v_2)=a_{2,3}$, $\rho'(v_3)=a_{3,2}$, $\rho'(v_4)=a_{4,2}$ together with $\sigma'=6 s_0 + 6 s_1$, as depicted
on Fig. \ref{fig:appendix_flow_example}.

Let us state the equations satisfied by flows for $(\rho, \sigma, \sigma')$:

\begin{align*}
    f(a_{4,0}) + f(a_{2,0}) &= 6 & \text{(flow at } s_0 \text{)}\\
    f(a_{4,1}) + f(a_{2,1}) &= 6 &  \text{(flow at } s_1 \text{)}\\
    f(a_{3,2}) + f(a_{4,2}) + 3 &= f(a_{2,0}) + f(a_{2,1}) + f(a_{2,3}) + f(a_{2,4}) & \text{(flow at } v_2 \text{)}\\
    f(a_{2,3}) + f(a_{4,3}) + 6 &= f(a_{3,2}) + f(a_{3,4}) &  \text{(flow at } v_3 \text{)}\\
    f(a_{2,4}) + f(a_{3,4}) + 3 &= f(a_{4,0}) + f(a_{4,1}) + f(a_{4,2}) + f(a_{4,3}) &  \text{(flow at } v_4 \text{)}\\
    f(a_{2,4}) \geq f(a_{2,0}) & \geq f(a_{2,1}) \geq f(a_{2,3}) \geq  f(a_{2,4}) -1 & \ \text{(rotor at } v_2 \text{)}\\
    f(a_{3,4}) \geq f(a_{3,2}) & \geq f(a_{3,4})-1 & \ \text{(rotor at } v_3 \text{)}\\
    f(a_{4,2}) \geq f(a_{4,3}) & \geq f(a_{4,0}) \geq f(a_{4,1}) \geq  f(a_{4,2}) -1 & \ \text{(rotor at } v_4 \text{)}\\
\end{align*}

We give the values of the run for $(\rho,\sigma)$ (left of Fig. \ref{fig:flow_run_G2} ) and a flow (which is not the run) for  $(\rho,\sigma,\sigma')$ (right of Fig. \ref{fig:flow_run_G2}).

\begin{figure}[!htbp]
        \centering

\begin{tikzpicture}[->, >=stealth, node distance=2cm, scale=0.8]

        \draw [ thick,->,>=stealth](25:4) arc (0:330:0.4cm);

        \node[draw] (v2) at (330:3) {$3 \rightarrow 0$};
        \node[draw] (s1) at (310:6) {$0 \rightarrow 6$};
        \node[draw] (v3) at (90:3) {$6 \rightarrow 0$};
        \node[draw] (v4) at (210:3) {$3 \rightarrow 0$};
        \node[draw] (s0) at (230:6) {$0 \rightarrow 6$};

        \draw[bend right=15] (v2) -- (s0) node[midway, below] {$3$};
        \draw[line width=1.5pt, dashed] (v2) -- (s1) node[midway, below] {$3$};
        \draw (v4) -- (s0) node[midway, right] {$3$};
        \draw[bend right=15,line width=1.5pt, dashed] (v4) -- (s1) node[midway, right] {$3$};
        \draw[bend right=15] (v2) to node[midway, right] {$2$} (v3);
        \draw[bend right=15] (v3) to node[midway, below] {$5$} (v2);
        \draw[bend right=15,line width=1.5pt, dashed] (v3) to node[midway, left] {$6$} (v4);
        \draw[bend right=15] (v4) to node[midway, right] {$3$} (v3);
        \draw[bend right=15] (v4) to node[midway, below] {$3$} (v2);
        \draw[bend right=15] (v2) to node[midway, above] {$3$} (v4);

        \draw [ thick,->,>=stealth](25:4) arc (0:330:0.4cm);

        \begin{scope}[shift={(10.2,0)}]

        \draw [ thick,->,>=stealth](25:4) arc (0:330:0.4cm);

        \node[draw] (v2) at (330:3) {$3 \rightarrow 0$};
        \node[draw] (s1) at (310:6) {$0 \rightarrow 6$};
        \node[draw] (v3) at (90:3) {$6 \rightarrow 0$};
        \node[draw] (v4) at (210:3) {$3 \rightarrow 0$};
        \node[draw] (s0) at (230:6) {$0 \rightarrow 6$};

        \draw[bend right=15] (v2) -- (s0) node[midway, below] {$3$};
        \draw (v2) -- (s1) node[midway, below] {$3$};
        \draw (v4) -- (s0) node[midway, right] {$3$};
        \draw[bend right=15] (v4) -- (s1) node[midway, right] {$3$};
        \draw[bend right=15] (v2) to node[midway, right] {$3$} (v3);
        \draw[bend right=15,line width=1.5pt, dashed] (v3) to node[midway, below] {$6$} (v2);
        \draw[bend right=15] (v3) to node[midway, left] {$6$} (v4);
        \draw[bend right=15] (v4) to node[midway, right] {$3$} (v3);
        \draw[bend right=15,line width=1.5pt, dashed] (v4) to node[midway, below] {$4$} (v2);
        \draw[bend right=15,line width=1.5pt, dashed] (v2) to node[midway, above] {$4$} (v4);

        \draw [ thick,->,>=stealth](25:4) arc (0:330:0.4cm);

        \end{scope}

    \end{tikzpicture}

    \caption{On the left, the values on each arc of the run corresponding to the maximal rotor walk of Fig. \ref{fig:appendix_flow_example}. The dashed arcs correspond
    to the last routed particle on every vertex and
    contain no cycles, which characterizes the run by \thref{thm:run_carac}.
    On the right, the values of a flow, which also certifies final configuration $\sigma'$, but is not the run.}
        \label{fig:flow_run_G2}
\end{figure}
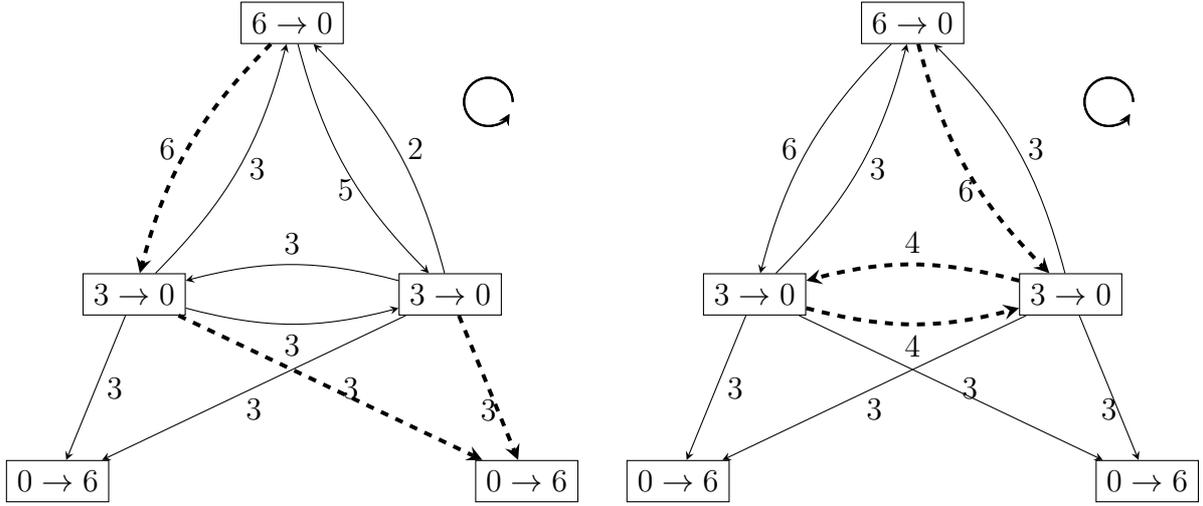

\subsection{Computing routing vectors}
\label{sec:matrix}

We consider here an example of linear routing in a cyclic GRM multigraph, and show how to compute a routing vector. Namely, we consider the cyclic GRM version of the standard rotor-routing  
between $(\rho,\sigma)$ and $(\rho',\sigma')$ in the cyclic GRM Multigraph corresponding to $G_2$, as in \ref{sec:flow_example}. We know that the solution is unique by \thref{cor:rotor_period} and corresponds to the flow given in Fig.\ref{fig:flow_run_G2} (left). For every arc $a_{i,j}$, we denote by $f_{i,j}$ the unique face that has $a_{i,j}$ as a tail.

Let us form the matrix $L$ of $\cL$ in the canonical basis of
$C_F$ and $C_A \times C_V$.
The first 10 lines correspond to $\partial^A$ and the last 5 to $\partial^V \circ \tail$. We write only nonzero coefficients.

\[ L = \quad 
\begin{pNiceArray}{cccc|cc|cccc}[first-col,first-row]
    & f_{2,0}  & f_{2,1}  & f_{2,3} & f_{2,4}  & f_{3,2}  & f_{3,4} & 
      f_{4,0}  & f_{4,1}  & f_{4,2} & f_{4,3}\\
a_{2,0} &-1 &   &   & 1 &   &   &   &   &  & \\
a_{2,1} & 1 &-1 &   &   &   &   &   &   &  & \\
a_{2,3} &   & 1 &-1 &   &   &   &   &   &  & \\
a_{2,4} &   &   & 1 &-1 &   &   &   &   &  &  \\
\hline
a_{3,2} &   &   &   &   &-1 & 1 &   &   &  & \\
a_{3,4} &   &   &   &   & 1 &-1 &   &   &  & \\
\hline
a_{4,0} &   &   &   &   &   &   &-1 &   &  & 1\\
a_{4,1} &   &   &   &   &   &   & 1 &-1 &  &  \\
a_{4,2} &   &   &   &   &   &   &   & 1 &-1 & \\
a_{4,3} &   &   &   &   &   &   &   &   & 1 & -1 \\
\hline
s_0 &  1&   &   &   &   &   & 1 &   &  & \\
s_1 &   & 1 &   &   &   &   &   & 1 &  & \\
v_2 & -1&-1 &-1 &-1 & 1 &   &   &   &1 & \\
v_3 &   &   &  1&   &-1 &-1 &  &   &  & 1\\
v_4 &   &   &   & 1 &   & 1 &-1 &-1 &-1 & -1\\
\end{pNiceArray}
\]

We want to solve $L\cdot\phi=(\rho'-\rho,\sigma'-\sigma)$
with
$$(\rho'-\rho,\sigma'-\sigma) = (0,0,1,-1|1,-1|0,0,0,0|6,6,-3,-6,-3)^T.$$
Since the solution is unique, the system can be solved in $\mathbb Q$, and it can be checked that
$$\phi =(3,3,2,3|5,6|3,3,3,3)^T$$
is the unique solution to this system. Note that
this routing vector, in the context of cyclic GRM multigraphs,  corresponds exactly to the run obtained in Fig. \ref{fig:flow_run_G2} (left) in the context of standard rotor-routing.

\section{Smith normal form}

\label{sec:smith}

Integer linear systems can be solved by Gaussian elimination, but the so-called \emph{Smith normal form} is a useful tool to understand the results. We use the following~\cite{newman_smith_1997}: 

\begin{proposition}
    Let $A$ be  a $n \times m$ integer matrix. There exist invertible matrices $S \in GL_n(\Z), T \in GL_m(\Z)$, such that the product $SAT$ is of the form

    \[
\begin{pmatrix}
\alpha_1 & 0 & 0 & \cdots & 0 & \cdots  &  0 \\
0 & \alpha_2 & 0 \\
0 & 0 & \ddots &  & \vdots&  & \vdots& \\
\vdots &  &  & \alpha_r \\ 
0 &  & \cdots & & 0 & \cdots & 0 \\

\vdots &  &  & & \vdots &  & \vdots \\

0 &  & \cdots & & 0 & \cdots & 0 \\
\end{pmatrix}
\]
where diagonal elements satisfy $\alpha_i \geq 1$ for $1 \leq i \leq r$ and $\alpha_i$ is a divisor of $\alpha_{i+1}$ for $1 \leq i \leq r-1$. Moreover $r$
is the rank of $A$, and all $\alpha_i$ are uniquely determined by these properties, since
$$\alpha_i = \left| \frac{d_i(A)}{d_{i-1}(A)} \right|$$
where $d_i(A)$ is the gcd of all $i \times i$ minors of $A$, and $d_0(A)=1$.
\end{proposition}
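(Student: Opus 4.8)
The plan is to prove existence by an explicit reduction using integer row and column operations, and then to pin down the invariant factors $\alpha_i$ (which simultaneously gives uniqueness and the gcd formula) via the quantities $d_i(A)$. For existence, the key observation is that multiplication of $A$ on the left by an element of $GL_n(\Z)$ and on the right by an element of $GL_m(\Z)$ realizes, among other things, the three families of elementary operations over $\Z$: swapping two rows or two columns, multiplying a row or column by $-1$, and adding an integer multiple of one row (column) to another. Starting from a nonzero $A$, I would pick an entry of minimal absolute value, move it to position $(1,1)$ by swaps, and call it the pivot. If some entry $a_{1j}$ of the first row is not a multiple of $a_{11}$, I write $a_{1j}=q\,a_{11}+\rho$ with $0\le|\rho|<|a_{11}|$ (so $\rho\neq 0$, since $a_{11}\nmid a_{1j}$), subtract $q$ times column $1$ from column $j$, and move the smaller entry $\rho$ to $(1,1)$; symmetrically for the first column. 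Because the absolute value of the pivot is a strictly decreasing positive integer throughout this loop, after finitely many steps $a_{11}$ divides every entry in the first row and first column, and subtracting suitable multiples of column $1$ (resp.\ row $1$) clears them, leaving a block matrix $\begin{pmatrix} a_{11} & 0\\ 0 & A'\end{pmatrix}$.

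Next comes the divisibility-enforcing step: if $a_{11}$ fails to divide some entry of $A'$, I add the row of $A'$ containing that entry to the first row of the whole matrix; then the first row has $a_{11}$ in column $1$ and a non-multiple of $a_{11}$ elsewhere, so rerunning the previous reduction produces a pivot of absolute value strictly less than $|a_{11}|$. As this can happen only finitely often (again by strict decrease of a positive integer), we eventually reach the block form with $a_{11}$ dividing all entries of $A'$. Setting $\alpha_1=a_{11}$ and recursing on $A'$ yields the diagonal form; since the recursive pivot $\alpha_2$ is an entry of $A'$ and all entries of $A'$ are divisible by $\alpha_1$, we get $\alpha_1\mid\alpha_2$, and by induction the chain $\alpha_1\mid\alpha_2\mid\cdots\mid\alpha_r$ with all $\alpha_i\ge 1$. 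The number $r$ of nonzero diagonal entries equals $\mathrm{rank}(A)$ because rank is unchanged by multiplication by invertible matrices and a diagonal matrix has rank equal to its number of nonzero diagonal entries.

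For uniqueness and the formula, let $d_i(A)$ be the nonnegative gcd of all $i\times i$ minors of $A$, with $d_0(A)=1$ and $d_i(A)=0$ when every such minor vanishes. First I would show $d_i$ is invariant under left and right multiplication by invertible integer matrices: by the Cauchy--Binet formula, every $i\times i$ minor of $SA$ is a $\Z$-linear combination of $i\times i$ minors of $A$, so $d_i(A)\mid d_i(SA)$; applying this to $A=S^{-1}(SA)$ gives the reverse divisibility, hence $d_i(SA)=d_i(A)$, and the same argument (or transposition) handles right multiplication. Then for $D=\mathrm{diag}(\alpha_1,\dots,\alpha_r,0,\dots,0)$ with $\alpha_1\mid\cdots\mid\alpha_r$, the only possibly nonzero $i\times i$ minors are the products $\alpha_{j_1}\cdots\alpha_{j_i}$ over $j_1<\cdots<j_i\le r$, and the divisibility chain forces the smallest of them, $\alpha_1\cdots\alpha_i$, to divide all the others; hence $d_i(D)=\alpha_1\cdots\alpha_i$ for $i\le r$ and $d_i(D)=0$ for $i>r$. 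Combining the two facts, $d_i(A)=\alpha_1\cdots\alpha_i$ for $i\le r$ and $0$ for $i>r$, so $r$ is the largest index with $d_i(A)\neq 0$ (the rank) and $\alpha_i=d_i(A)/d_{i-1}(A)$; since the $d_i(A)$ depend only on $A$, the $\alpha_i$ are uniquely determined, and the displayed formula (with absolute value absorbing the sign convention) follows.

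I expect the main obstacle to be the termination argument for the existence half: one must organize the clearing of the first row and column together with the divisibility-enforcing restarts so that a single well-founded measure — the minimal absolute value among the relevant entries — strictly decreases across all the nested loops and restarts, and verifying that this measure genuinely controls the whole procedure is the delicate point. By contrast, once the Cauchy--Binet invariance of the $d_i$ is in hand, the uniqueness half is essentially bookkeeping.
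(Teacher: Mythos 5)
Your proof is correct. Note, however, that the paper does not prove this proposition at all: it is stated as a known result and attributed to the literature (Newman's work on the Smith normal form), so there is no in-paper argument to compare yours against. What you have written is the classical proof of that cited result, and it holds up: the existence half is the standard Euclidean reduction by elementary integer row and column operations (all realizable by left and right multiplication by elements of $GL_n(\Z)$ and $GL_m(\Z)$), with the absolute value of the pivot serving as a well-founded measure that strictly decreases across both the row/column-clearing loop and the divisibility-enforcing restarts; the uniqueness half correctly establishes that $d_i$ is invariant under multiplication by invertible integer matrices via Cauchy--Binet, computes $d_i(D)=\alpha_1\cdots\alpha_i$ for the diagonal form using the divisibility chain, and deduces both the formula $\alpha_i = d_i(A)/d_{i-1}(A)$ and the identification of $r$ with the rank. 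The one point worth making explicit, which you gesture at but do not state, is that the final $\alpha_i\ge 1$ normalization uses the sign-flip operation (multiplying a row by $-1$) from your list of elementary operations; this is a one-line addendum, not a gap.
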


Coefficients $\alpha_1, \alpha_2, \dots, \alpha_r$ are called the {\bf invariant factors} of $A$.
We use this result to compute the solution of a system of linear diophantine equations.

\begin{lemma}
    Let $A$ be an $n \times m$ integral matrix.  Let $D = SAT$ be the normal Smith form of $A$ and $\alpha_1, \alpha_2, \cdots, \alpha_r$ be the invariant factors of $A$. Let $x$ be an  $m \times 1$ integral vector and $b$ an  $n \times 1$ integral vector, the system of linear equations $Ax = b$ admits an integral solution if and only if 
    $$
    c_i = 
    \left\{
    \begin{array}{ll}
        0 \mod(\alpha_i) &  \text{ if } 1 \leq i \leq r  \\
        0 & \text{ if } r < i \leq n  \\
    \end{array}
    \right.
    $$
    where $c = Sb$.
    In this case, the set of solutions is
    obtained by all the vectors $x=Ty$
    with $y \in \Z^m$ of dimension $(m,1)$,
    where 
    $$y_i = c_i/\alpha_i   \text{ if } 1 \leq i \leq r.$$

    
\end{lemma}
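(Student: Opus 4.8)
The plan is to transport the linear system $Ax = b$ through the change of basis given by the Smith normal form and reduce it to a trivially solvable diagonal system. First I would set $D = SAT$ with $S \in GL_n(\Z)$ and $T \in GL_m(\Z)$, so that $D$ is diagonal with entries $\alpha_1, \dots, \alpha_r$ (and zeros afterwards). Since $S$ is invertible over $\Z$, the equation $Ax = b$ is equivalent to $SAx = Sb = c$. Now substitute $x = Ty$; this is a bijective reparametrization of $\Z^m$ because $T \in GL_m(\Z)$, so $x$ ranges over all integer vectors exactly when $y$ does, and integrality of $x$ is equivalent to integrality of $y$. Under this substitution the system becomes $SAT y = c$, i.e. $D y = c$.

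Next I would read off the diagonal system $Dy = c$ coordinate by coordinate. For $1 \le i \le r$ the $i$-th equation is $\alpha_i y_i = c_i$, which has an integer solution $y_i$ if and only if $\alpha_i \mid c_i$, and in that case the solution is the unique value $y_i = c_i/\alpha_i$. For $r < i \le n$ the $i$-th equation is $0 = c_i$, which forces $c_i = 0$. The coordinates $y_i$ for $r < i \le m$ (if any) do not appear in any equation and are therefore free integer parameters. This gives precisely the stated solvability criterion, and the stated description of the solution set: all vectors $x = Ty$ with $y_i = c_i/\alpha_i$ for $1 \le i \le r$ and $y_i$ arbitrary for $i > r$.

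Finally I would check the two directions explicitly for cleanliness. If the divisibility/vanishing conditions on $c$ hold, the vector $y$ defined above satisfies $Dy = c$, hence $x = Ty$ satisfies $Ax = S^{-1}Dy = S^{-1}c = b$ — so a solution exists. Conversely, if $Ax = b$ has an integer solution, then $y = T^{-1}x$ is an integer vector with $Dy = c$, and comparing coordinates yields $\alpha_i y_i = c_i$ (forcing $\alpha_i \mid c_i$) for $i \le r$ and $c_i = 0$ for $i > r$. I do not anticipate a genuine obstacle here; the only point requiring slight care is to be explicit that the substitution $x = Ty$ is an isomorphism of $\Z^m$ (so nothing is lost or gained by passing from $x$ to $y$), and that $S$ being invertible over $\Z$ — not merely over $\Q$ — is what makes $Ax = b \iff SAx = Sb$ an equivalence of \emph{integral} systems rather than just of rational ones.
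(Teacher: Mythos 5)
Your proof is correct and follows essentially the same route as the paper's: conjugate the system by $S$ and $T$ to reduce to the diagonal system $Dy=c$, read off the divisibility and vanishing conditions coordinatewise, and recover all solutions as $x=Ty$ with the coordinates $y_i$ for $i>r$ free. (Your indexing of the free coordinates up to $m$ is in fact slightly more careful than the paper's, which writes $y_{r+1},\dots,y_n$.)
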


\begin{proof}
    We have the following equivalences:

    $$
    \begin{array}{ll}
    & Ax = b\\
    \Leftrightarrow & SAT(T^{-1}x) = Sb\\
    \Leftrightarrow & Dy = c \text{ with } y=T^{-1}x \text{ and } c=Sb\\
    \end{array}
    $$
    In particular, since $T$ has an integer inverse, there is a solution to $Ax=b$ if and only if there is one to $Dy=c$. 
    There is a solution to the last equation if and only if $c_i = 0$ for $r < i \leq n$ and $c_i = 0 \mod(\alpha_i)$ for $1 \leq i \leq r$. In this case, a particular solution is $\bar y = (c_1/\alpha_1, \dots, c_r / \alpha_r, 0, \dots, 0)^\intercal$, and one obtains a  solution to $Ax=b$ by choosing $\bar x = T \bar y$. 
    The other solutions are of the form $y = (c_1/\alpha_1, \dots, c_r / \alpha_r, y_{r+1}, \dots, y_n)^\intercal$ for all choices of $y_{r+1}, \dots, y_n$, and $x$ is obtained by $Ty$.
\end{proof}

\end{document}